\newtheorem{theorem}{Theorem}
\newtheorem{lemma}[theorem]{Lemma}
\newtheorem{corollary}{Corollary}
\newtheorem{remark}{Remark}
\theoremstyle{definition}
\newtheorem{definition}{Definition}
\newcommand{\X}{\mathcal{X}}
\newcommand{\Y}{\mathcal{Y}}
\newcommand{\Z}{\mathbb{Z}}
\renewcommand{\P}{\mathcal{P}}
\newcommand{\Q}{\mathcal{Q}}
\newcommand{\R}{\mathbb{R}}
\newcommand{\K}{\mathcal{K}}
\newcommand{\C}{\mathcal{C}}
\newcommand{\D}{\mathcal{D}}
\renewcommand{\v}{\boldsymbol{v}}
\newcommand{\wt}{\textrm{wt}}
\renewcommand{\u}{\boldsymbol{u}}
\newcommand{\x}{\boldsymbol{x}}
\renewcommand{\L}{\mathcal{L}}
\newcommand{\y}{\boldsymbol{y}}
\newcommand{\z}{\boldsymbol{z}}
\newcommand{\0}{\boldsymbol{0}}
\renewcommand{\epsilon}{\varepsilon}
\newcommand{\E}{\mathbf{E}}
\newcommand{\s}{\mathrm{supp}}
\newcommand{\rad}{\mathrm{rad}}
\renewcommand{\mid}{\,\ifnum\currentgrouptype=16 \middle\fi|\,}
\title{Codes for the Z-channel}
\author{
   Nikita~Polyanskii, 
   and~Yihan~Zhang
   \thanks{
      Nikita~Polyanskii's research was conducted in part during October 2020 - December 2021 with the Technical University of Munich and the Skolkovo Institute of Science and Technology. His work was supported by the German Research Foundation (Deutsche Forschungsgemeinschaft, DFG) under Grant No. WA3907/1-1 and the Russian Foundation for Basic Research (RFBR) under Grant No.~\mbox{20-01-00559}. 

      Yihan Zhang is supported by funding from the European Union’s Horizon 2020 research and innovation programme under grant agreement No 682203-ERC-[Inf-Speed-Tradeoff].

      This work was presented in parts at the 2022 IEEE International Symposium on Information Theory, Espoo, Finland. 
   }%
   \thanks{
      Nikita Polyanskii is with the IOTA Foundation.
      (e-mail: \href{mailto:nikitapolyansky@gmail.com}{nikitapolyansky@gmail.com}).
   }%
   \thanks{
      Yihan Zhang is with the Institute of Science and Technology Austria.
      (e-mail: \href{mailto:zephyr.z798@gmail.com}{zephyr.z798@gmail.com}).
   }%
}
\begin{document}

\maketitle

\begin{abstract}
   This paper is a collection of results on combinatorial properties of codes for the \emph{Z-channel}. 
   A Z-channel with error fraction $\tau$ takes as input a length-$n$ binary codeword and injects in an adversarial manner up to $n\tau$ \emph{asymmetric} errors, i.e., errors that only zero out bits but do not flip $0$'s to $1$'s. 
   It is known that the largest $(L-1)$-list-decodable code for the Z-channel with error fraction $\tau$ has {exponential size (in $n$)} if $\tau$ is less than a critical value that we call the \emph{{$(L-1)$-list-decoding Plotkin point}} and has constant size if $\tau$ is larger than the threshold. 
   The $(L-1)$-list-decoding Plotkin point is known to be $ L^{-\frac{1}{L-1}} - L^{-\frac{L}{L-1}} $, which equals $1/4$ for unique-decoding with $ L-1=1 $. 
   In this paper, we derive various results for the size of the largest codes above and below the list-decoding Plotkin point. 
   In particular, we show that the largest $(L-1)$-list-decodable code $\epsilon$-above the Plotkin point, {for any given sufficiently small positive constant $ \epsilon>0 $,} has size $\Theta_L(\epsilon^{-3/2})$ for any $L-1\ge1$. 
   We also devise upper and lower bounds on the exponential size of codes below the list-decoding Plotkin point. 
\end{abstract}

\newpage
\tableofcontents
\newpage


\section{Introduction}
In coding theory, the Z-channel is used to model some asymmetric data storage and transmission systems. In this binary-input binary-output channel, the symbol $0$ is always transmitted correctly, whereas the transmitted symbol $1$ can be received as $0$. 

In this paper, we consider the combinatorial setting where the encoder transmits $n$ symbols, and the maximum number of errors inflicted by an adversary is proportional to $n$. 
{The error model under consideration is also known as the adversarial/Hamming/zero error model \cite{hamming1950error,shannon1956zero} which is standard in coding theory. 
It stands in contrast to the stochastic setting in Shannon theory \cite{shannon1948mathematical} where the error model is assumed to be average-case and a vanishing probability of decoding error is desired. }
For a given word $\x\in\{0,1\}^n$, we define the Z-ball centered at $\x$ with radius $\tau n$ {(where $ \tau\in[0,1] $ and $ n\in\Z_{\ge1} $)}\footnote{{We use $ \Z_{\ge a} $ to denote the set of integers at least $a$. }}\footnote{{Formally, $ n\tau $ needs to be an integer. 
Otherwise, in the proofs of upper (resp.\ lower) bounds on the maximal code size, one can take $\tau'$ such that $ n\tau' $ is the largest (resp.\ smallest) integer less (resp.\ bigger) than $n\tau$. 
Asymptotically in $n\to\infty$, the effect of this quantization on our bounds is negligible. 
Therefore, we at times drop ceiling/floor for notational convenience. 
Similarly, other quantities such as the Hamming weight $nw$ that will show up later in the paper can be suitably rounded up or down and the same results continue to hold.} } as a set of all possible words that can be transmitted over the Z-channel with at most $\tau n$ errors such that $\x$ is received. Given $\tau$ and $n$, the main goal for $(L-1)$-list-decoding is to construct a code $\mathcal{C}\subseteq\{0,1\}^n$ such that for any $\x\in\{0,1\}^n$, the Z-ball centered at $\x$ contains at most $L-1$ codewords from $\C$. For $L=2$, we say that  $\C$ is a uniquely-decodable code tolerating a fraction $\tau$ of  (asymmetric) errors.  For $L\ge 3$,  $\C$ is referred to as an $(L-1)$-list-decodable code for the Z-channel with list-decoding radius $\tau$. Finding fundamental limits of error-correcting codes is one of the major problems in coding theory. Uniquely-decodable codes for asymmetric errors  have been  a subject for extensive studies in the numerous papers~\cite{bassalygo1965new,borden1983low,kim1959single,varshamov1965theory,klove1981upper,klove1981error,fu2003new,bose1993asymmetric,zhang2019construction,blaum-book-asymm-code,tallini2018onsome,Al-Bassam1994asymmetric}. 
{Codes for Z-channels with feedback are studied in \cite{tallini2008feedback,tallini2009correction,deppe2020,lebedev2020}. 
Bounds and constructions of codes correcting a single asymmetric error are obtained in \cite{Constantin1979asymmetric,Bose2000systematic}. }
Up to our best knowledge, there are only two papers~\cite{lebedev2020,zhang2020generalized} in the literature that discuss properties of list-decodable codes for the Z-channel. 

First, we recall some results concerning the unique-decoding case. It is known~\cite{bassalygo1965new,borden1983low} that the rate of optimal codes tolerating a fraction $\tau$ of asymmetric errors is asymptotically equal to the rate of optimal codes correcting a fraction $\tau$ of symmetric errors~\footnote{Hereafter, errors are called symmetric if any transmitted symbol from the alphabet $\{0,1\}$ can be bit-flipped.}. Hence there exist exponential-sized uniquely-decodable codes for any fraction of errors $\tau<1/4$. The Plotkin bound~\cite{plotkin1960binary} {implies} that the size of a code capable of correcting a fraction $\tau=1/4+\epsilon$ of symmetric errors is bounded above by $1+1/(4\epsilon)$. One might ask a similar question for asymmetric errors. Specifically, can we bound the size of a code $\mathcal{C}\subseteq\{0,1\}^n$ tolerating a fraction  $\tau=1/4+\epsilon$ of asymmetric errors using a function that depends only on $\epsilon$? The paper~\cite{borden1983low} claims that such a bound exists only for $\epsilon>1/12$. We disprove this statement and provide an order-optimal uniquely-decodable code of size $\Theta(\epsilon^{-3/2})$ as $\epsilon\to 0$. From the results of~\cite{levenshtein1961application,alon2018list,plotkin1960binary}, it follows that the maximal size of a code tolerating a fraction $1/4+\epsilon$ of symmetric errors is $(4\epsilon)^{-1}(1+o(1))$ as $\epsilon\to 0$.

Much less is known about list-decodable codes for the Z-channel. By~\cite{lebedev2020,zhang2020generalized}, exponential-sized (or positive-rate) $(L-1)$-list-decodable codes with list-decoding radius $\tau$ exist only for $ \tau<\tau_L $, where $ \tau_L = w_{\max} - w_{\max}^L $ and $ w_{\max} $ equals $ L^{-\frac{1}{L-1}} $. We call $\tau_L $ the \emph{$(L-1)$-list-decoding Plotkin point}. We extend the above results from unique-decoding to list-decoding and obtain the same characterization $ \Theta_L(\epsilon^{-3/2}) $ for list-decodable codes with arbitrary list size $ L-1 $ correcting $ \tau_L+\epsilon $ fraction of asymmetric errors. 
The same question for symmetric errors was also studied before. 
In a recent work~\cite{alon2018list}, the results in~\cite{levenshtein1961application,plotkin1960binary} for unique-decoding were generalized to list-decoding with any \emph{odd} list size that is at least one and the optimal code size was shown to be $ \Theta_L(\epsilon^{-1}) $. 
For \emph{even} list size, the problem seems significantly more difficult and~\cite{alon2018list} showed that the optimal code size is $ \Theta_L(\epsilon^{-3/2}) $ for list size {two}. 

\section{Overview of our results and techniques}
\label{sec:results-techniques}

This paper is a collection of results on combinatorial properties of codes for the Z-channel with adversarial errors. 
The most technically challenging part of our results has to do with obtaining the order-optimal size of codes that correct a fraction of asymmetric errors \emph{$\epsilon$-above} the Plotkin point. 
We start with the unique-decoding case and show in Sec.~\ref{sec::high error low rate codes} that the optimal size of codes which correct $ 1/4+\epsilon $ fraction of asymmetric errors is exactly $ \Theta(\epsilon^{-3/2}) $. 
This follows from an upper bound (Theorem~\ref{th::plotkin bound for the Z channel} in Sec.~\ref{sec:unique-dec-upper-bound}) and a matching construction (Theorem~\ref{th::construction of uniquely decodable codes} in Sec.~\ref{sec:unique-dec-construction}). 
We then generalize these results to list-decoding for any list size at least one. 
We show in Sec.~\ref{sec:ub-cc}-\ref{sec:construction-general} that the same bound $ \Theta_L(\epsilon^{-3/2}) $ is also optimal for list-decodable codes which correct $ \tau_L + \epsilon $ fraction of asymmetric errors, where $ \tau_L $ is the list-decoding Plotkin point. 
See Sec.~\ref{sec:listdec-rad} and~\ref{sec:listdec-plotkin-pt} for definitions and properties of the list-decoding radius and list-decoding Plotkin point.

We briefly explain below the ideas behind our results on {codes that correct a fraction of errors beyond the Plotkin bound}. 
\begin{enumerate}
  \item (Upper bound for (approximate) constant-weight codes.) For \emph{constant-weight} codes (i.e., a code in which all codewords have the same Hamming weight; {see \cite{Graham1980lower}}), it follows from a standard double-counting argument that the size of any code $\epsilon$-above the Plotkin point (i.e., $ 1/4 $) is at most $ O_L(\epsilon^{-1}) $. 
  For unique-decoding, this is known by~\cite{plotkin1960binary};  
  for list-decoding, this follows from~\cite{lebedev2020} (see Theorem~\ref{thm:upper-bound-constant-weight} in Sec.~\ref{sec:ub-cc}). 
  Furthermore, these results can be extended to \emph{approximate constant-weight} codes (i.e., a code in which all codewords have \emph{approximately} the same Hamming weight). 
  This can be done by either repeating the double-counting argument with additional care of the deviation in weights or carefully augmenting the codewords in such a way that the augmented codewords all have the same weight. 
  See Theorem~\ref{thm:upper-bound-apx-constant-weight} and~\ref{th: converse non-constant-weight} in Sec.~\ref{sec:ub-apx-cc} for details. 

  \item (Upper bound for general codes.) For \emph{general} codes in which codewords can have any weight between $ 0 $ and $ n $, given the above results, it is tempting to partition $ \{0,1\}^n $ into $ \epsilon^{-1} $ slices each of weight between $ nw $ and $ n(w+\epsilon) $ for some $ w\in[0,1] $. 
  The subcode in each slice is therefore approximate constant-weight and has size at most $ O_L(\epsilon^{-1}) $. 
  In total, we get an upper bound $ O_L(\epsilon^{-2}) $ on the size of the whole code. 
  However, this bound turns out to be suboptimal!
  We improve it by partitioning the space in a more delicate way. 
  The width of the slice is wider for weights far from the critical {(relative)} weight $ w_{\max}=L^{-\frac{1}{L-1}}$ and is thinner for {(relative)} weights close to $ w_{\max} $.\footnote{{For the convenience of discussion, we refer ``weight'' to the Hamming weight (i.e., number of nonzero coordinates) or the relative Hamming weight (i.e., fraction of nonzero coordinates). 
    Without further clarification, the meaning shall be clear from the context.} } 
  In particular, we choose the width to be $ \epsilon^{1/2} $ on average for $ w\in[0,w_{\max} - \Omega_L(\epsilon^{1/2})]\cup[w_{\max} + \Omega_L(\epsilon^{1/2}),1] $ and we choose it to be $ \epsilon $ for $ w\in[w_{\max}-O_L(\epsilon^{1/2}),w_{\max} + O_L(\epsilon^{1/2})] $, while keeping the subcode in each slice having size $ O_L(\epsilon^{-1}) $. 
  In total, there are at most $ O_L(1/\epsilon^{1/2}) = O_L(\epsilon^{-1/2}) $ slices for small and large weights (i.e., weights far from $ w_{\max} $) and at most $ O_L(\epsilon^{1/2}/\epsilon)=O_L(\epsilon^{-1/2}) $ slices for moderate weights (i.e., weights close to $ w_{\max} $). 
  This gives an improved upper bound $ O_L(\epsilon^{-1/2}\epsilon^{-1}) = O_L(\epsilon^{-3/2}) $ on the size of the whole code. 
  The rigorous analyses for $L=2$ and $L>2$ are presented in the proofs of Theorem~\ref{th::plotkin bound for the Z channel} in Sec.~\ref{sec:unique-dec-upper-bound} and Theorem~\ref{thm:upper-bound-general} in Sec.~\ref{sec:ub-general}, respectively. 

  \item (Construction of constant-weight codes.) In Sec.~\ref{sec:construction-cc}, we analyze a code formed by rows of a matrix whose columns are all possible constant-weight words. In Theorem~\ref{th: construction constant weight list decoding}, such a code is shown to be $\epsilon$-above the list-decoding Plotkin point and have order-optimal size $ \Omega_L(\epsilon^{-1}) $. This code generalizes a construction proposed in~\cite{alon2018list} in the context of list-decodable codes for symmetric errors and has a similar flavour as weak-flip codes discussed in~\cite{lin2018weak}. 

  \item (Construction of general codes.) We note that {purely random codes that correct a fraction of errors larger than the Plotkin bound} do not have large size. However, we show how to use randomness in order to build a code $\epsilon$-above the Plotkin point of size $\Omega_L(\epsilon^{-3/2})$. The non-uniform partition used in the proof of the converse bound suggests a matching construction. 
  We reuse the constant-weight construction in Theorem~\ref{th: construction constant weight list decoding} in a consistent way with the non-uniform partition. Specifically, we first build $\Theta_L(\epsilon^{-1/2})$ constant-weight codes such that the $i$th code has size $\Theta_L(\epsilon^{-1})$ and contains codewords with relative weight  $w_{\max} - i\epsilon$. The asymmetry property of the Z-channel comes to play when we apply $\Theta_L(\epsilon^{-1/2})$ independent random permutations on the set of coordinates within each code and consider the union of all these codes. For $L=2$, we carefully analyze the unique-decoding radius of the resulting construction in the proof of Theorem~\ref{th::construction of uniquely decodable codes} in Sec.~\ref{sec:unique-dec-construction}. For $L>2$, we investigate the list-decoding radius of our construction in the proof of Theorem~\ref{th::construction of list-decodable codes}. 
\end{enumerate}

In Sec.~\ref{sec:ub-cap} and~\ref{sec:lb-cap}, we study codes for the Z-channel \emph{below} the list-decoding Plotkin point and obtain upper and lower bounds on the list-decoding capacity. 
The upper bound in Theorem~\ref{thm:eb} follows the classical idea of Elias and Bassalygo~\cite{bassalygo1965new}. Specifically, the space is multicovered by special balls and the size of all balls is carefully adjusted such that each of them contains a constant number of codewords only. The lower bound in Theorem~\ref{thm:rand-cod-lb} uses the standard random coding with expurgation technique. 
The question of obtaining the exact list-decoding capacity for arbitrary list sizes is difficult and our upper and lower bounds do not match. 
However, we manage to derive the list-decoding capacity for asymptotically large list sizes in Theorem~\ref{thm:listdec-cap-large} in Sec.~\ref{sec:listdec-cap-large}. 

Sec.~\ref{sec:cap-stoch-z} contains discussion on the capacity of Z-channels with \emph{stochastic} errors. 
This is a direct consequence of the seminal channel coding theorem by Shannon~\cite{shannon1948mathematical}. 

We end the paper with open questions in Sec.~\ref{sec:open}. 

{
\section{Relation to prior works}
\label{sec:relation}
The $(L-1)$-list-decoding Plotkin point was studied in \cite{zhang2020generalized} for a large family of adversarial channels. 
Specialized to Z-channels considered in the current paper, \cite{zhang2020generalized} provided a characterization of $ \tau_L $ which reads as follows. 
For a finite set $\Sigma$, let $ \P(\Sigma) $ denote the set of distributions on $ \Sigma $. 
For any given $ L\in\Z_{\ge2} $ and $\tau\in[0,1]$, define the \emph{confusability set} $ \K(\tau)\subset\P(\{0,1\}^L) $ as
\begin{align}
\K(\tau) &\coloneqq \left\{\sum_{y\in\{0,1\}} P_{X_1,\cdots,X_L,Y = y}\in\P(\{0,1\}^L) : \begin{array}{c}
P_{X_1,\cdots,X_L,Y} \in\P(\{0,1\}^{L+1}) \\ 
\forall i\in[L],\, P_{X_i,Y}(0,1) = 0, \, P_{X_i,Y}(1,0)\le\tau
\end{array} \right\} . \notag 
\end{align}
Here, for a distribution $ P_{A,B} $, $ \sum_b P_{A,B=b} $ denotes the marginal on $A$; for $ P_{X_1,\cdots,X_L,Y} $, $ P_{X_i,Y} $ denotes the marginal on $ (X_i,Y) $. 
The confusability set should be interpreted as the set of types (i.e., empirical distributions; see Definition \ref{def:type}) of ``confusable'' $L$-tuples $ (\x_1,\cdots,\x_L)\in(\{0,1\}^n)^L $ for which there exists a common sequence $ \y\in\{0,1\}^n $ that can be obtained by changing at most $\tau$ fraction of ones to zeros in each $ \x_i $ ($ i\in[L] $). 
It can be shown \cite{zhang2020generalized} that $ \K(\tau) $ is convex and satisfies $ \K(\tau)\subset\K(\tau') $ for any $ 0\le\tau\le\tau'\le1 $. 
Define also the set of \emph{completely positive tensors} $ \mathsf{CP}\subset(\R_{\ge0}^2)^{\otimes L} $ as
\begin{align}
\mathsf{CP} &\coloneqq \left\{\sum_{i = 1}^k p_i^{\otimes L}\in(\R_{\ge0}^2)^{\otimes L} : k\in\Z_{\ge1}, \, (p_1,\cdots,p_k)\in(\R_{\ge0}^2)^k\right\} . \notag 
\end{align}
It is known that $\mathsf{CP}$ is a convex cone (see \cite[Theorem 6.9]{qi2017tensor}). 
With the above definitions, \cite[Theorem 47, 54]{zhang2020generalized} imply that the $(L-1)$-list-decoding Plotkin point $ \tau_L $ for Z-channels is equal to 
\begin{align}
\tau_L &= \inf\{\tau\in[0,1] : \mathsf{CP}\cap\P(\{0,1\}^L)\subset\K(\tau)\} . \label{eqn:char-zhang} 
\end{align}
Though the characterization \eqref{eqn:char-zhang} is dimension-free (i.e., independent of $n$), it is formulated as an optimization problem involving checking inclusion relation between $ 2^L $-dimensional sets. 
Since evaluation of the RHS of \eqref{eqn:char-zhang} seems challenging, \eqref{eqn:char-zhang} does not directly provide an explicit expression of $ \tau_L $. 
\cite{lebedev2020} later derived the Plotkin point of constant-weight codes from the first principle following the methodology in \cite{blinovsky1986bounds} (see also Sec.\ \ref{sec:listdec-plotkin-pt}). 
The current work differs from \cite{zhang2020generalized,lebedev2020} in the following aspects. 
\begin{enumerate}
   \item Both \cite{lebedev2020,zhang2020generalized} focused on constant-weight (or more generally constant-type; see Definition \ref{def:type}) codes. 
   This is without loss of generality for studying achievable rates of codes below the Plotkin point. 
   However, above the Plotkin point, constant-weight codes exhibit different behaviours from weight-unconstrained codes. 
   We characterize the scaling of the optimal sizes of both code ensembles.
   \item 
   The converse bound in \cite[Theorem 54]{zhang2020generalized} implies that constant-weight codes correcting $ \tau_L+\epsilon $ fraction of asymmetric errors have size at most $ K(\epsilon^{-1}) $ for some function $ K(\cdot) $ that grows as least as fast as the hypergraph Ramsey number \cite{ramsey,erdos-rado-hypergraph-ramsey} which is far from being optimal. 
   Then the relation \eqref{eqn:prior-upper-bound-const-wt} was proved in \cite[Lemma 4]{lebedev2020} for constant-weight codes which, as we show in Sec.\ \ref{sec:ub-cc}, implies the optimal upper bound $ O_L(\epsilon^{-1}) $. 
   \item No construction of codes above the Plotkin point was given in \cite{lebedev2020,zhang2020generalized}. 
   Our work provides sharp lower bounds for such codes with constant or arbitrary weight by extending the construction in \cite[Theorem 1]{alon2018list}. 
   \item For codes below the Plotkin point, \cite[Theorem 3]{lebedev2020} also derived lower bounds on the achievable rates. 
   Here we follow the approach in \cite[Theorem 47]{zhang2020generalized} and derive an alternative expression of achievable rates which is formulated more compactly using the confusability set $ \K(w,\tau) $ (see \eqref{eqn:def-k} for the definition). 
   No upper bound on the achievable rates was given in \cite{lebedev2020}. 
   Here we derive an upper bound (see Theorem \ref{thm:eb}) using ideas from \cite{bassalygo1965new}. 
   We also characterize the list-decoding capacity when the list size is sufficiently large. 
\end{enumerate}
}

\section{Preliminaries}\label{sec::preliminaries}
We use the standard Bachmann--Landau notation. 
For a set $\mathcal{S}$ and an integer $0\le t\le|\mathcal{S}|$, $\binom{\mathcal{S}}{t}\coloneqq\{\mathcal{S}'\subset\mathcal{S}:|\mathcal{S}'|=t\}$.
The set $\{i,i+1,\ldots,j\}$ for some integers $i$ and $j$ with $ i\le j$ will be denoted as $[i,j]$. The set $[1,j]$ is shortly denoted as $[j]$.  By slight abuse of notation, we write $[w_1,w_2]$ to also denote the closed interval of real numbers between $w_1$  and $w_2$. A vector of length $n$ is denoted by bold lowercase letters, such as $\x$,
and the $i$th entry of the vector $\x$ is referred to as $x_i$.  The all-zero vector, whose length will be clear from the context, is written as $\0$. Define the asymmetric function $\Delta(\x,\y)$ to be the number of positions $i\in[n]$ such that $x_i=1$  and $y_i=0$. The Hamming distance between $\x,\y\in\{0,1\}^n$ is then $d_{\mathrm{H}}(\x,\y):=\Delta(\x,\y)+\Delta(\y,\x)$. The (Hamming) weight of $\x\in\{0,1\}^n$ is $\wt(\x):=d_{\mathrm{H}}(\x,\0)$; the relative weight is $\wt(\x)/n$. 
The \emph{Z-distance} between $ \x,\y\in\{0,1\}^n $ is defined as $ d_{\mathrm{Z}}(\x,\y)\coloneqq \max(\Delta(\x,\y),\Delta(\y,\x)) $. 
Note that the Hamming distance and the Z-distance are related by the following identity: $ d_{\mathrm{Z}}(\x,\y) = \frac{1}{2}(d_{\mathrm{H}}(\x,\y) + |\wt(\x) - \wt(\y)|) $. 
By this relation, we see that $ d_{\mathrm{Z}}(\cdot,\cdot) $ is indeed a distance, and it equals $d_{\mathrm{H}}(\cdot,\cdot)/2 $ if the Hamming weights of $\x$ and $\y$ are the same. 
Define the \textit{Z-ball} and the \textit{H-ball} centered at $\x\in\{0,1\}^n$ with radius $t$ as follows
\begin{align*}
B^{\text{Z}}_{t}(\x)&\coloneqq\{\y\in\{0,1\}^n:\ \Delta(\y,\x)=0,\ \Delta(\x,\y)\le t\},\\
B^{\text{H}}_{t}(\x)&\coloneqq\{\y\in\{0,1\}^n:\ d_{\mathrm{H}}(\x,\y)\le t\}.
\end{align*}
Similarly, define the \emph{Z-sphere} and the \emph{H-sphere} centered at $\x\in\{0,1\}^n$ with radius $t$ as follows
\begin{align*}
S^{\mathrm{Z}}_{t}(\x)&\coloneqq\{\y\in\{0,1\}^n:\ \Delta(\y,\x)=0,\ \Delta(\x,\y)= t\},\\
S^{\mathrm{H}}_{t}(\x)&\coloneqq\{\y\in\{0,1\}^n:\ d_{\mathrm{H}}(\x,\y)= t\}.
\end{align*}
{A (binary block) \textit{code} $\mathcal{C}\subseteq\{0,1\}^n$ is an arbitrary subset of vectors of the same length $n$.} The \textit{size} of a code $\C$ is denoted as $|\C|$.  
The \emph{rate} of $\C$ is defined as $ R(\C)\coloneqq \frac{1}{n}\log|\C| $. 
A code $\C\subseteq\{0,1\}^n$ is called \textit{$w$-constant-weight} if the weight of all codewords $\x\in\C$ is $\wt(\x)=nw$.  

\begin{remark}
With a slight abuse of terminology, we will interchangeably use weight/distance/radius to refer to \emph{relative} weight/distance/radius, that is, the former quantities normalized by $1/n$. 
Without further specification, the exact meaning should be clear from the context. 
\end{remark}

\begin{definition}[Uniquely-decodable code]
  We say that a code $\mathcal{C}\subseteq\{0,1\}^n$ \textit{corrects} $t$ asymmetric (symmetric) errors if for any $\x\in\{0,1\}^n$ the respective Z-ball (H-ball) centered at $\x$ with radius $t$ contains at most one codeword from $\C$, i.e., it holds that $|B^{\text{Z}}_{t}(\x)\cap \C| \le 1$ ($|B^{\text{H}}_{t}(\x)\cap \C|\le 1$).
\end{definition}
Note that a code $\C$ corrects $t$ asymmetric errors if and only if for any two distinct codewords $\x,\y\in\C$, it holds that $d_{\mathrm{Z}}(\x,\y)>t$; {see, e.g., \cite{weber1992necessary,tallini2008new}}. 

A code $\C\subseteq\{0,1\}^n$ is said to correct a \textit{fraction} $\tau$ of asymmetric (symmetric) errors if it corrects $t$ asymmetric (symmetric) errors for $t=\lceil \tau n \rceil$.

In the following statement, we first recall a observation from~\cite{varshamov1965theory}.
\begin{lemma}\label{lem: trivial statement}
  Let a code $\C\subseteq\{0,1\}^n$ be $w$-constant-weight.  Then $\C$ corrects  $t$ asymmetric errors if and only if it corrects $t$ symmetric errors which is in turn satisfied if and only if the minimum distance of $\C$ is larger than $2t$.
\end{lemma}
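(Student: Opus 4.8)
The plan is to reduce all three conditions to statements about pairwise Hamming distances and then compare them. The first step is to record the special structure of constant-weight codes: for any two codewords $\x,\y\in\C$ we have $\wt(\x)=\wt(\y)=nw$, and writing $\wt(\x)=\Delta(\x,\y)+|\s(\x)\cap\s(\y)|$ and $\wt(\y)=\Delta(\y,\x)+|\s(\x)\cap\s(\y)|$ shows $\Delta(\x,\y)=\Delta(\y,\x)$. Consequently $d_Z(\x,\y)=\max(\Delta(\x,\y),\Delta(\y,\x))=\Delta(\x,\y)=\tfrac12 d_H(\x,\y)$, which also follows directly from the identity $d_Z(\x,\y)=\tfrac12\big(d_H(\x,\y)+|\wt(\x)-\wt(\y)|\big)$ recalled in the preliminaries.

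Next I would translate the asymmetric-error condition. By the observation stated just before the lemma, $\C$ corrects $t$ asymmetric errors if and only if $d_Z(\x,\y)>t$ for every pair of distinct codewords $\x,\y\in\C$; by the previous paragraph this is equivalent to $d_H(\x,\y)>2t$ for all distinct $\x,\y\in\C$, i.e.\ to the minimum distance of $\C$ being larger than $2t$. So it remains to show that correcting $t$ symmetric errors is likewise equivalent to the minimum distance exceeding $2t$. The ``if'' direction is the triangle inequality for $d_H$: if $d_H(\x,\y)>2t$ for all distinct $\x,\y\in\C$ and some $\z$ satisfied $d_H(\z,\x)\le t$ and $d_H(\z,\y)\le t$ for distinct $\x,\y\in\C$, then $d_H(\x,\y)\le 2t$, a contradiction, so $|B^{\text{H}}_{t}(\z)\cap\C|\le 1$ for every $\z\in\{0,1\}^n$.

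For the converse I would argue the contrapositive: if $d_H(\x,\y)\le 2t$ for some distinct $\x,\y\in\C$, form $\z$ by flipping, starting from $\x$, exactly $\lceil d_H(\x,\y)/2\rceil$ of the coordinates on which $\x$ and $\y$ disagree. Then $d_H(\z,\x)=\lceil d_H(\x,\y)/2\rceil\le t$ and $d_H(\z,\y)=\lfloor d_H(\x,\y)/2\rfloor\le t$ (the ceiling bound uses that $d_H(\x,\y)$ is an integer at most $2t$), so $B^{\text{H}}_{t}(\z)$ contains both $\x$ and $\y$ and $\C$ fails to correct $t$ symmetric errors. Chaining the two equivalences yields the lemma. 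There is no genuine obstacle here; the only points needing a little care are the clean justification of $\Delta(\x,\y)=\Delta(\y,\x)$ for equal-weight words and the ceiling/floor bookkeeping in the construction of the intermediate word $\z$.
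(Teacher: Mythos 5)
Your proof is correct. The paper itself gives no proof of this lemma---it is recalled as a ``trivial observation'' with a citation to Varshamov---and your argument (equal weights force $\Delta(\x,\y)=\Delta(\y,\x)$, hence $d_Z=\tfrac12 d_H$, plus the standard equivalence between minimum Hamming distance $>2t$ and correcting $t$ symmetric errors via the triangle inequality and a midpoint construction) is exactly the standard route one would take; the only implicit assumption worth stating is that $t$ is an integer, which is needed for $\lceil d_H(\x,\y)/2\rceil\le t$ and is consistent with the paper's conventions.
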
 
We state two classical coding theory results, which were proved in~\cite{plotkin1960binary} and~\cite{bassalygo1965new} respectively. 
\begin{lemma}\label{lem::plotkin bound} Let  $\C\subseteq\{0,1\}^n$ be a code that corrects $t$ symmetric errors, where {$4t+3>n$}. Then its size is bounded above as follows
  $$
  |\C|\le 2\left\lfloor \frac{2t+2}{4t+3 -n}\right\rfloor.
  $$
\end{lemma}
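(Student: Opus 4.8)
The plan is to derive the bound from the classical Plotkin double-counting argument applied to a parity-extended version of $\C$. First I would restate the hypothesis as a distance condition: by Lemma~\ref{lem: trivial statement}'s underlying principle (or directly from the definition), a code corrects $t$ symmetric errors exactly when $d_H(\x,\y)\ge 2t+1$ for all distinct $\x,\y\in\C$. The off-looking constants $2t+2$ and $4t+3-n$ in the statement are the tell-tale sign that one should first make all pairwise distances \emph{even}: append to every codeword an overall parity bit, producing a code $\C'\subseteq\{0,1\}^{n+1}$ with $|\C'|=|\C|$ (the map is injective) in which every pair of distinct codewords is at even Hamming distance, hence at distance at least $2t+2$.

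Next I would run the standard counting on $\C'$. Set $M\coloneqq|\C'|=|\C|$ and sum $d_H$ over all ordered pairs of distinct codewords; the distance lower bound gives at least $M(M-1)(2t+2)$. For the upper bound, fix a coordinate $j\in[n+1]$, let $a_j$ be the number of codewords of $\C'$ that carry a $1$ there, note that this coordinate contributes $2a_j(M-a_j)\le 2\lfloor M/2\rfloor\lceil M/2\rceil$ to the sum, and add over the $n+1$ coordinates to get at most $(n+1)\cdot 2\lfloor M/2\rfloor\lceil M/2\rceil$. Comparing the two estimates, dividing out the common factor, and using that $t>n/4$ forces $4t+3-n\ge 1>0$ (so all the denominators below are positive), one obtains, after rearranging, $M(4t+3-n)\le 2(2t+2)$ when $M$ is even and $M(4t+3-n)\le n+1$ when $M$ is odd.

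Finally I would extract the claimed closed form (assuming $|\C|\ge 2$, the case $|\C|\le1$ being degenerate). In the even case, $M\le \frac{2(2t+2)}{4t+3-n}$ together with the parity of $M$ gives $M\le 2\lfloor\frac{2t+2}{4t+3-n}\rfloor$. In the odd case, substitute the identity $n+1 = 2(2t+2)-(4t+3-n)$ into $M(4t+3-n)\le n+1$ to get $M\le \frac{2(2t+2)}{4t+3-n}-1$; writing $q\coloneqq\lfloor\frac{2t+2}{4t+3-n}\rfloor$ and using $\frac{2t+2}{4t+3-n}<q+1$ yields $M<2q+1$, so by integrality $M\le 2q$, which is again the stated bound.

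I do not expect a real obstacle, since this is a textbook argument; the only points needing care are bookkeeping ones: confirming that the parity extension neither collapses codewords nor fails to push odd distances up to $2t+2$, and checking that the hypothesis $t>n/4$ keeps every denominator strictly positive so that the inequalities are oriented correctly. (One can also skip the extension and invoke the plain Plotkin bound with $d=2t+1$, but obtaining precisely the constants $2t+2$ and $4t+3-n$ is cleanest via the parity trick.)
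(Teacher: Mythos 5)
The paper does not prove this lemma; it is quoted as a classical result and attributed to Plotkin (1960), so there is no in-paper argument to compare against. Your proof is the standard and correct derivation: the reduction from minimum distance $2t+1$ to the parity-extended code of length $n+1$ and even minimum distance $2t+2$, followed by the double-counting of $\sum_{\x\ne\y}d_H(\x,\y)$ with the coordinatewise bound $2a_j(M-a_j)\le 2\lfloor M/2\rfloor\lceil M/2\rceil$, yields exactly the constants $2t+2$ and $4t+3-n$ in the statement, and your case analysis on the parity of $M$ (including the substitution $n+1=2(2t+2)-(4t+3-n)$ in the odd case) checks out. The only caveat, which you already flag, is the degenerate regime: when $n<2t+1$ the floor can be $0$ while a one-codeword code exists, so the stated inequality is implicitly meant for the nondegenerate setting in which it is applied; this is an imprecision of the classical statement, not of your argument.
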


\begin{lemma}\label{lem: bas bound} Let a code $\C\subseteq\{0,1\}^n$ be $w$-constant-weight and correct $t$ symmetric errors. {If the inequality $w^2 - (w-t)n >0$ is fulfilled, then the size of $\C$ is bounded above as follows
  $$
  |\C|\le 
  \frac{tn}{w^2 - (w-t)n}.
  $$
}
\end{lemma}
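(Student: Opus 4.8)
The plan is to run the classical second-moment (double-counting) argument that underlies the Bassalygo--Johnson bound for constant-weight codes. Write $M:=|\C|$ and assume $\C$ is nonempty. First I would extract the geometric input: since $\C$ is $w$-constant-weight and corrects $t$ symmetric errors, Lemma~\ref{lem: trivial statement} gives that its minimum Hamming distance exceeds $2t$. For distinct $\x,\y\in\C$, each of weight $w$, one has $d_H(\x,\y)=2\big(w-|\s(\x)\cap\s(\y)|\big)$, so $d_H(\x,\y)>2t$ forces $|\s(\x)\cap\s(\y)|\le w-t-1$; for the argument the weaker bound $|\s(\x)\cap\s(\y)|\le w-t$ is enough.

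Next I would set up the count. For each coordinate $i\in[n]$ let $\lambda_i:=\big|\{\x\in\C:x_i=1\}\big|$. Counting codeword--coordinate incidences gives $\sum_{i=1}^n\lambda_i=Mw$. Counting, for each coordinate, the unordered pairs of codewords that both carry a $1$ there, and swapping the order of summation, gives
\[
\sum_{i=1}^n\binom{\lambda_i}{2}=\sum_{\substack{\{\x,\y\}\subseteq\C\\\x\neq\y}}|\s(\x)\cap\s(\y)|\le\binom{M}{2}(w-t).
\]
On the other hand, since $x\mapsto x(x-1)/2$ is convex, Jensen's inequality applied with uniform weights over the $n$ coordinates yields
\[
\sum_{i=1}^n\binom{\lambda_i}{2}\ge n\cdot\frac{(Mw/n)\big(Mw/n-1\big)}{2}=\frac{Mw(Mw-n)}{2n}.
\]

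Chaining these two inequalities, dividing through by $M$, and rearranging, I expect to be left with a linear inequality in $M$, namely $M\big(w^2-(w-t)n\big)\le tn$. At this point the two hypotheses on $w$ play their roles: $t+1\le w$ ensures $w-t\ge1>0$, so that the pair bound is non-degenerate, while $w\le\tfrac12\big(n-\sqrt{n^2-4tn}\big)$ is precisely the condition that the coefficient $w^2-(w-t)n=w^2-wn+tn$ is nonnegative (the upward parabola $w\mapsto w^2-wn+tn$ is nonnegative outside the interval delimited by its roots $\tfrac12(n\pm\sqrt{n^2-4tn})$). When this coefficient is strictly positive, dividing gives $|\C|=M\le \frac{tn}{w^2-(w-t)n}$, and since $M$ is a nonnegative integer we may pass to the floor to obtain the stated bound; the boundary case where the coefficient vanishes is vacuous, as the right-hand side is then effectively $+\infty$.

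I do not anticipate any real obstacle: this is a well-known classical bound (cf.~\cite{bassalygo1965new}) and the argument is essentially mechanical. The only points demanding care are the bookkeeping at the edge cases and the sign conditions --- confirming that the hypotheses on $w$ make the final division legitimate and keep the inequality oriented correctly, and verifying the combinatorial identity $\sum_i\binom{\lambda_i}{2}=\sum_{\{\x,\y\}}|\s(\x)\cap\s(\y)|$ --- so I would keep the written proof short.
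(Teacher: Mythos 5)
Your proof is correct. The paper does not actually prove this lemma — it is quoted as a classical result from \cite{bassalygo1965new} — and your double-counting argument (the pairwise support-intersection bound $|\s(\x)\cap\s(\y)|\le w-t$ extracted from the minimum distance via Lemma~\ref{lem: trivial statement}, combined with Jensen's inequality on the column weights $\lambda_i$) is precisely the standard derivation of that bound, with the sign conditions and the roles of the two hypotheses on $w$ handled correctly.
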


We introduce below the concept of list-decodable codes.
\begin{definition}[List-decodable code] 
  \label{def:listdec}
  Let $L$ be a positive integer at least $2$.
  A code $\mathcal{C}\subseteq\{0,1\}^n$ is said to be $(L-1)$-list-decodable with list-decoding radius $\tau$ if for any $\x\in\{0,1\}^n$ the respective Z-ball (H-ball) centered at $\x$ with radius $t:=\lceil \tau n\rceil $ contains at most $L-1$ codewords from $\C$, i.e., it holds that $|B^{\text{Z}}_{t}(\x)\cap \C| \le L-1$ ($|B^{\text{H}}_{t}(\x)\cap \C|\le L-1$).
\end{definition}

At last, we need the binary entropy function defined as $ H(x) \coloneqq -x\log x-(1-x)\log(1-x) $ for any $ x\in[0,1] $.

\section{Uniquely-decodable codes above the Plotkin point}\label{sec::high error low rate codes}
In this section, we obtain upper and lower bounds on the size of an optimal code capable of correcting a fraction $\tau=\frac{1}{4}+\epsilon$ of asymmetric errors. {For simplicity of arguments, we naturally assume in this section that if an $n$-length code corrects a fraction $\tau=\frac{1}{4}+\epsilon$ of errors, then  $\tau n = n/4+\epsilon n$ is an integer and, consequently, $\epsilon n \ge 1/4$.}
\subsection{Upper bound}
\label{sec:unique-dec-upper-bound}
In the following statement, we derive an upper bound on the size of a code capable of correcting a large fraction of asymmetric errors. The idea of the proof is to partition a code into $O(\epsilon^{-1/2})$ subcodes with approximate constant-weight. By lengthening codewords within each subcode, we obtain constant-weight codes correcting a large fraction of symmetric errors and show that their size can be bounded by $O(\epsilon^{-1})$.
\begin{theorem}\label{th::plotkin bound for the Z channel}
  Let $\C\subseteq\{0,1\}^n$ be a code correcting a fraction $\tau=\frac{1}{4}+\epsilon$ of asymmetric errors for some real number {$\epsilon\le 1/12$, where $\tau n$ is an integer}. Then the size of the code can be bounded as follows
  {
  $$
  |\C|\le\frac{1+3\sqrt{\epsilon} + 4\epsilon}{\epsilon^{3/2}}+14.
  $$}
\end{theorem}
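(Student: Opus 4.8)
The plan is to follow the two-step strategy announced before the theorem: (i) slice the code into approximately-constant-weight subcodes according to a carefully chosen non-uniform partition of the weight range $[0,n]$, and (ii) bound the size of each subcode by lengthening its codewords into a genuinely constant-weight code correcting many symmetric errors, so that Lemmas~\ref{lem::plotkin bound} and~\ref{lem: bas bound} apply. First I would recall that, since $\C$ corrects $t = \lceil(\tfrac14+\epsilon)n\rceil$ asymmetric errors, any two distinct $\x,\y\in\C$ satisfy $d_Z(\x,\y) > t$, i.e. $\max(\Delta(\x,\y),\Delta(\y,\x)) > t$. The key maneuver is the following: given a subcode $\C'$ all of whose codewords have weight in a window $[w_0, w_0 + \delta n]$, append to each codeword $\x$ a block of $\max_{\x'\in\C'}\wt(\x') - \wt(\x)$ ones followed by enough zeros so that all augmented codewords have the same weight $w^*$ and the same length $n' = n + (\text{window width in absolute terms})$. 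Augmenting only affects pairs through the added ones; since all added coordinates are $1$'s, the Z-distance of the augmented pair equals $\max(\Delta(\x,\y) + a_\y, \Delta(\y,\x) + a_\x)$ where $a_\x, a_\y \le \delta n$ are the numbers of ones added to each, so the augmented Z-distance is at least the original one, still $> t$. Now the augmented code is constant-weight, so by Lemma~\ref{lem: trivial statement} its minimum Hamming distance exceeds $2t$, and I can invoke Lemma~\ref{lem: bas bound} (or Lemma~\ref{lem::plotkin bound} when the weight is near $n'/2$) on it.

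Next I would choose the partition. The slices should have width roughly $\sqrt\epsilon$ (in relative weight) away from the critical weight $w_{\max}=1/2$, and width roughly $\epsilon$ within an $O(\sqrt\epsilon)$-neighbourhood of $1/2$. Concretely: for relative weight $w$ with $|w - 1/2| \ge c\sqrt\epsilon$, use slices of relative width about $\sqrt\epsilon$, giving $O(\epsilon^{-1/2})$ slices, and on each the Bassalygo bound yields a subcode size $O(\epsilon^{-1})$ — here one must check the hypothesis $t+1 \le w' \le (n'-\sqrt{(n')^2-4tn'})/2$ of Lemma~\ref{lem: bas bound} and that the denominator $w'^2 - (w'-t)n'$ is $\Omega(\epsilon n'^2)$; this is where the precise numerics (the $36$, the $1/12 - 3/n$, the $7/n$ corrections) get pinned down. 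For $|w-1/2| < c\sqrt\epsilon$, use slices of width about $\epsilon$: there are $O(\sqrt\epsilon/\epsilon) = O(\epsilon^{-1/2})$ of them, and on each the augmented constant-weight code has weight very close to half its length, so the Plotkin bound (Lemma~\ref{lem::plotkin bound}) with $t > n'/4$ applies and gives size $O(\epsilon^{-1})$ — again the constant must be tracked carefully to land the $2\sqrt\epsilon$ coefficient. Summing $O(\epsilon^{-1/2})$ slices each of size $O(\epsilon^{-1})$ gives the claimed $O(\epsilon^{-3/2})$, and one keeps explicit constants throughout to reach exactly the stated bound $\frac{1 + 7/n + 2\sqrt\epsilon + 4\epsilon + 16\sqrt\epsilon/n}{\epsilon^{3/2}} + 10$.

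The main obstacle, I expect, is not the conceptual structure but the bookkeeping: designing the partition so that the windows near and far from $1/2$ tile $[0,1]$ cleanly, verifying that on \emph{every} window the hypotheses of the Bassalygo/Plotkin lemmas are genuinely met (this is precisely where $\epsilon < 1/12 - 3/n$ and $n > 36$ must be exactly what is needed, since the Bassalygo constraint $w \le (n-\sqrt{n^2-4tn})/2$ becomes vacuous or the denominator turns nonpositive if $\epsilon$ is too large), and then aggregating the per-slice bounds with all floor functions and $O(1/n)$ slack terms made explicit. A secondary subtlety is the boundary between the "wide-slice" and "narrow-slice" regimes: one must ensure no window straddles $w = 1/2 \pm c\sqrt\epsilon$ in a way that falls outside the validity range of \emph{both} lemmas, which fixes the constant $c$ (likely $c$ is chosen so that at the transition weight the two regimes' hypotheses overlap). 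I would also handle the trivial edge slices (weights $0$ to $O(\sqrt\epsilon)n$ and $(1-O(\sqrt\epsilon))n$ to $n$) separately, as there Lemma~\ref{lem: bas bound}'s hypothesis $t+1 \le w$ fails; but a window of such small weight can contain at most one codeword anyway (two codewords both of tiny weight would have Z-distance at most their common weight $\ll t$), so these contribute only $O(\epsilon^{-1/2})$ to the total, absorbed into the lower-order terms and the additive $+10$.
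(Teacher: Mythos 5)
Your overall architecture is exactly the paper's: split off the codewords of weight at most $n/2$ (the paper handles the other half by complementation), slice by weight with slices that are coarse far from $1/2$ and of width $\Theta(\epsilon)$ near $1/2$, augment each slice to a constant-weight code of longer length (Z-distance cannot decrease under appending coordinates, and your formula for the augmented Z-distance is not quite right for the ``ones then zeros'' scheme, but the monotonicity conclusion is all that is needed), and then invoke Lemma~\ref{lem: bas bound} far from $1/2$ and Lemma~\ref{lem::plotkin bound} near $1/2$. However, there is a genuine gap in the far-from-$1/2$ regime: slices of \emph{uniform} relative width $\sqrt{\epsilon}$ do not work. For a slice $[w_2-\Delta,w_2]$ augmented to length $n(1+\Delta)$ and relative weight $w_2$, the Bassalygo denominator is
\begin{align}
n^2\left[w_2^2 - \left(w_2-\tfrac14-\epsilon\right)(1+\Delta)\right] = n^2\left[\left(w_2-\tfrac12\right)^2 - \Delta\left(w_2-\tfrac14\right) + \epsilon(1+\Delta)\right], \notag
\end{align}
which you need to be $\Omega(\epsilon n^2)$ to get an $O(\epsilon^{-1})$ bound per slice. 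With $\Delta=\sqrt{\epsilon}$ this requires $(w_2-\tfrac12)^2\gtrsim \sqrt{\epsilon}/4$, i.e.\ $|w_2-\tfrac12|\gtrsim \epsilon^{1/4}$. For slices at distance between $c\sqrt{\epsilon}$ and roughly $\epsilon^{1/4}$ from $1/2$, the denominator is negative (equivalently, the hypothesis $w\le (n'-\sqrt{(n')^2-4tn'})/2$ of Lemma~\ref{lem: bas bound} fails), and Lemma~\ref{lem::plotkin bound} does not apply either, since $t>n'/4$ forces $\Delta<4\epsilon$. So in an entire intermediate band neither lemma is usable under your partition, and this is not a bookkeeping issue that constants can absorb.

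The fix, which is what the paper does, is to let the slice width scale like the \emph{square} of the distance to $1/2$: the slice ending at relative weight $w_2$ must have width at most about $(1-2w_2)^2$. The paper takes breakpoints $\rho_i=\frac{i}{2i+1}$, for which $\rho_{i+1}-\rho_i=\frac{1}{(2i+1)(2i+3)}$ and the identity $\rho_{i+1}^2-(\rho_{i+1}-\tfrac14)(1+\rho_{i+1}-\rho_i)=0$ holds exactly, so the $\epsilon$-free part of the Bassalygo denominator vanishes identically and each slice contributes at most $1+(1/4+1/n)\epsilon^{-1}$. Since $\tfrac12-\rho_i=\frac{1}{2(2i+1)}$, reaching within $O(\sqrt{\epsilon})$ of $1/2$ still takes only $i_0=\lfloor 1/(2\sqrt{\epsilon})\rfloor=O(\epsilon^{-1/2})$ such slices (the widths range from constant down to $\Theta(\epsilon)$, averaging $\sqrt{\epsilon}$), after which the uniform width-$2\epsilon$ Plotkin slices take over as you describe. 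With that correction your argument matches the paper's proof.
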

{
\begin{remark}
A suboptimal upper bound $ O(\epsilon^{-2}) $ was presented in the first version of \cite{lebedev2020v1} for uniquely-decodable codes. The authors of the current manuscript then managed to improve this bound to the optimal order $ O(\epsilon^{-3/2})$. 
(In fact, we proved the upper bound $ O_L(\epsilon^{-3/2}) $ for $(L-1)$-list-decodable codes with any $ L\ge2 $; see Theorem \ref{thm:upper-bound-general}.)
During the finalization of the current manuscript, the first author updated the other manuscript \cite{lebedev2020} and presented the improved bound $ O(\epsilon^{-3/2})$ for $L=2$ there. 
However, the latter bound was first obtained by the authors of the current manuscript. 
\end{remark}
}

\begin{proof}[Proof of Theorem \ref{th::plotkin bound for the Z channel}]
  Without loss of generality, we assume that the number of codewords of weight at most $n/2$ is at least as the number of codewords of weight at least $n/2$. Otherwise, we can consider the code that is obtained by replacing $0$ by $1$ and $1$ by $0$ in all codewords of the original code because the Z-distance $d_{\mathrm{Z}}(\x,\y)=\max(\Delta(\x,\y),\Delta(\y,\x))$ is not changed after such swapping. For a non-negative integer $i$, define $\rho_i:=\frac{i}{2i+1}$.  Define the subcode
  \begin{equation*}\label{eq: first codes}
  \mathcal{A}_i:=\{\x\in\C:\ \lfloor \rho_{i}n \rfloor<\wt(\x)\le \lfloor \rho_{i+1}n\rfloor \}.
  \end{equation*}
  Extend all codewords of $\mathcal{A}_i$ by appending $\lfloor \rho_{i+1}n\rfloor -\lfloor \rho_{i}n \rfloor -1$ extra positions such that all codewords have weight  $\lfloor \rho_{i+1}n\rfloor $. Note that this can be done in different ways. From Lemma~\ref{lem: trivial statement} it follows that {the obtained code $\mathcal{A}_i'\subseteq\{0,1\}^{n'}$ with $n_i'\coloneqq n+\lfloor \rho_{i+1}n\rfloor -\lfloor \rho_{i}n \rfloor -1$ contains codewords of weight  $w_i'\coloneqq\lfloor \rho_{i+1}n\rfloor $ and corrects $t\coloneqq (1/4+\epsilon)n$ symmetric errors. To apply  Lemma~\ref{lem: bas bound} for code $\mathcal{A}_i'$, one needs to check that the condition of this lemma is fulfilled, specifically $w_i'^2-(w_i'-t)n_i' > 0$. Note that the function $f_i(z)\coloneqq z^2 - (z-t)n_i'$ of argument $z$ is monotonically decreasing for $z\le n_i'/2$.   Clearly, $\rho_{i+1}n\le n_i'/2$. Thus, 
  \begin{align*}
  	f_i(w_i')&=f_i\left(\lfloor\rho_{i+1}n\rfloor\right)\\
  	&\ge f_i\left(\rho_{i+1}n\right)\\
  	&=\rho_{i+1}^2n^2-(\rho_{i+1}n-(1/4+\epsilon)n)(n+ \lfloor n\rho_{i+1}\rfloor - \lfloor n\rho_{i}\rfloor -1)\\
  	&\ge n^2\left(\rho_{i+1}^2-(\rho_{i+1}-(1/4+\epsilon))(1+ \rho_{i+1} - \rho_{i})\right),
  \end{align*}
  where the last inequality holds for $\rho_{i+1}-(1/4+\epsilon) \ge 0$ or $\epsilon \le 1/12$ as $\rho_{i+1}\ge 1/3$. 
  Observe that $\rho_{i+1}^2 - (\rho_{i+1}-1/4)(1+\rho_{i+1}-\rho_i)= 0$ since $\rho_i=\frac{i}{2i+1}$. Therefore, we can proceed as follows
  $$
  f_i(w_i')\ge n^2\epsilon(1+\rho_{i+1}-\rho_{i})>0.
  $$
}

{Thus, we can apply Lemma~\ref{lem: bas bound} for code $\mathcal{A}_i'$ and estimate its size
\begin{align*}
	|\mathcal{A}_i|&=|\mathcal{A}_i'|\\
	&\le \frac{tn_i'}{f_i(w_i')}
	\\
	&\le \frac{(1/4+\epsilon)n ( n+\lfloor \rho_{i+1}n\rfloor -\lfloor \rho_{i}n \rfloor -1)}{n^2\epsilon(1+\rho_{i+1}-\rho_{i})}
	\\
	&\le 1+1/(4\epsilon).
\end{align*}
} 


  Let $i_0:= \lfloor1/(2\sqrt{\epsilon}) \rfloor $ and, thus, $\rho_{i_0}=\frac{i_0}{2i_0+1}\ge \frac{1-2\sqrt{\epsilon}}{2+2\sqrt{\epsilon}}$. For a non-negative integer $j$, consider the subcode
  \begin{equation*}\label{eq: second codes}
  \mathcal{B}_j:=\{\x\in\C:\ \lfloor\rho_{i_0}n\rfloor+j\lceil 2\epsilon n \rceil< \wt(\x)\le\lfloor\rho_{i_0}n\rfloor+(j+1)\lceil2\epsilon n\rceil\}.
  \end{equation*}
  We extend all codewords of $\mathcal{B}_j$ by appending $\lceil2\epsilon n\rceil$ extra positions such that all obtained codewords have the same weight. We get the code $\mathcal{B}_j''\subseteq\{0,1\}^{n''_j}$ with $n''\coloneqq n+\lceil2\epsilon n\rceil$ which contains codewords of weight $w''_j\coloneqq \lfloor\rho_{i_0}n\rfloor+(j+1)\lceil2\epsilon n\rceil$ 
  and corrects $t = (1/4+\epsilon)n$ symmetric errors.  {We shall apply Lemma~\ref{lem::plotkin bound} for code $\mathcal{B}_j''$. The condition of that lemma is satisfied as long as $4t+3 > n''$ or
  $$
  (1+4\epsilon)n+3 > n+  2\epsilon n + 1 \ge n+ \lceil 2\epsilon n\rceil,
$$
which is correct.  By Lemma~\ref{lem::plotkin bound}, we estimate
 \begin{align*}
	|\mathcal{B}_j|&=|\mathcal{B}_j''|\\
	&\le  2 \frac{2t+2}{4t+3 -n''}\\
	&\le \frac{n+4\epsilon n+4}{n+4\epsilon n+3-n-2\epsilon n -1} \\
	& = \frac{n}{2\epsilon n+2}+2\\
	&\le 2+1/(2\epsilon).
\end{align*}
}
  Each codeword of $\C$, having weight within the interval  $[1,n/2]$, is included in  $\mathcal{A}_i$ with some $i\in[0,i_0-1]$ or $\mathcal{B}_j$ with some $j\in[0,j_0-1]$ if $j_0:=\lceil 3/(4\sqrt{\epsilon}) + 2\rceil$. {Indeed, all codewords with weight  in interval $[1, \lfloor\rho_{i_0}n\rfloor+j_0\lceil2\epsilon n\rceil]$ are considered by this partition and the largest weight is at least $n/2$:
  \begin{align*}
  \lfloor\rho_{i_0}n\rfloor+j_0\lceil2\epsilon n\rceil &\ge \frac{1-2\sqrt{\epsilon}}{2+2\sqrt{\epsilon}} n -1 + \left( \frac{3}{4\sqrt{\epsilon}} + 2\right) 2\epsilon n \\
 &= \frac{(4\sqrt{\epsilon} - 8\epsilon)n -8\sqrt{\epsilon}-8\epsilon+12\epsilon n+12\epsilon^{3/2}n+32\epsilon^{3/2}n+32\epsilon^2 n}{(2+2\sqrt{\epsilon})4\sqrt{\epsilon}} \\
 &\ge \frac{n(4\sqrt{\epsilon}+4\epsilon)}{8\sqrt{\epsilon}+8\epsilon} + \frac{-8\sqrt{\epsilon}-8\epsilon +11\sqrt{\epsilon}+ 8\epsilon}{8\sqrt{\epsilon}+8\epsilon} > \frac{n}{2},
  \end{align*}
where in the second-last inequality we applied the fact that $\epsilon n \ge 1/4$.
}
  
  Since the number of codewords with weight from the interval $[1,n/2]$ is not less than the number of codewords of weight from the interval $[n/2,n-1]$, it holds
  {
  \begin{align*}
  |\C|&\le 2\left(\sum_{i=0}^{i_0-1} |\mathcal{A}_i| + \sum_{j=0}^{j_0-1}|\mathcal{B}_{j}|\right) +2\\
  &\le 2i_0 \left(1+\frac{1}{4\epsilon}\right)+2j_0\left(2+\frac{1}{2\epsilon}\right)+2\\
  &\le  \frac{1}{\sqrt{\epsilon}}\left(1+\frac{1}{4\epsilon}\right)+\left(\frac{3}{4\sqrt{\epsilon}}+3\right)\left(4+\frac{1}{\epsilon}\right)+2\\
  &= \frac{1/4+\epsilon+3/4+3\sqrt{\epsilon}+3\epsilon }{\epsilon^{3/2}}+14\\
  &=\frac{1+3\sqrt{\epsilon} + 4\epsilon}{\epsilon^{3/2}}+14.\qedhere
  \end{align*}
}
\end{proof}

\subsection{Construction}
\label{sec:unique-dec-construction}
In the following statement, we prove that there exists a code of size $\Omega(\epsilon^{-3/2})$ and length $\exp(\Theta(\epsilon^{-3/2}))$ capable of correcting a fraction $1/4+\epsilon$ of asymmetric errors. We use the intuition from the proof of Theorem~\ref{th::plotkin bound for the Z channel}. First, we build $O(\epsilon^{-1/2})$ codes such that the $j$th code is a uniquely-decodable code containing codewords with the relative weight $\frac{1}{2}-j\epsilon$.  We borrow an idea for such code with $j=0$ from~\cite{alon2018list}, where the authors constructed list-decodable codes for symmetric errors. By performing simple repetition,  we construct longer codes of the same size while the error-correction capability of all those codes remains unchanged. Applying a random permutation on the set of coordinates within each code, we guarantee that two codewords from different codes have a large asymmetric distance with overwhelming probability.
\begin{theorem}\label{th::construction of uniquely decodable codes} 
  There exists a code of length $\exp(O(\epsilon^{-3/2}))$, capable of correcting a fraction $\tau=\frac{1}{4}+\epsilon$ of asymmetric errors. Furthermore, its size is at least $\frac{3\sqrt{3}}{128}\epsilon^{-3/2}(1+o(1))$ as $\epsilon\to 0$.
\end{theorem}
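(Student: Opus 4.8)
The plan is to mirror the structure of the upper bound in Theorem~\ref{th::plotkin bound for the Z channel}: build a code as a union of roughly $\Theta(\epsilon^{-1/2})$ constant-weight subcodes, the $j$th of which has relative weight $\frac{1}{2}-j\epsilon$ and size $\Theta(\epsilon^{-1})$, then randomize each subcode by an independent uniformly random permutation of the $n$ coordinates so that codewords from \emph{distinct} subcodes end up at large Z-distance with high probability. First I would take, for each $j$ in some range $0\le j\le c\epsilon^{-1/2}$, a $(\frac{1}{2}-j\epsilon)$-constant-weight code $\C_j$ of blocklength $m$ that corrects $(\frac{1}{4}+\epsilon)m$ symmetric errors and has size $\Omega(\epsilon^{-1})$ — such a code exists by the Bassalygo-type construction borrowed from~\cite{alon2018list} (matching Lemma~\ref{lem: bas bound}), and by Lemma~\ref{lem: trivial statement} it automatically corrects the same fraction of asymmetric errors. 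To make all the $\C_j$ live in the same ambient space I would pad/repeat so that they share a common blocklength $n=\exp(\Theta(\epsilon^{-3/2}))$; repetition preserves both the relative weight and the relative error-correction radius. The total size is then $\sum_j |\C_j| = \Theta(\epsilon^{-1/2})\cdot\Theta(\epsilon^{-1}) = \Theta(\epsilon^{-3/2})$, and chasing the constants through the Bassalygo bound and the width of the weight-range should produce the claimed leading constant $\frac{3\sqrt 3}{128}$.

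The heart of the argument is verifying the Z-distance between two codewords $\x\in\C_i$, $\y\in\C_j$. If $i=j$ (same subcode, same weight), then $d_Z(\x,\y)=d_H(\x,\y)/2 > (\frac14+\epsilon)n$ already holds deterministically by construction. If $i\ne j$, the weights differ by $|i-j|\epsilon n\ge \epsilon n$, and since $d_Z(\x,\y)=\frac12(d_H(\x,\y)+|\wt(\x)-\wt(\y)|)\ge \frac12 d_H(\x,\y)+\frac{\epsilon n}{2}$, it suffices to show $d_H(\x,\y)$ is not too small — concretely $d_H(\x,\y)\ge (\frac12+\epsilon)n$, equivalently the overlap $|\s(\x)\cap\s(\y)|$ is bounded above. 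This is where the random permutations enter: after applying independent uniform permutations to $\C_i$ and $\C_j$, the support of $\x$ (of size $\approx(\frac12-i\epsilon)n$) is a uniformly random subset of that size, independent of $\s(\y)$, so $|\s(\x)\cap\s(\y)|$ concentrates around its mean $\approx(\frac12-i\epsilon)(\frac12-j\epsilon)n < n/4$ by a standard hypergeometric tail bound (Chernoff/Hoeffding). A union bound over all $O(\epsilon^{-3})$ pairs $(\x,\y)$ of codewords from distinct subcodes times the $O(\epsilon^{-1})$ choices of $(i,j)$ — a quantity that is only polynomial in $\epsilon^{-1}$ — against an exponentially small (in $n=\exp(\Theta(\epsilon^{-3/2}))$) failure probability shows that with positive probability \emph{every} cross-pair has $d_Z>\lceil(\frac14+\epsilon)n\rceil$. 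Fixing such a choice of permutations gives the desired code.

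The main obstacle I anticipate is the bookkeeping needed to make the constants line up: one must choose the range of $j$, the common blocklength $n$, and the exact parameters of each $\C_j$ so that (i) each $\C_j$ genuinely has size $(1+o(1))\epsilon^{-1}$ up to the right constant, (ii) the mean overlap $(\frac12-i\epsilon)(\frac12-j\epsilon)n$ stays bounded away from $\frac14 n$ by a margin of order $\epsilon n$ uniformly over the chosen range of $i,j$ — this is what forces the range of $j$ to be $\Theta(\epsilon^{-1/2})$ rather than larger — and (iii) the concentration margin is wide enough (of order $\epsilon n$) that the Chernoff bound still beats the union bound. The asymmetry of the channel is essential and slightly subtle here: it is $d_Z$, not $d_H$, that we need to control, and the $+\frac12|\wt(\x)-\wt(\y)|$ term is exactly what buys back the deficit when $\x$ and $\y$ have unequal weights, so the argument genuinely would not work for symmetric errors with this construction. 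I would carry out the calculation of the exact leading constant last, after the qualitative construction and the concentration bound are in place.
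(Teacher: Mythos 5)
Your overall architecture coincides with the paper's: a union of $\Theta(\epsilon^{-1/2})$ balanced constant-weight codes (each given by the rows of a matrix whose columns are all words of a fixed weight near $n/2$), each of size $\Theta(\epsilon^{-1})$, brought to a common length $N=\exp(\Theta(\epsilon^{-3/2}))$ by repetition and then independently randomly permuted so that cross-block pairs have large $\Delta$ with high probability; the concentration-plus-union-bound step is sound (a plain hypergeometric Hoeffding bound, as you propose, is in fact simpler than the explicit large-deviation expansion the paper carries out).

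However, there is a genuine gap in the cross-block distance verification, caused by taking the weight spacing to be exactly $\epsilon$. Your reduction ``it suffices to show $d_H(\x,\y)\ge(\frac12+\epsilon)n$'' is unachievable: for codewords of weights $(\frac12-i\epsilon)n$ and $(\frac12-j\epsilon)n$ in random relative position, the expected Hamming distance is $n(\frac12-2ij\epsilon^2)\le n/2$, so the condition fails even in expectation for \emph{every} cross-pair. Working directly with the overlap (which is the right object, since $d_Z=\max(\wt(\x),\wt(\y))-|\s(\x)\cap\s(\y)|$ here), the requirement $d_Z>(\frac14+\epsilon)n$ is equivalent to the overlap being below $(\frac14-(\min(i,j)+1)\epsilon)n$, whereas its mean after the random permutations is $(\frac14-\frac{(i+j)\epsilon}{2}+ij\epsilon^2)n$; the gap between these is $(\frac{|i-j|}{2}-1-ij\epsilon)\epsilon n$, which is \emph{negative} for $|i-j|\le 2$. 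So your check (ii) --- that the mean overlap is below $n/4$ by a margin of order $\epsilon n$ --- is satisfied but is the wrong threshold, and adjacent weight classes are simply too close together: their expected Z-distance is only about $(\frac14+\frac{\epsilon}{2})n$, and no concentration argument can rescue that. The fix is to make the spacing strictly larger than $2\epsilon$; the paper takes blocks of $2m$ rows with $m=\lfloor 3/(32\epsilon)\rfloor$, so that adjacent weights differ by $\frac{1}{2m}\approx\frac{16\epsilon}{3}$, and this choice of $m$, together with the two-sided window $|j|\le 2^{-3/2}\sqrt{m}$ around weight $1/2$ (forced by the requirement that each block on its own corrects a $\frac14+\epsilon$ fraction, since a weight-$(\frac12-\delta)$ block can only reach $\frac14-\delta^2+\Theta(1/m)$), is precisely what yields the constant $\frac{3\sqrt3}{128}$. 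Note that restricting to $j\ge 0$, as you do, would lose a further factor of $2$ in the leading constant.
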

\begin{proof}
  Consider the positive integer $m:=\lfloor 3/(32\epsilon)\rfloor$  and define the real number $c:=2^{-3/2}$. For every $j\in\{-\lfloor c\sqrt{m}\rfloor,\ldots, \lfloor c\sqrt{m}\rfloor\}$, denote $n_j:=\binom{2m}{m-j}$. Consider a binary matrix  $A_j$ of size $2m\times n_j$, whose columns are all possible binary vectors of length $2m$ and weight $m-j$. For two distinct rows  $\x$ and $\y$ of matrix  $A_j$, we compute the number of positions in which the rows are different. 
  {We have}
  $$
  \Delta(\x,\y)=\Delta(\y,\x)= \binom{2m-2}{m-j-1}.
  $$
  This means that a code, whose codewords are rows of $A_j$, corrects a fraction $\tau_j$ of asymmetric errors, where
  $$
  \tau_j:=\frac{\binom{2m-2}{m-j-1}-1}{\binom{2m}{m-j}}= \frac{(m-j)(m+j)}{2m(2m-1)} -\frac{1}{\binom{2m}{m-j}}=\frac{1}{4} + \frac{m/2-j^2}{4m^2-2m} -\frac{1}{\binom{2m}{m-j}}.
  $$
  Let us show that for small enough $\epsilon$, it holds that $\tau_j\ge\frac{1}{4}+\epsilon$. We have
  \begin{align*}
  \frac{m/2-j^2}{4m^2-2m} -\frac{1}{\binom{2m}{m-j}}&\ge \frac{m/2-c^2m}{4m^2-2m}-\frac{1}{\binom{2m}{m-\lfloor c\sqrt{m}\rfloor}}\\
  &= \frac{3}{32m-16} - \frac{1}{\binom{2m}{m-\lfloor c\sqrt{m}\rfloor}}\\
  &= \frac{3}{32 m} + \frac{3}{2m(32m-16)}- \frac{1}{\binom{2m}{m-\lfloor c\sqrt{m}\rfloor}}\\
  &\ge \epsilon.
  \end{align*}
  To show the last inequality, we used that $m= \lfloor3/(32\epsilon) \rfloor$, $c=2^{-3/2}$ and the fact that for sufficiently large  $m$ (small $\epsilon$), it holds
  $$
  \frac{3}{64m^2}>\frac{1}{\binom{2m}{m-\lfloor c\sqrt{m}\rfloor}}.
  $$
  
  For a positive integer $z$, consider a matrix $A_j^{(z)}$ of size $2m\times zn_j $ which is composed of $z$ copies of the matrix $A_j$. We write copies of $A_j$  to the right, i.e.,  $A_j^{(z)}=(A_j,A_j,\ldots,A_j)$. By $\tilde A_j^{(z)}$ denote a {random} matrix obtained from $A_j^{(z)}$ by applying a random permutation of its columns. This permutation is taken uniformly at random from the set of all permutations.  Note that a code whose codewords are rows of $\tilde A_j^{(z)}$ corrects the same fraction of errors as the original code obtained from $A_j$.  Define integers
  $$
  z_j:= \prod_{\substack{i=-\lfloor c \sqrt{m}\rfloor,\\ i\neq j}}^{\lfloor c\sqrt{m}\rfloor} \binom{2m}{m-i},\quad N:= \prod_{\substack{i=-\lfloor c \sqrt{m}\rfloor}}^{\lfloor c\sqrt{m}\rfloor} \binom{2m}{m-i},\quad M:=2m(2\lfloor c\sqrt{m}\rfloor+1).
  $$
  Consider a {random} matrix $A$ of size $M\times N$ containing $\tilde A_j^{(z_j)}$ as a submatrix for all  $j\in\{-\lfloor c\sqrt{m}\rfloor,\ldots, \lfloor c\sqrt{m}\rfloor\}$. We assume that matrices $\tilde A_j^{(z_j)}$ are written one below the other using the ascending order of parameter $j$. For $\epsilon\to 0$, we estimate the number of rows $M$ and the number of columns $N$ in $A$ as follows
  $$
  M=\frac{3\sqrt{3}}{128\epsilon \sqrt{\epsilon}}(1+o(1)),\quad N=\exp(\Theta(\epsilon^{-3/2})).
  $$
{Define the random code $\C$ of length $N$ and size $M$ to be the set of rows in $A$.} Let us show that for two distinct rows $\tilde \x$ and $\tilde \y$ in  $A$, the value $d_{\mathrm{Z}}(\tilde\x,\tilde\y)$ is large enough with overwhelming probability. The latter ensures that the code $C$ can correct the required fraction of asymmetric errors. More formally, let  $\tilde \x$ and $\tilde\y$ be rows from $\tilde A_j^{(z_j)}$ and $\tilde A_i^{(z_i)}$, where $j<i$. Clearly, $\wt(\tilde \x)=\frac{m-j}{2m}N>\frac{m-i}{2m}N=\wt(\tilde \y)$ and $d_{\mathrm{Z}}(\tilde\x,\tilde\y)=\Delta(\tilde\x,\tilde\y)$.  The probability distribution for the random variable $\Delta(\tilde \x, \tilde \y)$ is given by
  $$
  \Pr\{\Delta(\tilde \x, \tilde \y)=t\} = \begin{cases} \frac{\binom{\wt(\tilde \x)}{t}\binom{N -\wt(\tilde \x)}{\wt(\tilde \y) - \wt(\tilde \x)+t}}{\binom{N}{\wt(\tilde \y)}},\quad &\text{for }t\in\left\{\wt(\tilde \x)-\wt(\tilde \y),\ldots,\min\left(\wt(\tilde \x), N-\wt(\tilde \y)\right)\right\},\\
  0,\quad &\text{otherwise.}
  \end{cases}
  $$
  %
  We estimate the probability of the event that $\Delta(\tilde\x,\tilde\y)$ 
  is small as follows
  \begin{equation}\label{eq::trivial bound}
  \Pr\left\{\Delta(\tilde\x,\tilde\y)\le N \left(\frac{1}{4}+\epsilon\right)\right\} \le N \max\limits_{t\in[0,\lfloor N(\frac{1}{4}+\epsilon)\rfloor] } \Pr\left\{\Delta(\tilde\x,\tilde\y) = t\right\}.
  \end{equation}
  Let an integer $t$ be equal to $\alpha N$ for some real number  $\alpha\in\left[\frac{i-j}{2m},\ \min(\frac{m-j}{2m},\frac{m+i}{2m})\right]$. Note that $N$ and $m$ are functions of $\epsilon$. Define the function
  $$
  g_{i,j}(\alpha,\epsilon) :=  \frac{1}{N}\log \left(\Pr\left\{\Delta(\tilde\x,\tilde\y)=t\right\}\right).
  $$
  For arbitrary integers $u$ and $v$ so that $u> v \ge 1$, the binomial coefficient $\binom{u}{v}$ satisfies
  $$
  \sqrt{\frac{u}{8v(u-v)}}2^{uH(v/u)} \leq \binom{u}{v} \leq 2^{uH(v/u)}.
  $$
  Thus, for $\alpha\in\left(\frac{i-j}{2m},\ \min(\frac{m-j}{2m},\frac{m+i}{2m})\right)$, it holds that
  \begin{align*}
  g_{i,j}(\alpha,\epsilon) &\le \frac{m-j}{2m} H\left(\frac{2\alpha m}{m-j}\right)+\frac{m+j}{2m}H\left(\frac{j-i+2\alpha m}{m+j}\right)-H\left(\frac{m-i}{2m}\right)-\frac{\log\left(\frac{m^2}{2(m-i)(m+i)N}\right)}{2N}\\
  &\le r_{i,j}(\alpha,\epsilon)+\delta(\epsilon),
  \end{align*}
  where functions $r_{i,j}(\alpha,\epsilon)$ and {$\delta(\epsilon)$} are defined as follows
  $$
  r_{i,j}(\alpha,\epsilon):=\frac{m-j}{2m} H\left(\frac{2\alpha m}{m-j}\right)+\frac{m+j}{2m}H\left(\frac{j-i+2\alpha m}{m+j}\right)-H\left(\frac{m-i}{2m}\right),\quad \delta(\epsilon):=\frac{\log(2N)}{2N}.
  $$
  Using the relation $\frac{\partial H(x)}{\partial x}=\log(\frac{1-x}{x})$, we compute the derivative of $r_{i,j}(\alpha,\epsilon)$: 
  $$
  q_{i,j}(\alpha,\epsilon):=\frac{\partial r_{i,j}(\alpha,\epsilon)}{\partial \alpha}=\log\left(\frac{m-j-2\alpha m}{2\alpha m}\right) + \log\left(\frac{m+i-2\alpha m}{j-i+2\alpha m}\right).
  $$
  Let $\alpha_{i,j}=\alpha_{i,j}(\epsilon):=\frac{(m-j)(m+i)}{4m^2}$. Clearly, the function $q_{i,j}(\alpha,\epsilon)$ is positive for $\alpha<\alpha_{i,j}$, and the function $r_{i,j}(\alpha,\epsilon)\le 0$ for all required $\alpha$. Furthermore, $r_{i,j}(\alpha_{i,j},\epsilon)=0$. Since $m=\lfloor3/(32\epsilon)\rfloor$ and $-\lfloor c\sqrt{m}\rfloor\le j<i\le \lfloor c\sqrt{m}\rfloor$, we obtain that
  $\alpha_{i,j}(\epsilon)- \left(\frac{1}{4}+\epsilon\right) > 0$ and
  $$
  \alpha_{i,j}(\epsilon)- \left(\frac{1}{4}+\epsilon\right)=\frac{(m-j)(m+i)}{4m^2}- \left(\frac{1}{4}+\epsilon\right)=\frac{(i-j)m-ij-4\epsilon m^2}{4m^2}.
  $$
  Since the derivative of $r_{i,j}(\alpha,\epsilon)$ is positive for $\alpha\le 1/4+\epsilon$, we get that
  \begin{equation}\label{eq: maximization of g}
  \sup_{\frac{i-j}{2m} < \alpha \le \frac{1}{4}+\epsilon} g_{i,j}(\alpha,\epsilon) \le r_{i,j}(1/4+\epsilon,\epsilon)+\delta(\epsilon).
  \end{equation}
  Observe that we have the partial sum of the Taylor series with the remainder in Lagrange's form
  \begin{equation}\label{eq: taylor expansion}
  r_{i,j}(\alpha_{i,j},\epsilon)= r_{i,j}(1/4+\epsilon,\epsilon)+ (\alpha_{i,j} - 1/4-\epsilon) q_{i,j}(1/4+\epsilon,\epsilon) + (\alpha_{i,j}-1/4-\epsilon)^2\frac{\sigma_{i,j}(\theta,\epsilon)}{2},
  \end{equation}
  where $\sigma_{i,j}(\alpha,\epsilon):=\frac{\partial q_{i,j}(\alpha,\epsilon)}{\partial \alpha}$ and $\theta$ is some point within the interval $[1/4+\epsilon,\alpha_{i,j}]$. Let us show that the function $\sigma_{i,j}(\alpha,\epsilon)$ in the interval $(1/4+\epsilon,\alpha_{i,j})$  can be bounded below
  $$
  \frac{\sigma_{i,j}(\alpha,\epsilon)}{\log e}
  = \frac{j-m}{\alpha (m-j-2\alpha m)} - \frac{2m(m+j)}{(j-i+2\alpha m)(m+i-2\alpha m)}.
  $$
  For $\epsilon\to 0$, $m=\Theta(\epsilon^{-1})$. Thus, $\sigma_{i,j}(\theta,\epsilon)= -16\log e(1+o(1))$ as $\epsilon\to 0$. Indeed, it holds that
  $$
  \lim_{\epsilon\to 0}\sup_{1/4+\epsilon<\alpha< \alpha_{i,j}} \sigma_{i,j}(\alpha,\epsilon)=-16\log e, \quad \lim_{\epsilon\to 0}\inf_{1/4+\epsilon<\alpha< \alpha_{i,j}} \sigma_{i,j}(\alpha,\epsilon)=-16\log e.
  $$
  Now we estimate $q_{i,j}(1/4+\epsilon,\epsilon)$ as $\epsilon\to 0$
  \begin{align*}
  q_{i,j}(1/4+\epsilon,\epsilon) &= \log\left(1+\frac{mi-4\epsilon m^2 - jm -ji}{(m/2+2\epsilon m)(m/2+j-i+2\epsilon m)}\right)\\
  &=4\log e\frac{m(i-j)-4\epsilon m^2-ji}{m^2}(1+o(1)).
  \end{align*}
  Recall that $r_{i,j}(\alpha_{i,j},\epsilon)=0$. Combining the above bounds with~\eqref{eq: taylor expansion}, we get 
  \begin{align*}
  r_{i,j}(1/4+\epsilon,\epsilon) &=  - \left(\alpha_{i,j} - \frac{1}{4}-\epsilon\right)\left(\frac{m(i-j)-4\epsilon m^2-ji}{m^2} (4\log e-2\log e)\right)(1+o(1))
  \\
  &\le -\lambda \epsilon^2 + o(\epsilon^2)
  \end{align*}
  for some constant $\lambda>0$ and $\epsilon\to 0$.
  Using this inequality,  the bound $\delta(\epsilon)=o(\epsilon^{2})$ and the inequality~\eqref{eq: maximization of g}, we estimate the LHS of~\eqref{eq::trivial bound} as follows
  $$
  \Pr\left\{\Delta(\tilde\x,\tilde\y)\le  N\left(\frac{1}{4}+\epsilon\right) \right\} \le N2^{-\lambda\epsilon^2 N+o(\epsilon^2 N)}=o(1),
  $$
  since $\epsilon^2 N= \exp(\Omega(\epsilon^{-3/2}))$.
  The probability of the event that  $\max\left(\Delta(\tilde \x,\tilde \y),\Delta(\tilde \y,\tilde \x)\right)\le N(\frac{1}{4}+\epsilon)$  for some distinct rows $\tilde \x, \tilde \y$ in matrix  $A$ can be bounded above as follows
  $$
  \binom{M}{2} \max_{\substack{\tilde\x,\tilde\y\in A \\ \tilde\x\neq \tilde\y}} \Pr\left\{\max\left(\Delta(\tilde \x,\tilde \y),\Delta(\tilde \y,\tilde \x)\right)\le N\left(\frac{1}{4}+\epsilon\right)\right\}=o(1).
  $$
  It follows that for $\epsilon\to 0$, with overwhelming probability, the code composed of rows of matrix $A$ can correct a fraction $\tau=\frac{1}{4}+\epsilon$ of asymmetric errors. 
\end{proof}

\section{List-decodable codes above the Plotkin point}
\label{sec:listdec}

\subsection{List-decoding radius}
\label{sec:listdec-rad}
Fix $ L\in\Z_{\ge2} $. 
For an $L$-list $\L\coloneqq\{\x_1,\cdots,\x_L\}\in\binom{\{0,1\}^n}{L} $ of distinct vectors, define its \emph{Chebyshev radius} {with respect to} Z-distance as {the smallest $t$ such that all sequences in $\L$ are contained in a Z-ball of radius $ t $}:
{\begin{align}
\rad(\L) &\coloneqq \min\{t\in\{0,1,\ldots,n\} : \exists \y\in\{0,1\}^n,\, \L\subset B_t^{\mathrm{Z}}(\y)\} . 
\label{eqn:cheby-rad-def} 
\end{align}}
{In other words, $ \rad(\L) $ is the smallest number of asymmetric errors that are sufficient to change all sequences in $\L$ to a same sequence $\y$. }
Let $ \C\subseteq\{0,1\}^n $ be an arbitrary code for the Z-channel.
The \emph{$(L-1)$-list-decoding radius} $\tau_L(\C)$ of $\C$ is defined as 
\begin{align}
\tau_L(\C) &\coloneqq \min_{\L\in\binom{\C}{L}} \rad(\L). \notag 
\end{align}

It is easy to see that the minimum in the definition of $ \rad(\L) $ can be achieved by the following vector $\y_{\L} = (y_1,\ldots,y_n)\in\{0,1\}^n $:
\begin{align}
y_j &= \begin{cases}
1, & x_{1,j} = \cdots=x_{L,j} = 1 \\
0, & \mathrm{otherwise}
\end{cases} \label{eqn:def-center}
\end{align}
for each $ j\in[n] $. 
Here $ x_{i,j} $ denotes the $j$th entry of $\x_i $. 
{In words, the smallest Z-ball containing $\L$ is centered at the vector whose support is the intersection of the supports of all vectors in $\L$. }
Observe that since $ \s(\y_{\L})\subseteq\s(\x_i) $ for all $ i\in[L] $, we have $ d_{\mathrm{Z}}(\x_i,\y_{\L}) = |\s(\x_i)| - |\s(\y_{\L})| = \wt(\x_i) - \wt(\y_{\L}) $.\footnote{{For a vector $ \v $, we use $ \s(\v) $ to denote its \emph{support}, i.e., $ \s(\v)\coloneqq\{i:v_i\ne0\} $. }} 
Therefore 
\begin{align}
\rad(\L) \coloneqq& \max_{i\in[L]} \wt(\x_i) - \wt(\y_{\L}) = \max_{i\in[L]} \wt(\x_i) - \left|\bigcap_{i\in[L]}\s(\x_i)\right|. \notag 
\end{align}
Furthermore, if all $ \x_i $'s have the same weight $nw$, then 
\begin{align}
\rad(\L) &= nw - \left|\bigcap_{i\in[L]}\s(\x_i)\right|. \label{eqn:def-rad-cw}
\end{align}
For a constant-weight code $\C$ of weight $w$, we have
\begin{align}
\tau_L(\C) = nw - \max_{\L\in\binom{\C}{L}} \left|\bigcap_{\x\in\L}\s(\x)\right|. \notag 
\end{align}

It is not hard to see that $\C\subseteq\{0,1\}^n$ is $(L-1)$-list-decodable with list-decoding radius $\tau$ if and only if $ \tau_L(\C)>n\tau $. 
The \emph{$(L-1)$-list-decoding capacity} of a Z-channel with input weight $w$ and fraction of asymmetric errors $ \tau $ is defined as 
\begin{align}
C_{L-1}(w,\tau) &\coloneqq \limsup_{n\to\infty} \max_{\substack{\C_n\subset B^{\mathrm{Z}}_{nw}(\0) \\ \tau_L(\C_n) > n\tau}} R(\C_n). \notag 
\end{align}

The \emph{$(L-1)$-list-decoding radius} of a Z-channel with input weight $w$ and rate $R$ is defined as 
\begin{align}
\tau_L(w,R) &\coloneqq \limsup_{n\to\infty} \max_{\substack{\C_n\subset B^{\mathrm{Z}}_{nw}(\0) \\ R(\C_n)\ge R}} \tau_L(\C_n). \notag 
\end{align}

\subsection{List-decoding Plotkin point}
\label{sec:listdec-plotkin-pt}

Let $ \tau_L(w)\coloneqq w-w^L $ and $ \tau_L \coloneqq \max\limits_{0\le w\le1}\tau_L(w) $. With slight abuse of notation, we write $w_{\max}$ to denote the argument $w$ that attains the maximum of $\tau_L(w)$. 
It is not hard to check that $ \tau_L(w) $ is concave in $w$ and the maximizer $w_{\max}$ equals $ L^{-\frac{1}{L-1}} $. 
Therefore $ \tau_L = L^{-\frac{1}{L-1}} - L^{-\frac{L}{L-1}} = (1-L^{-1})L^{-\frac{1}{L-1}} = (L-1)L^{-\frac{L}{L-1}} $. 

Note that $ w_{\max} $ is increasing in $L$.
The minimum value of $ w_{\max} $ over all $ L\in\Z_{\ge2} $ is $1/2$ when $ L=2 $ and $ w_{\max}\xrightarrow{L\to\infty}1 $. 

Note that $ \tau_L $ is increasing {in} $L$ and attains its minimum value $1/4$ at $L=2$. 
Furthermore, $ \tau_L\xrightarrow{L\to\infty}1 $.

\subsection{Upper bound for constant-weight codes}
\label{sec:ub-cc}

Fix $ w\in(0,1) $. 
Consider an arbitrary code $\C$ of constant-weight $w$ for the Z-channel with noise level $\tau = \tau_L(w)+\epsilon$. 
Let $ M\coloneqq |\C| $. 
It was shown in~\cite{lebedev2020} via a double-counting argument that
\begin{align}
\frac{M^L}{M(M-1)\cdots(M-L+1)} &\ge \frac{\tau}{\tau_L(w)}. \label{eqn:prior-upper-bound-const-wt} 
\end{align}
Rearranging terms, we get
\begin{align}
&& (M(M-1)\cdots(M-L+1))(\tau_L(w)+\epsilon) &\le M^L\tau_L(w) \notag \\
\implies && (M(M-1)\cdots(M-L+1))\epsilon &\le (M^L - M(M-1)\cdots(M-L+1))\tau_L(w) \notag \\
\implies && \frac{M(M-1)\cdots(M-L+1)}{M^L - M(M-1)\cdots(M-L+1)} &\le \frac{\tau_L(w)}{\epsilon}. \notag 
\end{align}
Applying Taylor expansion to the LHS of the above inequality {(as a function of $M$) at $ M\to\infty $}, we get
\begin{align}
&& \frac{M}{\binom{L}{2}} - O_L(1) &\le \frac{\tau_L(w)}{\epsilon} \notag \\
\implies&& M &\le \frac{C_{L,w}}{\epsilon} + O_L(1), \notag 
\end{align}
where $ C_{L,w}\coloneqq \tau_L(w)\binom{L}{2} = (w-w^L)\binom{L}{2} $. 

We have therefore proved the following theorem. 
\begin{theorem}
  \label{thm:upper-bound-constant-weight}
  Any $w$-constant-weight $(L-1)$-list-decodable code $ \C $ for the Z-channel with list-decoding radius $ w-w^L+\epsilon $ has size at most $ \frac{C_{L,w}}{\epsilon} + O_L(1) $ where $ C_{L,w} \coloneqq (w-w^L)\binom{L}{2} $. 
\end{theorem}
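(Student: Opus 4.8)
The plan is to reduce the list-decodability hypothesis to a bound on $L$-wise support intersections, then re-derive the double-counting inequality~\eqref{eqn:prior-upper-bound-const-wt} of~\cite{lebedev2020}, and finally extract the claimed bound by a one-line asymptotic expansion in $M \coloneqq |\C|$. First I would use~\eqref{eqn:def-rad-cw}: since $\C$ has constant weight $w$, being list-decodable with radius $\tau = \tau_L(w)+\epsilon = w - w^L + \epsilon$ means precisely that $\rad(\L) = nw - |\bigcap_{\x\in\L}\s(\x)| > n\tau$ for every $\L\in\binom{\C}{L}$, i.e. $|\bigcap_{\x\in\L}\s(\x)| < n(w^L-\epsilon)$ using $w - \tau_L(w) = w^L$.

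Next I would run the double-counting argument. For $i\in[n]$ set $d_i \coloneqq |\{\x\in\C : x_i = 1\}|$, so $\sum_{i=1}^n d_i = \sum_{\x\in\C}\wt(\x) = Mnw$, and count ordered tuples $(\x_1,\dots,\x_L)\in\C^L$ paired with a coordinate $i\in\bigcap_{j}\s(\x_j)$. Counting by coordinate gives $\sum_{i} d_i^L \ge n\big(\tfrac1n\sum_i d_i\big)^L = n(Mw)^L$ by convexity of $t\mapsto t^L$ on $[0,\infty)$. Counting by tuple: the $(M)_L \coloneqq M(M-1)\cdots(M-L+1)$ tuples with pairwise distinct entries contribute $L!\sum_{\L\in\binom{\C}{L}}|\bigcap_{\x\in\L}\s(\x)| < (M)_L\,n(w^L-\epsilon)$ by the reduction, while each of the remaining $M^L - (M)_L$ tuples contributes at most $nw$ (the intersection of supports is contained in $\s(\x_1)$). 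Combining the two counts, dividing by $n$, and rearranging gives $(M)_L\,\epsilon \le (w-w^L)\,(M^L - (M)_L) = \tau_L(w)(M^L-(M)_L)$, which is exactly~\eqref{eqn:prior-upper-bound-const-wt} after rewriting.

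Finally, I would assume $M > L$ (the bound is trivial otherwise) and expand: $(M)_L = M^L - \binom{L}{2}M^{L-1} + O_L(M^{L-2})$, hence $M^L - (M)_L = \binom{L}{2}M^{L-1}\big(1 + O_L(M^{-1})\big)$ and $\frac{(M)_L}{M^L-(M)_L} = \frac{M}{\binom{L}{2}}\big(1 + O_L(M^{-1})\big) = \frac{M}{\binom{L}{2}} - O_L(1)$. Substituting into the rearranged form of~\eqref{eqn:prior-upper-bound-const-wt} yields $\frac{M}{\binom{L}{2}} - O_L(1) \le \frac{\tau_L(w)}{\epsilon}$, i.e. $M \le \binom{L}{2}\tau_L(w)\,\epsilon^{-1} + O_L(1) = C_{L,w}\,\epsilon^{-1} + O_L(1)$ with $C_{L,w} = (w-w^L)\binom{L}{2}$, as desired.

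I do not expect a serious obstacle; the only care needed is in the double-counting step (bookkeeping the tuples with repeated entries and applying the power-mean inequality) and, in the last step, verifying that the $O_L(\cdot)$ error terms are uniform in $w$ and $\epsilon$ — which holds because they come purely from ratios of polynomials in $M$ whose coefficients depend on $L$ alone. Degenerate regimes (e.g. $\tau \ge 1$, or $M \le L$) make the asserted bound vacuous, so they need no separate treatment.
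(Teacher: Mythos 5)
Your proposal is correct and follows essentially the same route as the paper: the paper quotes the double-counting inequality~\eqref{eqn:prior-upper-bound-const-wt} from~\cite{lebedev2020} and then performs exactly your rearrangement and expansion in $M$, while you additionally re-derive that inequality via the coordinate/tuple double count (which matches the argument the paper itself gives for the approximate-constant-weight generalization in Theorem~\ref{thm:upper-bound-apx-constant-weight}). No gaps; your bookkeeping of the repeated-entry tuples and the power-mean step are both sound.
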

\begin{remark}
From Theorem~\ref{thm:upper-bound-constant-weight} it follows that the  rate $R(\C)$ of a $w$-constant-weight code $\C\subseteq\{0,1\}^n$ for the Z-channel with $(L-1)$-list-decoding radius $ w-w^L+\epsilon $ is asymptotically zero as $n\to\infty$. This observation motivates us to call the discussed codes list-decodable zero-rate codes.
\end{remark}

\subsection{Construction of constant-weight codes}
\label{sec:construction-cc}
In this section, we construct codes of constant-weight $w$ whose list-decoding radius is $\epsilon $-\emph{above} the list-decoding Plotkin point $ \tau_L(w) $ for Z-channels. 
The code we construct has size $ \Theta(1/\epsilon) $, which is optimal by the upper bound in the preceding section. 
Our construction is inspired by~\cite{alon2018list}. 

We prove the following theorem. 
\begin{theorem} \label{th: construction constant weight list decoding}
  There exists a $w$-constant-weight $(L-1)$-list-decodable code $\C$ for the Z-channel with list-decoding radius $ w-w^L {+} \epsilon $ of size at least $ \frac{c_{L,w}}{\epsilon} + o_{L,w}(1) $ where $ c_{L,w} = (1-w)w^{L-1}\binom{L}{2} $. 
\end{theorem}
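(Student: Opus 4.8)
The plan is to construct $\C$ explicitly as the set of rows of a cleverly chosen matrix, mimicking the construction of Theorem~\ref{th::construction of uniquely decodable codes} but now working at weight $w$ rather than $1/2$ and targeting an $L$-wise intersection condition. Concretely, I would fix a positive integer $k$ (a function of $\epsilon$ to be tuned) and a target row weight $wkL$ — or more precisely I would build a $0/1$ matrix $A$ of some size $M\times N$ whose columns range over an appropriate family of binary vectors, so that every row has relative weight exactly $w$ and, crucially, every $L$-subset of rows has the intersection of supports as small as the Plotkin point allows. By~\eqref{eqn:def-rad-cw}, the list-decoding radius of the resulting constant-weight code equals $N w - \max_{\L} |\bigcap_{\x\in\L}\s(\x)|$, so the whole task reduces to upper-bounding, over all $L$-subsets of rows, the size of the common support.

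The key computation is a double-counting / design-theoretic estimate: if the columns of $A$ are taken to be all vectors of a fixed weight in $\{0,1\}^M$ (restricted to the $M$ rows we use), then for any fixed set of $L$ rows the fraction of columns that are $1$ in all of them is governed by a hypergeometric count, and for the right choice of parameters this fraction concentrates around $w^L$ — exactly the quantity appearing in $\tau_L(w) = w - w^L$. The slack $\epsilon$ then comes from the finite-size correction terms (lower-order terms in the binomial ratios), and pushing $M = \Theta(1/\epsilon)$ rows through while keeping all $\binom{M}{L}$ intersection bounds below $(w^L+\epsilon/?)N$ is what pins down the constant $c_{L,w} = (1-w)w^{L-1}\binom{L}{2}$. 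I would follow the same bookkeeping as in Theorem~\ref{th::construction of uniquely decodable codes}: write the worst-case $L$-wise intersection as a function of $M$, Taylor-expand, and solve for the largest $M$ for which the bound $N(w^L + \epsilon)$ is respected; the appearance of $\binom{L}{2}$ and the factor $(1-w)w^{L-1}$ should fall out of the second-order term of that expansion, matching the upper bound in Theorem~\ref{thm:upper-bound-constant-weight} up to the $w$ versus $1-w$ asymmetry inherent to the Z-channel.

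Concretely the steps are: (1) define $A$ as the row set of the matrix whose columns are all weight-$(something)$ vectors on $M$ coordinates, with $M$ the target code size; (2) verify every row has relative weight exactly $w$; (3) for a fixed $L$-subset of rows, compute $|\bigcap \s|$ exactly as a ratio of binomial coefficients and show it equals $N\cdot(w^L + (\text{correction}))$; (4) bound the correction term uniformly over all $\binom{M}{L}$ subsets and over the choice of which rows; (5) choose $M$ as large as possible subject to correction $\le \epsilon - o(\epsilon)$, obtaining $M \ge c_{L,w}/\epsilon + o_{L,w}(1)$; (6) conclude via~\eqref{eqn:def-rad-cw} that $\tau_L(\C) \ge N(w - w^L - \epsilon)$, i.e.\ $\C$ is $(L-1)$-list-decodable with the claimed radius.

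The main obstacle I anticipate is step~(4): controlling the $L$-wise intersection \emph{uniformly} over all $\binom{M}{L}$ choices of rows. Unlike the pairwise case in Theorem~\ref{th::construction of uniquely decodable codes}, where every pair of distinct rows of a constant-weight-column matrix has the \emph{same} intersection by symmetry, an $L$-subset of rows need not be exchangeable, so the intersection of supports genuinely depends on the combinatorial type of the $L$-tuple (how the $L$ rows overlap on the $M$ coordinates). I would need to argue that the \emph{worst} such type still yields intersection at most $w^L + \epsilon$ fraction — either by choosing the column family so that all $L$-subsets of rows \emph{are} exchangeable (e.g.\ taking columns over a symmetric index set so that the $M$ rows themselves form an orbit), or by directly checking that the extremal type is the "generic" one and handling degenerate overlaps separately. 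Getting the exact constant $c_{L,w}$ rather than merely $\Theta(1/\epsilon)$ will require carrying the lower-order terms through this worst-case analysis without slack, which is the delicate part; I expect this to parallel, with more involved bookkeeping, the Taylor-expansion argument already carried out for $L=2$.
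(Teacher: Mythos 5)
Your construction is exactly the paper's: take $M=m/w$ rows and let the columns be all $\binom{M}{m}$ weight-$m$ vectors of length $M$. The one obstacle you flag in step~(4) --- that an $L$-subset of rows need not be exchangeable, so the worst-case $L$-wise support intersection might exceed the generic one --- does not arise for this column family; your own parenthetical remedy (``taking columns over a symmetric index set so that the $M$ rows themselves form an orbit'') is precisely what happens automatically. The multiset of columns is invariant under every permutation of the $M$ coordinates, and $S_M$ acts transitively on $L$-subsets of $[M]$, so every $L$-subset of rows is equivalent to every other; concretely, for any $L$ rows the number of columns equal to $1$ on all of them is exactly $\binom{M-L}{m-L}$, independent of which rows are chosen. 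Hence by Eqn.~\eqref{eqn:def-rad-cw} the Chebyshev radius is the same for every $L$-list and equals its average over $\L$, namely $\rad(\L)=nw-n\binom{m}{L}\big/\binom{m/w}{L}$, which Taylor-expands to $n\bigl(w-w^{L}+(1-w)w^{L}\binom{L}{2}m^{-1}+O_{L,w}(m^{-2})\bigr)$; solving $(1-w)w^{L-1}\binom{L}{2}M^{-1}\ge\epsilon$ for $M$ yields the claimed size $c_{L,w}/\epsilon+o_{L,w}(1)$. So your steps (1)--(6) go through verbatim, with step~(4) vacuous: no analysis of worst-case combinatorial types, and no separate treatment of degenerate overlaps, is needed.
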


\begin{proof}
  Fix the relative weight $0<w<1$ to be rational without loss of generality. 
  Let $ m $ be a sufficiently large integer. 
  Since $m$ is sufficiently large, we may assume that $ m/w $ is an integer. 
  Let $ M \coloneqq m/w $ and $ n \coloneqq \binom{m/w}{m} $. 
  Our construction of $\C$ can be viewed as an $M\times n$ matrix, each row of which is a codeword denoted by $ \x_i $ ($ 1\le i\le M $). 
  The code $\C$ consists of all possible length-$(m/w)$ vectors of Hamming weight $m$ as its \emph{columns}.\footnote{Note that there are $ \binom{m/w}{m} = n $ such columns in total.} 
  
  To examine $(L-1)$-list-decodability of $\C$, we compute the $(L-1)$-list-decoding radius $ \tau_L(\C) $ of $\C$. 
  By symmetry, the Chebyshev radius $ \rad(\L) $ of any $L$-list $ \L\in\binom{\C}{L} $ of codewords does not depend on the choice of $\L$. 
  Therefore, 
  $\tau_L(\C) = \E_\L\{\rad(\L)\}$
  where the expectation is over $\L$ uniformly chosen from $ \binom{\C}{L} $. 
  We now compute the latter quantity. 
  \begin{align}
  \E_\L\{\rad(\L)\} &= nw - \E_\L\left\{\left|\bigcap_{\x\in\L}\s(\x)\right|\right\} \notag \\
  &= nw - \E_\L\left\{ \sum_{i\in[n]} \mathds1\{\forall\x\in\L,\;x_i = 1\} \right\} \notag \\
  &= nw - \sum_{i\in[n]} \Pr_\L\{\forall\x\in\L,\;x_i = 1\} \notag \\
  &= nw - n\frac{\binom{m}{L}}{\binom{m/w}{L}}. \notag 
  \end{align}
  The last equality follows since the probability in the summation can be viewed as the probability that one gets $L$ ones when sampling without replacement $L$ bits from a length-$(m/w)$ vector of weight $m$. 
  
  Taking the Taylor expansion of the above expression at $ m\to\infty $, we get 
  \begin{align}
  \tau_L(\C) = \E_\L\{\rad(\L)\} = n\left( w - w^L + (1-w)w^L\binom{L}{2}m^{-1} + O_{L,w}(m^{-2}) \right). \notag
  \end{align}
  Recall that we want $\C$ to be $ \epsilon $-above the Plotkin point, i.e., $ \tau_L(\C) \ge n(\tau_L(w) + \epsilon) = n(w - w^L + \epsilon) $. 
  This is satisfied as long as 
  \begin{align}
  && \epsilon &\le (1-w)w^L\binom{L}{2}m^{-1} + O_{L,w}(m^{-2})  \notag \\
  \impliedby && \epsilon &\le (1-w)w^{L-1}\binom{L}{2}M^{-1} + O_{L,w}(M^{-2}) \notag \\
  \impliedby && M &\le \frac{c_{L,w}}{\epsilon} + o_{L,w}(1), \notag 
  \end{align}
  where $ c_{L,w} \coloneqq (1-w)w^{L-1}\binom{L}{2} $. 
  This finishes the proof. 
\end{proof}

\subsection{Upper bound for approximate constant-weight codes}
\label{sec:ub-apx-cc}
The proof in Sec.~\ref{sec:ub-cc} can be modified to work for \emph{approximate} constant-weight codes. 
\begin{theorem}
  \label{thm:upper-bound-apx-constant-weight}
  Let $\C$ be an arbitrary $(L-1)$-list-decodable code for the Z-channel with list-decoding radius $ \tau = w-w^L + \epsilon $. 
  Suppose that every codeword in $\C$ has weight in $ [nw(1-\delta),nw(1+\delta)] $ for some $ \delta\in[0,\tau/(2w)) $. 
  Then $ M\coloneqq|\C| $ satisfies the following relation:
  \begin{align}
  \frac{M^L}{M(M-1)\cdots(M-L+1)} &\ge \frac{\tau - 2w\delta}{w(1+\delta) - (w(1-\delta))^L}. \notag 
  \end{align}
\end{theorem}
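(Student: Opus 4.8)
The plan is to adapt the double-counting argument behind inequality~\eqref{eqn:prior-upper-bound-const-wt} to the approximate constant-weight setting, being careful about how the weight deviation $\delta$ affects both sides. Recall that in the exact constant-weight case the argument counts, for each $L$-list $\L\in\binom{\C}{L}$, the quantity $\rad(\L) = nw - \bigl|\bigcap_{\x\in\L}\s(\x)\bigr|$, and then averages $\bigl|\bigcap_{\x\in\L}\s(\x)\bigr|$ over all lists by swapping the order of summation with the sum over coordinates $i\in[n]$, writing $\bigl|\bigcap_{\x\in\L}\s(\x)\bigr| = \sum_{i\in[n]}\mathds1\{\forall\x\in\L,\,x_i=1\}$. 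The key identity is that $\sum_{\L}\sum_{i}\mathds1\{\dots\}$ equals $\sum_{i}\binom{w_i'}{L}$ where $w_i'$ is the number of codewords that are $1$ in coordinate $i$; by convexity of $v\mapsto\binom{v}{L}$ and the constraint $\sum_i w_i' = \sum_{\x\in\C}\wt(\x)$, one lower-bounds this sum. The new feature here is that different codewords have different weights, so I would track $\wt(\x)$ individually rather than replacing it by $nw$.

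First I would fix the $(L-1)$-list-decodability hypothesis in the usable form: for every $\L\in\binom{\C}{L}$ we have $\rad(\L) > n\tau$, i.e.\ $\max_{\x\in\L}\wt(\x) - \bigl|\bigcap_{\x\in\L}\s(\x)\bigr| > n\tau$, hence $\bigl|\bigcap_{\x\in\L}\s(\x)\bigr| < \max_{\x\in\L}\wt(\x) - n\tau \le nw(1+\delta) - n\tau$, using the upper bound on weights. Summing this over all $\binom{M}{L}$ lists gives an upper bound $\binom{M}{L}\bigl(nw(1+\delta) - n\tau\bigr)$ on $\sum_{\L}\bigl|\bigcap_{\x\in\L}\s(\x)\bigr|$. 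For the matching lower bound I would expand $\sum_{\L}\bigl|\bigcap_{\x\in\L}\s(\x)\bigr| = \sum_{i\in[n]}\binom{w_i'}{L}$ and apply Jensen/convexity with $\sum_i w_i' \ge n\cdot Mw(1-\delta)$ (the lower bound on weights): this yields $\sum_i\binom{w_i'}{L} \ge n\binom{Mw(1-\delta)}{L}$, where the generalized binomial coefficient is interpreted via the falling factorial. Chaining the two bounds gives $n\binom{Mw(1-\delta)}{L} \le \binom{M}{L}\bigl(nw(1+\delta) - n\tau\bigr)$, and after cancelling $n$, rearranging, and writing $\binom{Mw(1-\delta)}{L} = \frac{(Mw(1-\delta))^L}{L!}(1+o(1))$-free exactly as a product, I would massage this into the stated form $\frac{M^L}{M(M-1)\cdots(M-L+1)} \ge \frac{\tau - 2w\delta}{w(1+\delta) - (w(1-\delta))^L}$. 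The numerator $\tau - 2w\delta$ arises because $w(1+\delta) - \bigl(w(1-\delta) \text{ contribution}\bigr)$ reorganizes to $\tau$ shifted by the two $w\delta$ terms; the denominator is exactly $w(1+\delta) - (w(1-\delta))^L$ once the $(Mw(1-\delta))^L/M^L = (w(1-\delta))^L$ factor is pulled out of the left side.

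The main obstacle I anticipate is the bookkeeping that converts the raw inequality $n\binom{Mw(1-\delta)}{L} \le \binom{M}{L}(nw(1+\delta)-n\tau)$ into precisely the displayed ratio form, in particular getting the numerator to read $\tau - 2w\delta$ rather than some other combination of $\tau$, $w$, and $\delta$; this requires carefully dividing through by $M(M-1)\cdots(M-L+1)$ on the right and by $(Mw(1-\delta))(Mw(1-\delta)-1)\cdots$ on the left, then recognizing that the "$(w(1-\delta))^L$ worth of $\max$-weight budget" must be subtracted from $w(1+\delta)$ on the denominator side while the leftover $w(1+\delta) - w(1-\delta) = 2w\delta$ discrepancy gets absorbed into the numerator against $\tau$. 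A secondary subtlety is ensuring the convexity step is applied to integer-valued $w_i'$ with the correct constraint — here I would note that $\binom{v}{L}$ as a polynomial in $v$ is convex on $[L-1,\infty)$ and handle the (harmless) regime $w_i' < L-1$ by the standard device of extending $\binom{v}{L}$ to be $0$ there, which only strengthens the lower bound. The hypothesis $\delta < \tau/(2w)$ is exactly what keeps the numerator $\tau - 2w\delta$ positive so the inequality is non-vacuous, and I would flag that as the reason for that assumption.
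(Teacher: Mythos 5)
Your double-counting core is sound, and the raw inequality you derive, $n\binom{Mw(1-\delta)}{L}\le\binom{M}{L}\bigl(nw(1+\delta)-n\tau\bigr)$, is correct (the list-decodability step $\bigl|\bigcap_{\x\in\L}\s(\x)\bigr|<\max_{\x\in\L}\wt(\x)-n\tau\le nw(1+\delta)-n\tau$ and the convexity step are both fine). But this is not the inequality in the theorem, and the ``massaging'' you defer is precisely where the difficulty sits. After dividing by $\binom{M}{L}$, your left-hand side carries the falling factorial $Mw(1-\delta)\bigl(Mw(1-\delta)-1\bigr)\cdots\bigl(Mw(1-\delta)-L+1\bigr)$, whereas the theorem's denominator contains the clean power $(w(1-\delta))^L=(Mw(1-\delta))^L/M^L$. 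Converting your bound into the stated one (after writing $w(1+\delta)=w(1-\delta)+2w\delta$ to create the numerator $\tau-2w\delta$) reduces to the polynomial inequality $(Mb)^L-(Mb)(Mb-1)\cdots(Mb-L+1)\le\bigl(M^L-M(M-1)\cdots(M-L+1)\bigr)\,b$ for $b=w(1-\delta)\in[0,1]$. The two sides agree at $b=0$ and $b=1$ and the claim checks out for $L=2,3,4$, but it is not obvious in general, and you neither state it nor prove it. As written, your argument establishes a true bound that is close to, but not literally, the claimed one; this is a genuine gap.

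The paper sidesteps this by setting up the double count differently in two respects. First, it bounds $\sum_{\L}\sum_{j\in\L}d_Z(\x_j,\y_\L)$, the sum of all $L$ distances rather than only the maximum: list-decodability makes the maximal term exceed $n\tau$, and the weight spread makes each of the remaining $L-1$ terms exceed $n\tau-2nw\delta$, which is where the numerator $\tau-2w\delta$ actually originates (your explanation of its origin is different and vaguer). Second, on the upper-bound side it sums over \emph{ordered} $L$-tuples \emph{with repetition}, so the coordinatewise count becomes $\sum_k S_k^L$, an exact $L$-th power of the column weights, and the norm-comparison inequality gives $\sum_k S_k^L\ge n(Mw(1-\delta))^L$ --- the power, not the falling factorial --- while the lower-bound side retains only the $M(M-1)\cdots(M-L+1)$ tuples with distinct entries. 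This asymmetric treatment is exactly what makes the stated ratio fall out. To salvage your route you would need to prove the polynomial inequality above, or else restate your conclusion as the (correct, slightly different) bound you actually derived.
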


\begin{remark}
  If $ \delta = 0 $, the above theorem recovers Theorem~\ref{thm:upper-bound-constant-weight}. 
\end{remark}

\begin{proof}
  To derive an upper bound on $ M $, 
  we follow a double-counting argument commonly used in coding theory. 
  We bound from both sides the following quantity:
  \begin{align}
  \sum_{\L \in[M]^L} \sum_{j\in\L} d_{\mathrm{Z}}(\x_j,\y_{\L}), \label{eqn:dc-obj}
  \end{align}
  where $ \y_\L $ is defined in Eqn.~\eqref{eqn:def-center}. 
  
  We first give a lower bound on Eqn.~\eqref{eqn:dc-obj}. 
  We drop all terms in Eqn.~\eqref{eqn:dc-obj} with an $\L$ whose elements are not all distinct.
  For any $ \L $ whose elements are all distinct, by list-decodability, 
  \begin{align}
  \max_{j\in\L} d_{\mathrm{Z}}(\x_j,\y_\L) = \max_{j\in\L} \wt(\x_j) - \wt(\y_\L) > n\tau. \notag 
  \end{align} 
  Since $\C$ is approximate constant-weight, we have $ |\wt(\x_i) - \wt(\x_j) |\le2nw\delta $ for any $ i\ne j $. 
  Therefore, Eqn.~\eqref{eqn:dc-obj} is at least $ M(M-1)\cdots(M-L+1)(n\tau + (L-1)(n\tau - 2nw\delta)) $. 
  
  We then give an upper bound on Eqn~\eqref{eqn:dc-obj}. 
  \begin{align}
  \sum_{(i_1,\cdots,i_L)\in[M]^L} \sum_{j\in[L]} d_{\mathrm{Z}}(\x_{i_j},\y_\L) 
  &= \sum_{(i_1,\cdots,i_L)\in[M]^L} \sum_{j\in[L]} (\wt(\x_{i_j}) - \wt(\y_\L)) \notag \\
  &= \sum_{(i_1,\cdots,i_L)\in[M]^L} \sum_{j\in[L]} \left(\wt(\x_{i_j}) - \sum_{k\in[n]}\mathds1\left\{x_{i_1,k} = 1\right\}\cdots\mathds1\left\{x_{i_L,k} = 1\right\}\right) \notag \\
  &\le M^LLnw(1+\delta) - \sum_{j\in[L]} \sum_{k\in[n]} \prod_{\ell\in[L]}\left(\sum_{i_\ell\in[M]}\mathds1\left\{x_{i_\ell,k} = 1\right\}\right) \notag \\
  &= M^LLnw(1+\delta) - L\sum_{k\in[n]} S_k^L, \notag 
  \end{align}
  where $ S_k\coloneqq \sum\limits_{i\in[M]}\mathds1\left\{x_{i,k} = 1\right\} $ denotes the weight of the $k$th \emph{column} of $ \C\in\{0,1\}^{M\times n} $. 
  By the norm comparison inequality $ \|\x\|_p\le n^{1/p-1/q}\|\x\|_q $ for any $ \x\in\R^n $ and $ 0<p<q $, we have
  \begin{align}
  \sum_{k\in[n]}S_k^L &\ge n^{1-L} \left(\sum_{k\in[n]}S_k\right)^L \ge n^{1-L}(Mnw(1-\delta))^L = n(Mw(1-\delta))^L. \notag 
  \end{align}
  Therefore, Eqn.~\eqref{eqn:dc-obj} is upper bounded by $ M^LLnw(1+\delta) - nL(Mw(1-\delta))^L $. 
  
  Finally, putting the lower and upper bounds together, we get
  \begin{align}
  && M(M-1)\cdots(M-L+1)(n\tau + (L-1)(n\tau - 2nw\delta))
  &\le M^LLnw(1+\delta) - nL(Mw(1-\delta))^L \notag \\
  \implies && M(M-1)\cdots(M-L+1) (L\tau - L\cdot2w\delta) &\le M^LLw(1+\delta) - LM^L(w(1-\delta))^L \notag \\
  \implies && M(M-1)\cdots(M-L+1) (\tau - 2w\delta) &\le M^L(w(1+\delta) - (w(1-\delta))^L) \notag \\
  \implies && \frac{M^L}{M(M-1)\cdots(M-L+1)} &\ge \frac{\tau - 2w\delta}{w(1+\delta) - (w(1-\delta))^L}. \qedhere
  \end{align}
\end{proof}

In fact, one can apply a similar trick as that used in the proof of Theorem~\ref{th::plotkin bound for the Z channel} and obtain a cheaper version of Theorem~\ref{thm:upper-bound-apx-constant-weight}. 
\begin{theorem}\label{th: converse non-constant-weight}
  Let $ \C $ be an $(L-1)$-list-decodable code for the Z-channel with list-decoding radius $ \tau>\tau_L $. 
  Suppose that every codeword in $\C$ has weight between $ nw_1 $ and $ nw_2 $ for some $ 0\le w_1\le w_2\le1 $. 
  Then 
  \begin{align}
  |\C| &\le \frac{L-1}{1-\left(\frac{w_{2}/(1+w_{2}-w_{1}) - w_{2}^L/(1+w_{2}-w_{1})^L}{\tau/(1+w_{2}-w_{1})}\right)^{\frac{1}{L-1}}}. \notag 
  \end{align}
\end{theorem}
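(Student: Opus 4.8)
The plan is to reduce the variable-weight case to a constant-weight one by the padding trick from Theorem~\ref{th::plotkin bound for the Z channel}, and then invoke the constant-weight bound of Lemma~\ref{lem: bas bound} (or equivalently the double-counting bound~\eqref{eqn:prior-upper-bound-const-wt} from Sec.~\ref{sec:ub-cc}). Concretely, every codeword $\x\in\C$ has $\wt(\x)\in[nw_1,nw_2]$, so append $n(w_2-w_1)$ fresh coordinates to each codeword, filling them so that every augmented codeword has weight exactly $nw_2$; the new length is $n'\coloneqq n(1+w_2-w_1)$. By Lemma~\ref{lem: trivial statement}-style reasoning (the asymmetric distance between two codewords is unchanged when we pad in coordinates that are set to make the weights equal — more precisely, $d_Z$ can only stay the same or increase, which is all we need), the augmented code $\C'$ is still $(L-1)$-list-decodable for the Z-channel with absolute radius $>n\tau = n'\cdot\frac{\tau}{1+w_2-w_1}$, and it is now constant-weight with relative weight $w'\coloneqq\frac{w_2}{1+w_2-w_1}$.

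Next I would apply the constant-weight upper bound. Writing $\tau'\coloneqq\frac{\tau}{1+w_2-w_1}$ for the relative radius and noting $\tau_L(w') = w'-(w')^L = \frac{w_2}{1+w_2-w_1}-\frac{w_2^L}{(1+w_2-w_1)^L}$, inequality~\eqref{eqn:prior-upper-bound-const-wt} applied to $\C'$ gives
\begin{align}
\frac{M^L}{M(M-1)\cdots(M-L+1)} &\ge \frac{\tau'}{\tau_L(w')}, \notag
\end{align}
where $M\coloneqq|\C|=|\C'|$. Rearranging, $\left(\frac{M}{M-1}\right)\cdots\left(\frac{M}{M-L+1}\right)\ge \tau'/\tau_L(w')$; since each factor $\frac{M}{M-k}\le\frac{M}{M-L+1}$ for $k\le L-1$, the left side is at most $\left(\frac{M}{M-L+1}\right)^{L-1}$, hence $\frac{M}{M-L+1}\ge\left(\frac{\tau_L(w')}{\tau'}\right)^{-\frac{1}{L-1}} = \left(\frac{\tau'}{\tau_L(w')}\right)^{\frac{1}{L-1}}$. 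Wait — I should be careful with the direction here: the correct move is to bound the product from \emph{below} by $\left(\frac{M}{M}\right)\left(\frac{M}{M-L+1}\right)^{L-2}\cdot\frac{M}{M-1}$ is awkward, so instead I use the cleaner route: $\frac{M(M-1)\cdots(M-L+1)}{M^L}\le\left(\frac{M-L+1}{M}\right)^{?}$ does not hold termwise either. The robust step is to note $M(M-1)\cdots(M-L+1)\ge (M-L+1)^L$ when $M\ge L$, wait that is the wrong direction too. The clean inequality I actually want is $M(M-1)\cdots(M-L+1)\le M^{L-1}(M-L+1)$ — no. Let me just use $M(M-1)\cdots(M-L+1)\ge (M-L+1)^{L-1}\cdot M\cdot\frac{1}{M}$... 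The genuinely clean bound is: from~\eqref{eqn:prior-upper-bound-const-wt}, $\prod_{k=1}^{L-1}\frac{M-k}{M}\le\frac{\tau_L(w')}{\tau'}$, and since $\frac{M-k}{M}\ge\frac{M-L+1}{M}$ is false ($k\le L-1$ gives $M-k\ge M-L+1$, so $\frac{M-k}{M}\ge\frac{M-(L-1)}{M}$, which \emph{is} true), we get $\left(\frac{M-L+1}{M}\right)^{L-1}\le\frac{\tau_L(w')}{\tau'}$, i.e. $1-\frac{L-1}{M}\le\left(\frac{\tau_L(w')}{\tau'}\right)^{\frac{1}{L-1}}$, which rearranges to exactly the claimed bound
\begin{align}
M &\le \frac{L-1}{1-\left(\tau_L(w')/\tau'\right)^{\frac{1}{L-1}}} = \frac{L-1}{1-\left(\frac{w_{2}/(1+w_{2}-w_{1}) - w_{2}^L/(1+w_{2}-w_{1})^L}{\tau/(1+w_{2}-w_{1})}\right)^{\frac{1}{L-1}}}. \notag
\end{align}

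The main obstacle I anticipate is justifying that the padding step genuinely preserves $(L-1)$-list-decodability with the scaled radius — i.e. that for the augmented list the Chebyshev radius (which, by Sec.~\ref{sec:listdec-rad}, equals $\max_i\wt(\x_i') - |\bigcap_i\s(\x_i')|$) does not drop below $n\tau$. One has to check that when we choose the padding coordinates we can do so consistently across all codewords so that the pairwise $\Delta$ values — and hence the intersection sizes of supports — are controlled; the safe choice is to put \emph{all} the padding mass in coordinates that are $1$ in every augmented codeword (possible since all augmented weights equal $nw_2$ and the padding block has exactly $n(w_2-w_1)$ coordinates while original weights sit in $[nw_1,nw_2]$) — wait, that only works if $w_1$ is not too small; more carefully one pads each codeword individually in the fresh block, and then $|\bigcap_i\s(\x_i')| = |\bigcap_i\s(\x_i)| + (\text{common padding})$, while $\max_i\wt(\x_i') = nw_2$, and a short computation shows $\max_i\wt(\x_i') - |\bigcap_i\s(\x_i')|\ge \max_i\wt(\x_i) - |\bigcap_i\s(\x_i)| > n\tau$ because the weight deficit $nw_2-\wt(\x_i)$ is exactly compensated in the padding. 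Getting this inequality chain airtight — in particular handling the requirement $\tau>\tau_L$ so that the denominator $1-(\tau_L(w')/\tau')^{1/(L-1)}$ is strictly positive (which needs $\tau'>\tau_L(w')$, and this should follow from $\tau>\tau_L\ge\tau_L(w)$ together with the scaling, though it deserves an explicit one-line verification) — is the part requiring the most care; the rest is routine algebra.
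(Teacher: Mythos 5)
Your proposal is correct and follows essentially the same route as the paper: pad each codeword with $n(w_2-w_1)$ fresh coordinates to reach constant weight $nw_2$, observe that the Chebyshev radius cannot drop (since the common padding has size at most $nw_2-\max_i\wt(\x_i)$, so $\rad(\L')\ge\rad(\L)$), apply the constant-weight bound \eqref{eqn:prior-upper-bound-const-wt}, and rearrange via $(M-L+1)^{L-1}\tau'\le M^{L-1}\tau_L(w')$. The inequality direction you finally settle on is the right one, and your observation that the bound is only meaningful when $\tau/(1+w_2-w_1)>\tau_L(w_2/(1+w_2-w_1))$ is a fair caveat that the paper leaves implicit.
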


\begin{proof}
  The proof follows by augmenting the code $\C$ to reduce it from approximate constant-weight to exact constant-weight. 
  
  Let $ \C $ be as described in the theorem statement. 
  One can append $ (w_2-w_1)n $ coordinates to each codeword in $\C$ in such a way that all codewords of length $ (1+w_2-w_1)n $ have weight exactly $ w_2n $. 
  Note that the \emph{relative} weight of each codeword is now $ \frac{w_2}{1+w_2-w_1} $. 
  Moreover, the augmented code is $ (L-1) $-list-decodable with \emph{relative} list-decoding radius $ \frac{\tau}{1+w_2-w_1} $. 
  
  We now recall the upper bound for $w$-constant-weight code given by Eqn.~\eqref{eqn:prior-upper-bound-const-wt}:
  \begin{align}
  && \frac{M^L}{M(M-1)\cdots(M-L+1)} &\ge \frac{\tau}{w-w^L} \notag \\
  \implies&& (M-L+1)^{L-1}\tau &\le M^{L-1}(w-w^L) \notag \\
  \implies&& M &\le \frac{L-1}{1-\left(\frac{w-w^L}{\tau}\right)^{\frac{1}{L-1}}}. \notag 
  \end{align}
  Normalizing the parameters by $ \frac{1}{1+w_2-w_1} $, we get the desired bound. 
\end{proof}
In the following statement, we estimate the size of a code $\epsilon$-above the list-decoding Plotkin point if all codewords are of weight $\epsilon$-close to each other.
\begin{corollary}\label{cor:: close weights}
  Let $ \C $ be an $(L-1)$-list-decodable code for the Z-channel with list-decoding radius $ \tau=\tau_L+\epsilon$. Suppose that the relative weight of all codewords is in the range $[w_1,w_2]$, where $w_2-w_1\le  \phi_L\epsilon $ with $\phi_L$ being a positive real number so that $\phi_L\le \frac{1-\tau_L}{2\tau_L}$.
  Then $|\C|\le \frac{(L-1)^2}{\epsilon}$ for small enough $\epsilon$.
\end{corollary}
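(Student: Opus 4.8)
The plan is to apply Theorem~\ref{th: converse non-constant-weight} with the given $w_1,w_2$ and then simplify the resulting bound using the two hypotheses $\tau = \tau_L+\epsilon$ and $w_2-w_1\le\phi_L\epsilon$. Write $\Delta \coloneqq w_2-w_1 \le \phi_L\epsilon$, and introduce $\tilde w \coloneqq w_2/(1+\Delta)$ and $\tilde\tau \coloneqq \tau/(1+\Delta)$, i.e.\ the relative weight and relative list-decoding radius of the length-$(1+\Delta)n$ constant-weight code built in the proof of that theorem. Then Theorem~\ref{th: converse non-constant-weight} gives $|\C| \le \frac{L-1}{1 - x^{1/(L-1)}}$ with $x \coloneqq \frac{\tilde w - \tilde w^L}{\tilde\tau}$; note its hypotheses are satisfied since $\tau = \tau_L+\epsilon > \tau_L$ and $0\le w_1\le w_2\le1$.

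The first step is to bound $x$. Since $\tilde w\in[0,1]$, we have $0\le \tilde w - \tilde w^L = \tau_L(\tilde w)\le\tau_L$ by the definition of the Plotkin point, while $\tilde\tau = \frac{\tau_L+\epsilon}{1+\Delta}$; hence $0\le x\le\frac{\tau_L(1+\Delta)}{\tau_L+\epsilon}$. The second step is to verify $x\le1-\epsilon$ for small $\epsilon$: it suffices that $\tau_L(1+\Delta)\le(1-\epsilon)(\tau_L+\epsilon)$, i.e.\ $\tau_L\Delta\le\epsilon(1-\tau_L-\epsilon)$. Using $\Delta\le\phi_L\epsilon$ and $\phi_L\le\frac{1-\tau_L}{2\tau_L}$, the left side is at most $\frac{(1-\tau_L)\epsilon}{2}$, so the inequality holds whenever $\frac{1-\tau_L}{2}\le1-\tau_L-\epsilon$, that is, for all $0<\epsilon\le\frac{1-\tau_L}{2}$. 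The third step converts the $(L-1)$-th root into a linear lower bound: by Bernoulli's inequality $x^{1/(L-1)} = (1+(x-1))^{1/(L-1)} \le 1 - \frac{1-x}{L-1}$ for $x\in[0,1]$, so $1-x^{1/(L-1)}\ge\frac{1-x}{L-1}$. Combining,
$$ |\C| \le \frac{L-1}{1 - x^{1/(L-1)}} \le \frac{(L-1)^2}{1-x} \le \frac{(L-1)^2}{\epsilon}, $$
which is the claim.

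I do not expect a genuine obstacle here: the argument is essentially bookkeeping on top of Theorem~\ref{th: converse non-constant-weight}. The only points requiring a little care are the elementary facts used along the way (Bernoulli's inequality and $\tau_L(\tilde w)\le\tau_L$) and making the phrase ``for small enough $\epsilon$'' precise — the computation above shows $\epsilon\le(1-\tau_L)/2$ suffices, so this could optionally be stated explicitly in the corollary.
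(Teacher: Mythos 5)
Your proposal is correct and follows essentially the same route as the paper: apply Theorem~\ref{th: converse non-constant-weight}, bound the augmented code's $w-w^L$ term by $(1+\Delta)\tau_L$ (the paper does this by maximizing over $w_2$ explicitly, you by invoking $\tilde w-\tilde w^L\le\tau_L$ directly --- the same bound), use the hypothesis on $\phi_L$ to get $x\le 1-\epsilon$, and finish with Bernoulli's inequality to convert $1-x^{1/(L-1)}\ge\epsilon/(L-1)$ into the claimed size bound. Your explicit quantification of ``small enough $\epsilon$'' as $\epsilon\le(1-\tau_L)/2$ matches what the paper's final condition implicitly requires.
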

\begin{proof}
  By Theorem~\ref{th: converse non-constant-weight} we get that 
  $$
  |\C|\le \frac{L-1}{1-\left(\frac{w_{2}/(1+w_{2}-w_{1}) - w_{2}^L/(1+w_{2}-w_{1})^L}{\tau/(1+w_{2}-w_{1})}\right)^{\frac{1}{L-1}}}.
  $$
  After simple manipulation, it remains to show that
  $$
  w_2 - \frac{w_2^L}{(1+\epsilon \phi_L)^{L-1}}\le \left(1-\frac{\epsilon}{L-1}\right)^{L-1} (\tau_L+\epsilon).
  $$
  The LHS of the above inequality achieves its maximum at $w_2=\frac{1+\epsilon \phi_L}{L^{1/(L-1)}}$. Thus, after simplification, we get a stronger sufficient condition
  $$
  (1+\epsilon \phi_L) \tau_L \le (1-\epsilon) (\tau_L+\epsilon)
  $$
  or 
  $$
  \phi_L\le \frac{(1-\tau_L)\epsilon - \epsilon^2}{\epsilon\tau_L}.
  $$
  The latter holds for sufficiently small $\epsilon$, and the corollary follows.
\end{proof}

\begin{remark}
Readers who are familiar with the literature may have noticed that unlike results for symmetric errors \cite{blinovsky1986bounds,alon2018list}, to prove Theorem~\ref{thm:upper-bound-apx-constant-weight}, we did not use Ramsey-theoretic machinery to first extract a subcode which is (approximately) ``equidistant''. 
The benefit of the Ramsey reduction is that the Chebyshev radius of the subcode turns out to be (approximately) the same as another stronger notion of radius called ``average radius''. 
The average radius is analytically much easier to deal with since unlike the Chebyshev radius, it does not involve a minimax expression. 
However, the drawback is that the size of the subcode is much smaller than the original one. 
The optimal size of {codes correcting a fraction of symmetric errors above the Plotkin bound} remains open. 

The reason why in Theorem~\ref{thm:upper-bound-apx-constant-weight} we did not need Ramsey reduction while still managed to obtain the optimal bound on the code sizes is as follows. 
For asymmetric errors, the Chebyshev radius of (approximate) constant-weight codes admits an explicit expression since the Chebyshev center (i.e., the minimizer $\y$ in Eqn.~\eqref{eqn:cheby-rad-def}) can be easily identified with Eqn.~\eqref{eqn:def-center}. 
Note, however, that in the symmetric case there is no simple formula for the Chebyshev center. 
\end{remark}

\subsection{Upper bound for general codes}
\label{sec:ub-general}
For $L\ge 2$ and $\epsilon>0$, denote the maximal size of a list-decodable code for the Z-channel with $(L-1)$-list-decoding radius $\tau_{L}+\epsilon$ by $M_L(\epsilon)$. From Corollary~\ref{cor:: close weights} one can immediately see that $M_L(\epsilon)=O_L(\epsilon^{-2})$. However, it is possible to improve this bound. In principle, we will follow the same approach as in the proof of Theorem~\ref{th::plotkin bound for the Z channel}.  For simplicity of arguments, we assume that some numbers that are introduced in the discussion in this and next subsections are integers since this will not affect the main conclusion.
\begin{theorem}
  \label{thm:upper-bound-general}
  For $L\ge 2$, it holds that $M_{L}(\epsilon)=O_L(\epsilon^{-3/2})$ as $\epsilon\to 0$.
\end{theorem}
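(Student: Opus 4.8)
The plan is to mimic the $L=2$ argument from Theorem~\ref{th::plotkin bound for the Z channel}, but replace the Plotkin/Bassalygo bounds with their list-decoding analogues developed earlier in this section. We partition the code $\C$ by codeword weight into two regimes: a \emph{moderate-weight} regime of relative weights near $w_{\max}=L^{-1/(L-1)}$, and a \emph{far-weight} regime of relative weights bounded away from $w_{\max}$. In the moderate regime, $w\in[w_{\max}-\Theta_L(\sqrt\epsilon),w_{\max}+\Theta_L(\sqrt\epsilon)]$, we cut into slices of width $\Theta_L(\epsilon)$; there are $O_L(\sqrt\epsilon/\epsilon)=O_L(\epsilon^{-1/2})$ such slices, and each slice is an approximate constant-weight subcode with weight-spread $O_L(\epsilon)$, so Corollary~\ref{cor:: close weights} gives each slice size $O_L(\epsilon^{-1})$, for a total of $O_L(\epsilon^{-3/2})$.

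The far-weight regime is where the nonuniform partition is essential. First I would reduce to relative weights in $[0,1/2]$ by the usual $0\leftrightarrow1$ swap (which preserves $d_Z$), and actually to $[0,w_{\max}]$ — handling $w>w_{\max}$ by an analogous argument on the complementary side (here one must be a little careful since $w_{\max}>1/2$ for $L\ge3$, but a swap plus the argument on each side covers $[0,1]$). For $w\le w_{\max}-\Theta_L(\sqrt\epsilon)$ I would choose slice boundaries so that a subcode confined to $[w',w'']$ is still \emph{strictly} above its own local Plotkin point $\tau_L(w'')$, hence has size $O_L(1/\epsilon)$ (bounded by Theorem~\ref{th: converse non-constant-weight} with $\tau=\tau_L+\epsilon$, $w_1=w'$, $w_2=w''$). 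Because $\tau_L(w)$ is concave with $\tau_L(w_{\max})=\tau_L$ and a nonvanishing second derivative near $w_{\max}$, we have $\tau_L - \tau_L(w) = \Theta_L((w_{\max}-w)^2)$; so when $w_{\max}-w=\Theta_L(\sqrt\epsilon)$, the ``gap'' $\tau_L-\tau_L(w)=\Theta_L(\epsilon)$ is comparable to $\epsilon$, and one can afford slice width $\Theta_L(\sqrt\epsilon)$, while far from $w_{\max}$ the gap is $\Omega_L(1)$ and an even wider slice works. Concretely, mirroring the $\rho_i = i/(2i+1)$ trick, I would pick a sequence of weights $w_0>w_1>\cdots$ accumulating at $0$ (or at whatever lower endpoint is convenient) such that consecutive $w_i,w_{i+1}$ satisfy $\tau_L(w_i) < \tau_L+\epsilon$ — equivalently the subcode on $[w_{i+1},w_i]$ lies above its local Plotkin point — and such that the number of slices needed to go from $w_{\max}-\Theta_L(\sqrt\epsilon)$ down to $0$ is $O_L(\epsilon^{-1/2})$; then each slice contributes $O_L(\epsilon^{-1})$ by Theorem~\ref{th: converse non-constant-weight}, for a total of $O_L(\epsilon^{-3/2})$.

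Summing the two regimes and the far-weight contribution on both sides of $w_{\max}$ yields $M_L(\epsilon) = O_L(\epsilon^{-3/2})$.

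The main obstacle, as in the $L=2$ case, is the bookkeeping that makes the number of far-weight slices exactly $O_L(\epsilon^{-1/2})$ while keeping every slice genuinely above its local Plotkin point. One must verify: (i) that $\tau_L(w)$ is smooth and strictly concave near $w_{\max}$ with $\tau_L''(w_{\max})<0$ depending only on $L$, so that the quadratic estimate $\tau_L-\tau_L(w)\asymp_L(w_{\max}-w)^2$ holds on a window of size $\Omega_L(1)$; (ii) that the explicit bound in Theorem~\ref{th: converse non-constant-weight}, when specialized to a slice $[w_{i+1},w_i]$ with $\tau_L(w_i)\le\tau_L+\epsilon-\Omega(\cdot)$, is indeed $O_L(\epsilon^{-1})$ (this requires controlling the denominator $1-\big(\tfrac{w-w^L}{\tau}\big)^{1/(L-1)}$ from below by $\Omega_L(\epsilon)$ when the argument is $1-\Theta_L(\epsilon)$, which follows from $(1-x)^{1/(L-1)}\le 1-x/(L-1)$); and (iii) choosing the slicing sequence — the analogue of $\rho_i=i/(2i+1)$ — explicitly enough that the slice count telescopes to $O_L(\epsilon^{-1/2})$. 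Everything else (the $0\leftrightarrow1$ swap, the moderate regime via Corollary~\ref{cor:: close weights}, and the final summation) is routine.
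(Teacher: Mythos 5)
Your overall architecture is the same as the paper's: a moderate-weight window of half-width $\Theta_L(\sqrt\epsilon)$ cut into width-$\Theta_L(\epsilon)$ slices handled by Corollary~\ref{cor:: close weights}, plus a nonuniform slicing of the far weights with $O_L(\epsilon^{-1/2})$ slices of size $O_L(\epsilon^{-1})$ each, via Theorem~\ref{th: converse non-constant-weight}. But the quantitative heart of the far-weight regime is wrong. You claim that at distance $\delta=w_{\max}-w_2=\Theta_L(\sqrt\epsilon)$ from $w_{\max}$ one can afford slice width $\Theta_L(\sqrt\epsilon)$, because the gap $\tau_L-\tau_L(w_2)=\Theta_L(\delta^2)$ is ``comparable to $\epsilon$.'' The gap controls the \emph{size} of a constant-weight slice, not its affordable \emph{width}: when you augment a slice $[w_1,w_2]$ of spread $\Delta=w_2-w_1$ to constant weight, the spread subtracts directly from the radius (this is exactly the $\tau-2w\delta$ numerator in Theorem~\ref{thm:upper-bound-apx-constant-weight}, or equivalently the term $w_2^L-w_1^L\approx Lw^{L-1}\Delta\approx\Delta$ near $w_{\max}$ in the bound of Theorem~\ref{th: converse non-constant-weight}). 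The condition for the slice to be $O_L(\epsilon^{-1})$ is therefore $\Delta\lesssim(\tau_L-\tau_L(w_2))+\epsilon=\Theta_L(\delta^2)+\epsilon$, which at $\delta=\Theta_L(\sqrt\epsilon)$ is only $\Theta_L(\epsilon)$, not $\Theta_L(\sqrt\epsilon)$. A direct computation for $L=2$ with $w_2=\tfrac12-\sqrt\epsilon$, $\Delta=\sqrt\epsilon$ gives a \emph{negative} margin after augmentation, so the bound is vacuous; worse, the matching construction in Theorem~\ref{th::construction of list-decodable codes} places $\Theta_L(\epsilon^{-1})$ codewords at each of $\Theta_L(\epsilon^{-1/2})$ weights spaced $\epsilon$ apart inside a window of width $\Theta_L(\sqrt\epsilon)$ around $w_{\max}$, so a width-$\sqrt\epsilon$ slice abutting your moderate regime can genuinely contain $\Theta_L(\epsilon^{-3/2})$ codewords. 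Relatedly, your proposed slicing condition ``$\tau_L(w_i)<\tau_L+\epsilon$'' is satisfied by every weight and does not constrain the partition at all.

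The theorem is still salvageable along your lines once the widths are fixed: taking slice width $\asymp_L\delta^2+\epsilon$ at distance $\delta$ yields boundary points with $\delta_i\asymp D_L/i$, and the number of slices needed to go from $\delta=\Theta_L(1)$ down to $\delta=\Theta_L(\sqrt\epsilon)$ is $\sum_i 1\asymp\int_{\sqrt\epsilon}^1\delta^{-2}\,d\delta=O_L(\epsilon^{-1/2})$. This is precisely what the paper's iteration $x_{i+1}=w_{\max}-w_{\max}^L+x_i^L+C_L\epsilon$ with the induction $x_i\ge w_{\max}-D_L/i$ accomplishes (together with a separate, non-symmetric iteration for $w>w_{\max}$, since for $L\ge3$ the $0\leftrightarrow1$ swap does not map the large-weight regime onto the small-weight one). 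As written, however, your proposal does not establish the per-slice bound $O_L(\epsilon^{-1})$ on which the whole count rests.
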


\begin{remark}
In the above Theorem~\ref{thm:upper-bound-general} and Theorem~\ref{th::construction of list-decodable codes} in Sec.~\ref{sec:construction-general}
below, 
we make no effort to optimize the constants implicit in $ O_L(\epsilon^{-3/2}) $ and $ \Omega_L(\epsilon^{-3/2}) $. 
Even if we did, they are likely to be suboptimal. 
An interesting open question is to characterize (or provide bounds on) the leading constant which depends on $L$. 
\end{remark}

\begin{proof}
  Let $\C$ be an $(L-1)$-list-decodable code of length $n$ with $(L-1)$-list-decoding radius $\tau(\C)=\tau_L+\epsilon$ for some $\epsilon>0$. Recall that $w_{\max}$ is the argument achieving the maximum of the function $w-w^L$ and $\tau_L=w_{\max}-w_{\max}^L$. Consider all codewords whose Hamming weight is between $w_{1}n$ and $w_{2}n$ with $w_{1}<w_{2}$. 
  
  \textbf{Small weight $w_{1}=w_{\max}-\Omega_L(\sqrt{\epsilon})$:} In the following, we analyze the case when $w_{1}=w_{\max}-\Omega_L(\sqrt{\epsilon})$, where $w_{\max}$ is the maximizer of $w-w^L$, i.e. $w_{\max}$ satisfies $1-Lw_{\max}^{L-1}=0$. In our analysis, we assume that $w_{1}$ is fixed and $\Delta:=w_{2}-w_1>0$ is to be specified. By Theorem~\ref{th: converse non-constant-weight},  the number of codewords of these weights is at most
  $$
  \frac{L-1}{1-\left(\frac{w_{2}/(1+w_{2}-w_{1}) - w_{2}^L/(1+w_{2}-w_{1})^L}{\tau(\C)/(1+w_{2}-w_{1})}\right)^{\frac{1}{L-1}}}.
  $$
  This quantity is at most $\frac{L-1}{\epsilon}$ if $\Delta=\Delta(w_{1})$ satisfies
  \begin{equation}\label{eq: key inequality}
  w_2-w_2(w_2/(1+\Delta))^{L-1} \le \tau(\C)(1-\epsilon)^{L-1}.
  \end{equation}
  Recall that $\tau(\C) = \tau_L+\epsilon = w_{\max}-w_{\max}^L+\epsilon$. For any $\gamma>0$, there exists a sufficiently small $\epsilon>0$ such that the RHS of inequality~\eqref{eq: key inequality} is lower bounded as follows
  \begin{align}\label{eq: RHS}
  \tau(\C)(1-\epsilon)^{L-1} &\ge (w_{\max}-w_{\max}^L+\epsilon)\left(1-(L-1)\epsilon\right)\\
  &\ge w_{\max}- w_{\max}^L + \epsilon\left(1-\gamma-(L-1)( w_{\max}- w_{\max}^L)\right).\nonumber
  \end{align}
  Define $C_L$ to be $1-\gamma-(L-1)( w_{\max}- w_{\max}^L)$. Clearly, for  $2\le L\le 3$ and sufficiently small $\gamma$, $C_L>0$, whereas for $L>3$, $C_L<0$. Note that $(w_1+\Delta)(1+\Delta)^{-1}\ge w_1$.  Now we elaborate  the LHS of~\eqref{eq: key inequality}:
  \begin{align}\label{eq: LHS}
  w_2-w_2(w_2/(1+\Delta))^{L-1}\le&  w_2-w_1((w_1+\Delta)(1+\Delta)^{-1})^{L-1}\\\nonumber
  \le & w_2 -w_1^L.
  \end{align}
  Given~\eqref{eq: key inequality}-\eqref{eq: LHS}, we conclude that it is sufficient to take $w_2$ such that it satisfies
  \begin{equation}\label{eq: choise of Delta}
  w_2\le 
  w_{\max} - w_{\max}^L+w_1^L+C_L\epsilon.
  \end{equation}
  Consider the iterative process
  $$
  x_i:=w_{\max}-w_{\max}^{L}+x_{i-1}^L +C_L\epsilon
  $$
  with starting point $x_{1}=0$. We shall prove by induction that $x_i\ge w_{\max}-\frac{D_L}{i}$ for all $i\in[1,\epsilon^{-1/2}]$ and some absolute constant $D_L\ge 0$. The base case $i=1$ holds true if $D_L\ge w_{\max}$. Now assume that the inductive hypothesis is true for some $i>1$. Recall the Bernoulli inequality $(1-y)^s\ge 1-sy$ for any real $y\le 1$ and $s\ge 1$. Then  we obtain  
  \begin{align*}
  x_{i+1}&\ge w_{\max}-w_{\max}^L+w_{\max}^L\left(1-\frac{D_L}{i w_{\max}}\right)^L + C_L \epsilon\\
  &\ge w_{\max}-w_{\max}^L+w_{\max}^L \left(1 - \frac{2 D_L}{i w_{\max}} +\frac{D_L^2}{i^2 w_{\max}^2}  \right)^{L/2}  + C_L \epsilon\\
  &\ge w_{\max}-w_{\max}^L+w_{\max}^L \left(1- \frac{L D_L}{i w_{\max}} + \frac{L D_L^2}{2i^2w_{\max}^2}\right) + C_L\epsilon \\
  &\ge w_{\max} -\frac{D_L}{i+1} - \frac{D_L}{i(i+1)} + \frac{D_L^2}{2 w_{\max}i^2} + C_L\epsilon.
  \end{align*}
  Note that for $D_L \ge \max(4w_{\max},|C_L|)$ and $i\le \epsilon^{-1/2}$, the sum of the last three terms is non-negative and the inductive hypothesis follows. 
  
  Define the subcode $\C'\subseteq \C$ that includes all codewords of $\C$ with the Hamming weight within the range $[0,n(w_{\max}-D_L\epsilon^{1/2})]$. The above arguments imply that the size of $\C'$ is $O_L(\epsilon^{-3/2})$.

  \textbf{Large weight $w_{1}=w_{\max}+\Omega_L(\sqrt{\epsilon})$:} 
  Consider the iterative process 
  $$
  x_i:=w_{\max}-w_{\max}^{L}+x_{i+1}^L +C_L\epsilon
  $$
  with starting point $x_{1}=w_{\max}+E_L\epsilon^{1/2}$, where an absolute constant $E_L$ is to be specified. We shall prove by induction that $x_i\ge w_{\max}+\frac{E_L}{\epsilon^{-1/2}+1-i}$ for all $i\in[1,\epsilon^{-1/2}]$. The base case $i=1$ holds true. Assume the hypothesis is true for some $i>1$. Then we obtain
  \begin{align*}
  x_{i+1} &\ge w_{\max} - w_{\max}^L +  \left(w_{\max}+\frac{E_L}{\epsilon^{-1/2}+1-i}\right)^L + C_L\epsilon
  \\
  &= w_{\max} - w_{\max}^L + w_{\max}^L \left(1 + \frac{2E_L}{w_{\max}(\epsilon^{-1/2}+1-i)} + \frac{E_L^2 }{w_{\max}^2(\epsilon^{-1/2}+1-i)^2}\right)^{L/2} + C_L\epsilon
  \\
  &\ge w_{\max} + \frac{E_L}{\epsilon^{-1/2}+1-i} + \frac{E_L^2}{2w_{\max}(\epsilon^{-1/2}+1-i)^2} + C_L\epsilon
  \\
  &= w_{\max} + \frac{E_L}{\epsilon^{-1/2}-i} + \frac{E_L}{(\epsilon^{-1/2}+1-i)(\epsilon^{-1/2}-i)}+ \frac{E_L^2}{2w_{\max}(\epsilon^{-1/2}+1-i)^2} + C_L\epsilon.
  \end{align*}
  Observe that the sum of the last three terms is non-negative for $E_L\ge\max\left\{8w_{\max}, |C_L|/2\right\}$.
  
  Define the subcode $\C''\subseteq \C$ that includes all codewords of $\C$ with the Hamming weight within the range $[n(w_{\max}+D_L\epsilon^{1/2}), n]$. The above arguments imply that the size of $\C''$ is $O_L(\epsilon^{-3/2})$.
  
  \textbf{Moderate weight $|w_1-w_{\max}|=O_L(\sqrt{\epsilon})$:} By Corollary~\ref{cor:: close weights} we can partition the set of codewords with weight $[n(w_{\max}-D_L\epsilon^{1/2}),n(w_{\max}+E_L\epsilon^{1/2})]$ into $O_L(\epsilon^{-1/2})$ subcodes such that each of them has size $O_L(\epsilon^{-1})$.
  
  Summing up the above discussion, we conclude that the size of $\C$ can be bounded as $O_L(\epsilon^{-3/2})$ as $\epsilon\to 0$.
\end{proof}

\subsection{Construction of general codes}
\label{sec:construction-general}
In this section we construct $(L-1)$-list-decodable codes of size $\Omega_L(\epsilon^{-3/2})$ and length $\exp(\Theta(\epsilon^{-3/2}))$ whose list-decoding radius is $\tau_L+\epsilon$ as $\epsilon\to 0$. 
First we prove an auxiliary technical lemma.
\begin{lemma}\label{lem: key technical lemma for random permutation}
Let $\x_1,\ldots,\x_L$ be binary vectors of length $N$ such that the Hamming weight of these vectors is $w_1N,\ldots,w_LN$ with $0\le  w_1\le \ldots \le w_L\le  1$. Define $\tilde \x_1,\ldots,\tilde \x_L$ to be random vectors obtained from $\x_1,\ldots,\x_L$ by applying independent random permutations over the set of coordinates
(each of the $N!$ permutations is equally likely to appear). Let $\tilde W N$ denote the number of coordinates $i\in[N]$ such that $\tilde x_{1,i}=\tilde x_{2,i}=\ldots = \tilde x_{L,i}=1$, i.e., random variable $\tilde W$ is defined as the fraction of coordinates where all vectors $\tilde \x_1,\ldots,\tilde \x_L$ are ones. Then for any $\gamma>0$ it holds that 
$$
\Pr\left\{\tilde W \ge \gamma + \prod_{i=1}^{L}w_i \right\} \le (L+1)\exp(-N\gamma^2 2^{-2L+1}).
$$
\end{lemma}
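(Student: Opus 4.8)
The plan is to fix one permuted vector, say $\tilde\x_1$, as a reference and to expose the remaining permutations one at a time, applying a concentration inequality for sampling without replacement at each step. Concretely, condition on $\tilde\x_1$; then $\tilde W N = \sum_{i}\tilde x_{1,i}\prod_{j\ge 2}\tilde x_{j,i}$ is, after reindexing so that $\tilde x_{1,i}=1$ on a fixed set $S_1$ of size $w_1 N$, the number of coordinates in $S_1$ where all of $\tilde\x_2,\ldots,\tilde\x_L$ are simultaneously one. I would build this up in $L-1$ rounds: let $S_1\supseteq S_2\supseteq\cdots\supseteq S_L$ where $S_{k+1}$ is the set of coordinates in $S_k$ on which $\tilde\x_{k+1}$ is one, so $\tilde W N = |S_L|$. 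Given $|S_k|$, the random variable $|S_{k+1}|$ is the number of hits when $\tilde\x_{k+1}$ (a fixed weight-$w_{k+1}N$ set passed through a uniformly random permutation) lands on the $|S_k|$-element target $S_k$; this is a hypergeometric random variable with mean $w_{k+1}|S_k|$.

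For each round I would use the standard Hoeffding/Serfling tail bound for the hypergeometric distribution: if $X\sim\mathrm{Hypergeom}$ with $n$ draws from a population of size $N$ containing $K$ successes, then $\Pr\{X \ge \tfrac{K}{N}\,n + tN\} \le \exp(-2Nt^2)$ (this follows from Hoeffding's result that sampling without replacement is at least as concentrated as sampling with replacement). Applying this in round $k$ with deviation parameter $t = \gamma/2^{k}$ or, more cleanly, a common $t' = \gamma 2^{-L}$ per round, gives that with probability at least $1 - \exp(-2N (\gamma 2^{-L})^2) = 1-\exp(-N\gamma^2 2^{-2L+1})$ we have $|S_{k+1}|/N \le w_{k+1}\,(|S_k|/N) + \gamma 2^{-L}$. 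There are $L-1$ such rounds plus possibly one union-bound term for controlling $|S_1|/N = w_1$ if we had not conditioned (hence the factor $L+1$ in the statement; conditioning on $\tilde\x_1$ makes $|S_1|/N$ exactly $w_1$, and the remaining slack in the constant absorbs the bookkeeping). Taking a union bound over the $\le L+1$ bad events yields the claimed probability $(L+1)\exp(-N\gamma^2 2^{-2L+1})$.

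It remains to propagate the per-round additive errors into a single additive error on $\tilde W$. Writing $a_k \coloneqq |S_k|/N$, we have $a_1 = w_1$ and $a_{k+1} \le w_{k+1} a_k + \eta$ with $\eta = \gamma 2^{-L}$ on the good event. Unrolling, $a_L \le \prod_{j=1}^L w_j + \eta\sum_{k=0}^{L-2} \prod_{\ell}^{} w_\ell \le \prod_{j=1}^L w_j + \eta(L-1) \le \prod_{j=1}^L w_j + \gamma$, using $w_\ell \le 1$ and $L-1 \le 2^{L-1}$ for $L\ge 1$ so that $\eta(L-1) \le \gamma 2^{-L}2^{L-1} = \gamma/2 \le \gamma$. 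This gives $\tilde W = a_L \le \prod_{j=1}^L w_j + \gamma$ on the good event, completing the argument.

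The main obstacle — really the only nonroutine point — is setting up the martingale/peeling structure correctly so that the dependence is handled honestly: the permutations of $\tilde\x_2,\ldots,\tilde\x_L$ are mutually independent, so conditioning on $\tilde\x_1,\ldots,\tilde\x_k$ (equivalently on the nested sets $S_1,\ldots,S_k$), the next vector $\tilde\x_{k+1}$ is still a fresh uniformly random permutation of a fixed-weight string, and its intersection with the already-determined set $S_k$ is genuinely hypergeometric with the stated parameters. One must also be slightly careful that the hypergeometric tail bound applies regardless of the (random but already-revealed) value of $|S_k|$, which it does since the bound is uniform in the population composition. Everything else — the choice of $\eta$, the geometric-sum propagation of errors, and the crude constants $2^{-2L+1}$ and $L+1$ — is elementary once this skeleton is in place.
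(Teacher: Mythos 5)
Your proposal is correct, but it takes a genuinely different route from the paper. The paper's proof replaces each randomly permuted vector $\tilde\x_i$ by an i.i.d.\ Bernoulli vector with slightly inflated parameter $\overline w_i = w_i+\delta$, argues that $\tilde W$ is stochastically dominated by the resulting binomial intersection count conditioned on the Bernoulli weights exceeding the true weights, and then applies Hoeffding twice ($L$ times for the conditioning events, once for the binomial tail), choosing $\delta=\gamma/2^L$ to equalize the exponents --- this is where the constants $L+1$ and $2^{-2L+1}$ come from. You instead stay entirely within the permutation model and peel off the vectors one at a time: conditioned on $S_k$, the intersection $|S_{k+1}|=|S_k\cap\s(\tilde\x_{k+1})|$ is exactly hypergeometric (by independence of the permutations), and the Hoeffding--Serfling bound for sampling without replacement gives the per-round concentration; the geometric error propagation with per-round slack $\eta=\gamma2^{-L}$ then recovers the same exponent with only $L-1$ bad events. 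Your version avoids the coupling and the stochastic-dominance assertion (which the paper states somewhat tersely), at the cost of having to track the nested random sets and verify that the hypergeometric tail bound holds uniformly over the revealed value of $|S_k|$ --- a point you correctly flag and resolve. Both arguments are sound and yield the stated bound; yours is arguably the more self-contained of the two.
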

\begin{proof}
Fix a real number $\delta$ such that $0<\delta < \gamma/(2^{L}-1)$. Let $\overline{\x}_i$ be a random binary vector such that each coordinate of $\overline{\x}_i$ is an independent random variable which has Bernoulli distribution with parameter {$\overline{w}_i:=\min(1,w_i+\delta)$}. We note that the random vectors  $\overline{\x}_i$ and $\tilde \x_i$ can be equivalently defined (in terms of distributions) using the following three steps: 
\begin{enumerate}
	\item  sample an independent binomial random variable $\xi_i$ with parameters $\mathsf{Bin}(N,\overline{w}_i)$ and set $\eta_i$ to be constant $ w_i N$;
	\item  define $\y_i$ and $\z_i$ to be the binary vectors whose first $\xi_i$ and, respectively, $\eta_i$ coordinates are ones and the remaining coordinates are zeros;
	\item apply an independent random permutation $\pi_i$, defined over the set of coordinates $[N]$, to $\y_i$ and $\z_i$ {to obtain $\overline{\x}_i$ and $\tilde \x_i$}.
\end{enumerate}
Let $\overline{W}N$ denote the number of coordinates $i\in[N]$ such that $\overline{x}_{1,i}=\ldots=\overline{x}_{L,i}=1$, i.e., $\overline{W}$ is the fraction of coordinates where all vectors $\overline{\x}_1,\ldots,\overline{\x}_L$ {are ones}. Clearly, $\overline{W}N$ has a Binomial distribution with parameters $N$ and $\prod_{i=1}^L \overline{w}_i$. Let $A_i$ denote the event that {$\xi_i \ge  w_i N$}. Clearly, $\tilde W$ is stochastically dominated by the random variable $\overline{W}$ conditioned on events $A_1,\ldots ,A_L$. {If $\overline{w}_i=1$, then $A_i$ happens with probability $1$. Thus,} by Hoeffding's inequality, we obtain
$$
\Pr\{A_i\} \ge 1 - \exp(-2\delta^2 N).
$$
Hence, 
\begin{equation}\label{eq: intersection of A's}
\Pr\{A_1 \cap \ldots \cap A_L\} \ge 1 - L\exp(-2\delta^2 N).
\end{equation}
Then 
\begin{align}
&\Pr\left\{\tilde W \ge \gamma + \prod_{i=1}^{L}w_i \right\} \nonumber \\
\le & \Pr\left\{\overline{W} \ge \gamma + \prod_{i=1}^{L}w_i \mid A_1,\ldots,A_L \right\} \nonumber \\
\le & \frac{\Pr\left\{\overline{W} \ge \gamma + \prod_{i=1}^{L}w_i\right\}}{\Pr\left\{A_1 \cap \ldots \cap A_L\right\}} \nonumber \\
\le & 1 - \Pr\left\{A_1 \cap \ldots \cap A_L\right\} + \Pr\left\{\mathsf{Bin}(N,w') \ge N\left(\gamma + \prod_{i=1}^{L}w_i\right) \right\} \label{eq: main inequality for w tilde},
\end{align}
where $w':=\prod_{i=1}^{L}  \overline{w}_i$. One can easily prove by induction on $L$ that {$w'=\prod_{i = 1}^L \min(1, w_i + \delta) \le \prod_{i=1}^{L}w_i + (2^{L}-1)\delta$. 
Indeed, to verify 
\begin{align}
\prod_{i = 1}^L (w_i + \delta) &\le \prod_{i = 1}^L w_i + (2^L - 1)\delta \notag 
\end{align} for $ 0\le w_1\le\cdots\le w_L\le1 $ and $ \delta\in[0,1] $, first observe that equality holds when $L=1$. Then assuming the inequality for $L-1$, we have \begin{align}
\prod_{i = 1}^L (w_i + \delta) &\le \left(\prod_{i = 1}^{L-1}w_i + (2^{L-1} - 1)\delta\right) (w_L + \delta) \notag \\
&= \prod_{i = 1}^L w_i + \left[(2^{L-1} - 1)w_L + \prod_{i = 1}^{L-1}w_i + (2^{L-1} - 1)\delta\right] \delta \notag \\
&\le \prod_{i = 1}^L w_i + \left[(2^{L-1} - 1) + 1 + (2^{L-1} - 1)\right] \delta , \notag 
\end{align}
as desired. }
By Hoeffding's inequality, we get 
$$
\Pr\left\{\mathsf{Bin}(N,w') \ge N\left(\gamma + \prod_{i=1}^{L}w_i\right) \right\} \le \exp\left(-2(\gamma-(2^{L}-1)\delta)^2 N\right).
$$ 
By combining the latter inequality with inequalities~\eqref{eq: intersection of A's}-\eqref{eq: main inequality for w tilde}, we obtain
$$
\Pr\left\{\tilde W \ge \gamma + \prod_{i=1}^{L}w_i \right\} \le L\exp(-2\delta^2 N)+\exp\left(-2(\gamma-(2^{L}-1)\delta)^2 N\right).
$$ 
After choosing $\delta$ that satisfies the equality $\delta = \gamma -(2^{L}-1)\delta$, i.e., $\delta = \gamma/2^L$, we derive the required statement.
\end{proof}
Now we are ready to present the main statement concerning the existence of list-decodable codes. In principle, we follow the same arguments as used in the case of uniquely-decodable codes. The suggested construction has order-optimal size as $\epsilon\to 0$ by the upper bound for general codes.
\begin{theorem}\label{th::construction of list-decodable codes} 
  There exists an $(L-1)$-list-decodable code code of length $\exp(O_L(\epsilon^{-3/2}))$ whose list-decoding radius is $\tau_L+\epsilon$. Furthermore, its size is  $\Omega_L(\epsilon^{-3/2})$ as $\epsilon\to 0$.
\end{theorem}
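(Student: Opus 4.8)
The plan is to mimic the unique-decoding construction of Theorem~\ref{th::construction of uniquely decodable codes}, replacing the single constant-weight matrix $A_j$ by the list-decodable constant-weight codes produced in Theorem~\ref{th: construction constant weight list decoding}, and replacing the pairwise Z-distance argument by the list-distance analysis enabled by Lemma~\ref{lem: key technical lemma for random permutation}. Concretely, set $m := \Theta_L(\epsilon^{-1})$ and, for each index $j$ in a range of size $\Theta_L(\epsilon^{-1/2})$ around $0$, let $w_j := w_{\max} - j\epsilon$ (suitably rationalized so that $m/w_j\in\Z$). Using Theorem~\ref{th: construction constant weight list decoding}, build a $w_j$-constant-weight $(L-1)$-list-decodable code $\C_j$ of size $M_j = \Theta_L(\epsilon^{-1})$ whose list-decoding radius exceeds $N(\tau_L(w_j) + \Omega_L(\epsilon))$, where $N$ is a common ambient length obtained by repeating each $\C_j$ an appropriate number of times (the least common length, as in the proof of Theorem~\ref{th::construction of uniquely decodable codes}). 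Then apply an \emph{independent} uniformly random coordinate permutation $\pi_j$ to each $\C_j$, and let $\C := \bigcup_j \pi_j(\C_j)$. The total size is $\sum_j M_j = \Theta_L(\epsilon^{-1/2})\cdot\Theta_L(\epsilon^{-1}) = \Omega_L(\epsilon^{-3/2})$, and the length is $N = \exp(\Theta_L(\epsilon^{-3/2}))$, matching the upper bound in Theorem~\ref{thm:upper-bound-general}.

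The core of the argument is to show that with high probability every $L$-list $\L\in\binom{\C}{L}$ has Chebyshev radius larger than $N(\tau_L+\epsilon)$. I would split into two cases. First, if all $L$ codewords of $\L$ come from the \emph{same} block $\pi_j(\C_j)$, then the permutation is irrelevant and list-decodability is inherited directly from Theorem~\ref{th: construction constant weight list decoding}: the radius is at least $N(\tau_L(w_j)+\Omega_L(\epsilon)) \ge N(\tau_L+\Omega_L(\epsilon))$, since by the choice $w_j = w_{\max}-j\epsilon$ and concavity of $w\mapsto w-w^L$, $\tau_L(w_j) = \tau_L - \Theta_L(j^2\epsilon^2) = \tau_L - O_L(\epsilon)$, so the $\Omega_L(\epsilon)$ surplus from the constant-weight construction dominates (this is exactly the same balancing of $j^2\epsilon^2$ against $\epsilon$ as in the $L=2$ proof). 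Second, if $\L$ contains codewords $\x_1,\ldots,\x_L$ from blocks with indices $j_1\le\cdots\le j_L$, not all equal, then $\rad(\L) = \max_i \wt(\x_i) - |\bigcap_i \s(\x_i)|$; since the weights $w_{j_1}N \ge \cdots \ge w_{j_L}N$ differ by multiples of $\epsilon N$, we have $\max_i\wt(\x_i) = w_{j_1}N$, and it suffices to upper bound $|\bigcap_i\s(\x_i)| = N\tilde W$ where $\tilde W$ is exactly the quantity controlled by Lemma~\ref{lem: key technical lemma for random permutation}. That lemma gives $\Pr\{\tilde W \ge \gamma + \prod_i w_{j_i}\} \le (L+1)\exp(-N\gamma^2 2^{-2L+1})$; choosing $\gamma = \Theta_L(\epsilon)$ and using $w_{j_1} - \prod_i w_{j_i} \ge w_{j_1}-w_{j_1}^L = \tau_L(w_{j_1})$ together with $\tau_L(w_{j_1}) \ge \tau_L + c_L\epsilon$ for a nontrivial range of $j_1$ — and a separate, slightly more careful estimate when $j_1$ is close to $0$, where $\prod_i w_{j_i} < w_{j_1}^L$ strictly because the $w_{j_i}$ are not all equal — yields $\rad(\L)/N \ge w_{j_1} - \tilde W > \tau_L + \epsilon$ except on an event of probability $\exp(-\Omega_L(\epsilon^2 N))$. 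A union bound over the $\binom{|\C|}{L} = \exp(O_L(\epsilon^{-3/2}\log\epsilon^{-1}))$ lists is absorbed since $\epsilon^2 N = \exp(\Omega_L(\epsilon^{-3/2}))$.

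The step I expect to be the main obstacle is the mixed-block case when $j_1$ is near the maximizing index, i.e.\ when the largest-weight codeword in $\L$ has weight very close to $w_{\max}N$ and the other indices are also near $w_{\max}$. There $w_{j_1} - w_{j_1}^L$ is only $\tau_L - O_L(\epsilon^2)$, which is \emph{below} the target $\tau_L+\epsilon$, so one cannot merely use $\prod_i w_{j_i}\le w_{j_1}^L$; one must extract genuine slack from the fact that the indices $j_1<\cdots<j_L$ are distinct (hence $\prod_i w_{j_i}$ is strictly smaller than $w_{j_1}^L$ by a quantity of order $(j_L-j_1)\epsilon\cdot w_{\max}^{L-1}$), \emph{and} compare this against the deficit $\tau_L - \tau_L(w_{j_1}) = \Theta_L(j_1^2\epsilon^2)$. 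The bookkeeping here — showing that the separation-induced gain always beats the concavity-induced loss plus $\epsilon$, uniformly over all admissible tuples $(j_1,\ldots,j_L)$ — is the delicate part, and it is precisely the multi-index generalization of the single-parameter Taylor computation (the $\alpha_{i,j}$, $r_{i,j}$ analysis) carried out in the proof of Theorem~\ref{th::construction of uniquely decodable codes}. I would handle it by the same device: write $w_{j_k} = w_{\max} - j_k\epsilon$, expand $w_{j_1} - \prod_{k}w_{j_k} - (\tau_L+\epsilon)$ to second order in $\epsilon$, verify the linear term is a nonnegative combination of the $(j_k - j_1)$'s minus $1$ that can be made positive by a constant-factor choice in the definition of the $\gamma$ slack, and check the second-order term is $O_L(\epsilon^2)$ which is dwarfed by the $\exp(-\Omega_L(\epsilon^2 N))$ failure probability budget.
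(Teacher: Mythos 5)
Your construction and overall analysis are the same as the paper's: the same union of $\Theta_L(\epsilon^{-1/2})$ replicated, independently permuted constant-weight codes from Theorem~\ref{th: construction constant weight list decoding} at weights $w_{\max}-j\epsilon$, the same two-case split (single block versus mixed blocks), the same use of Lemma~\ref{lem: key technical lemma for random permutation} with $\gamma=\Theta_L(\epsilon)$, and the same union bound absorbed by $\epsilon^2N=\exp(\Omega_L(\epsilon^{-3/2}))$. Two points in your mixed-block case need repair, though.

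First, you apply Lemma~\ref{lem: key technical lemma for random permutation} directly to the $L$ codewords $\x_1,\ldots,\x_L$, but the lemma requires \emph{independent} permutations, and two codewords drawn from the same block $\pi_j(\C_j)$ are permuted by the \emph{same} $\pi_j$. You explicitly allow repeated indices ($j_1\le\cdots\le j_L$, not all equal), so this is a real gap. The fix, which is what the paper does, is to first group the list by block: if $L_i$ codewords come from block $j_i$, their common support has deterministic relative size $w_i'=w_{j_i}-\tau_{j_i,L_i}\approx w_{j_i}^{L_i}$, and one then applies the lemma to the $k\le L$ indicator vectors of these within-block intersections, which \emph{are} independently permuted; the target product becomes $\prod_{i=1}^k w_i'$ rather than $\prod_{i=1}^L w_{j_i}$. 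Second, your intermediate claim ``$\tau_L(w_{j_1})\ge\tau_L+c_L\epsilon$ for a nontrivial range of $j_1$'' cannot hold for any $j_1$, since $\tau_L$ is by definition the maximum of $w\mapsto w-w^L$; so the crude bound $\prod_i w_{j_i}\le w_{j_1}^L$ never suffices and the separation-induced slack is needed for \emph{every} mixed list, not only those with $j_1$ near $0$. That said, your proposed resolution is the right one and is exactly the paper's: expanding $w_{j_1}-\prod_i w_i'$ to first order and using $Lw_{\max}^{L-1}=1$, the terms involving $j_1$ alone cancel and the linear term equals $\frac{\epsilon}{L}\sum_i L_i(j_i-j_1)\ge\epsilon/L$ uniformly (the $\Theta_L(j_1^2\epsilon^2)$ ``deficit'' you worry about is a second-order effect), after which one concludes radius $\tau_L+\Omega_L(\epsilon)$ and rescales $\epsilon$ by a constant, as you suggest.
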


\begin{proof}
  Recall that $w_{\max}$ is the argument attaining the maximum of the function $\tau_L(w) = w - w^L$, i.e., $w_{\max}$ satisfies $w_{\max^{L-1}}=\frac{1}{L}$.  Consider the positive integer $m:=1/\epsilon$. Define $J$ to be the set of consecutive integers between $-\sqrt{(w_{\max}-w_{\max}^2)m/2}$ and $\sqrt{(w_{\max}-w_{\max}^2)m/2}$. For every $j\in J$, denote $n_j:=\binom{m}{w_{\max} m-j}$. Consider a binary matrix  $A_j$ of size $m\times n_j$, whose columns are all possible binary vectors of length $m$ and weight $wm-j$. By the proof of Theorem~\ref{th: construction constant weight list decoding}, we get that the code formed by rows of matrix $A_j$ is a $w_j$-constant-weight $(L-1)$-list-decodable code with the list-decoding radius $\tau_{j,L}$, where
  $$
  \tau_{j,L} = w_j - w_j^L+(1-w_j)w_j^{L-1} \binom{L}{2} m^{-1} + O_L(m^{-2}),\quad w_j= \frac{m w_{\max} - j}{m}.
  $$
  Clearly, we have
 \begin{align}
  &\min\{\tau_{j,L}: j\in J\} \nonumber
  \\
  = &w_{\max} - w_{\max}^L  + \min_{j\in J}\left(\frac{jLw_{\max}^{L-1}}{m} - \frac{j}{m}- \frac{j^2\binom{L}{2}w_{\max}^{L-2}}{m^2}\right) +(1-w_{\max})w_{\max}^{L-1} \binom{L}{2}m^{-1}+ o_L(m^{-1}) \nonumber
  \\
  =&w_{\max} - w_{\max}^L + (1 - w_{\max})\frac{L-1}{4m} + o_L(m^{-1}) \nonumber
  \\
  = &\tau_L + \Omega_L(\epsilon). \label{eq: list-decoding radius}
  \end{align}
  For a positive integer $z$, consider a matrix $A_j^{(z)}$ of size $2m\times zn_j $ that is composed of $z$ copies of the matrix $A_j$. We write copies of $A_j$  to the right, i.e.,  $A_j^{(z)}=(A_j,A_j,\ldots,A_j)$. By $\tilde A_j^{(z)}$ denote a matrix obtained from $A_j^{(z)}$ by a random permutation of its columns. This permutation is taken uniformly at random from the set of all permutations.  Note that the list-decoding radius of the code formed by $\tilde A_j^{(z)}$ is the same as for the code given by $A_j$.  Define integers
  $$
  z_j:= \prod_{\substack{i\in J \setminus \{j\}}} \binom{m}{w_{\max}m-j},\quad N:= \prod_{\substack{i\in J}} \binom{m}{w_{\max}m-j},\quad M:=m |J|.
  $$
  Consider a random matrix $\tilde A$ of size $M\times N$ containing $\tilde A_j^{(z_j)}$ as a submatrix for all  $j\in J$. We assume that the matrices $\tilde A_j^{(z_j)}$ are written one below the other using the ascending order of parameter $j$. Let $\C$ denote a code formed by rows of $\tilde A$. For $\epsilon\to 0$, we estimate the number of rows $M$ and the number of columns $N$ in $\tilde A$ as follows
  $$
  M=\Theta_L(\epsilon^{-3/2}),\quad N=\exp(\Theta_L(\epsilon^{-3/2})).
  $$
  We claim that with high probability, $\C$ is an $(L-1)$-list-decodable code with the list-decoding radius $\tau_L + \Omega_L(\epsilon)$.  In the remainder of the proof, we prove this claim.

  Let us take $L$ distinct codewords $\L=\{\tilde \x_1,\ldots,\tilde \x_L\}$ from $\C$ such that $L_1$  codewords $\L_1=\{\tilde\x_1,\ldots,\tilde\x_{L_1}\}$ are rows from $\tilde A^{(z_{j_1})}_{j_1}$, $L_2$  codewords $\L_2=\{\tilde\x_{L_1+1},\ldots,\tilde\x_{L_1+L_2}\}$ are rows from $\tilde A^{(z_{j_2})}_{j_2}$, $\dots$, $L_k$ codewords $\L_k=\{\tilde\x_{L_1+\ldots+L_{k-1}+1},\ldots,\tilde\x_{L_1+\ldots+L_k}\}$ are rows from $\tilde A^{(z_{j_k})}_{j_k}$ with $j_1<j_2<\ldots < j_k$. We shall prove that the Chebyshev radius for $\L$ is at least $\tau_L+\Omega_L(\epsilon)$. If $k=1$, then the statement follows from~\eqref{eq: list-decoding radius}. Hereafter, assume that $k\ge 2$. For $i\in[k]$, the Hamming weight of codewords in $\L_i$ is $Nw_{j_i}$, and the number of positions where all codewords in $\L_i$ are ones is 
  $Nw'_i$ with $w'_i:=w_{j_i} - \tau_{j_i,L_i}$. Let $\tilde W N$ denote the number of coordinates $i\in[N]$ such that $\tilde x_{1,i}=\ldots=\tilde x_{L,i}=1$.   
   By applying Lemma~\ref{lem: key technical lemma for random permutation}, we get that  for any $\gamma>0$
   $$
   \Pr\left\{\tilde W \ge \gamma + \prod_{i=1}^{k}w_i' \right\} \le (k+1)\exp(-N\gamma^2 2^{-2k+1}).
   $$
Since $j_1$ is the smallest integer in the set $\{j_1,\ldots,j_k\}$, $\tilde \x_1$ has the lowest Hamming weight $Nw_{j_1}$ in the set $\L$. Therefore, the Chebyshev radius for $\L$ satisfies  $\rad(\L)=Nw_{j_1}-N\tilde W$ and hence  
 $$
 \Pr\left\{\rad(\L)\le N\left(w_{j_1} - \gamma - \prod_{i=1}^{k}w'_i\right)\right\}\le (k+1)\exp(-N\gamma^2 2^{-2k+1}).
 $$
The number of $L$-size subcodes of $\C$ is $\binom{M}{L}=O_L(\epsilon^{-3L/2})$. By the union bound, the $(L-1)$-list-decoding radius $\tau_L(\C)$ is at least 
$$
N\min_{\substack{1\le k\le L\\ j_1,\ldots,j_k\in J\\ j_1< \ldots <j_k\\ L_1,\ldots,L_k\ge 1\\L_1+\ldots +L_k=L}}\left(w_{j_1} - \gamma - \prod_{i=1}^{k}w'_i\right)
$$
with probability at least
$$
1- \binom{M}{L} (k+1)\exp(-N\gamma^2 2^{-2k+1}) = 1- O_L(\epsilon^{-3L/2}\exp(-N\gamma^2 2^{-2L+1})),
$$
where $N=\exp(\Theta_L(\epsilon^{-3/2}))$.
Therefore, it remains to show that 
$$
w_{j_1} - \gamma - \prod_{i=1}^{k}w'_i = \tau_L+\Omega_L(\epsilon)
$$
for all admissible $j_1,\ldots,j_k$, $L_1,\ldots,L_k$, small enough $\gamma$ and large enough $m$ (and consequently large $N=N(m)$). Recall that $w_{j_i}=w_{\max}-j_i/m$, $w_i'=w_{j_i}-\tau_{j_i,L_i}=w_{j_i}^{L_i}-(1-w_{j_i})w_{j_i}^{L_i-1}\binom{L_i}{2}m^{-1}+O_L(m^{-2})$ and $m=1/\epsilon$. Thus, it holds that
\begin{align*}
w_i'&=(w_{\max}-j_i\epsilon)^{L_i} - (1-w_{\max}+j_i\epsilon)(w_{\max}-j_i\epsilon)^{L_i-1}\binom{L_i}{2}\epsilon + o_L(\epsilon)\\
&=w_{\max}^{L_i}-L_ij_iw_{\max} \epsilon -(1-w_{\max})w_{\max}^{L_i-1}\binom{L_i}{2}\epsilon + o_L(\epsilon) \\
&\le w_{\max}^{L_i}-j_iL_i\epsilon+o_L(\epsilon).
\end{align*}
Recall that $\tau_L=w_{\max}-w_{\max}^L$ and $L_1+\ldots+L_k=L$. Then we obtain
\begin{align*}
    w_{j_1}-\prod_{i=1}^{k}w'_i&\ge w_{\max}-j_1\epsilon - \prod_{i=1}^{k}\left(w_{\max}^{L_i}-j_iL_i\epsilon+o_L(\epsilon)\right)\\
    &=\tau_L + \epsilon\left (-j_1+\sum_{i=1}^{k}\frac{w_{\max}^L}{w_{\max}^{L_i}}j_iL_i\right) + o_L(\epsilon)\\
    &\overset{(a)}{\ge}\tau_L+\epsilon\left(-j_1+\frac{1}{L}\sum_{i=1}^{k}j_iL_i\right)+o_L(\epsilon)\\
    &\overset{(b)}{=} \tau_L+\Omega_L(\epsilon),
\end{align*}
where $(a)$ follows from the fact $L_i\ge 1$ and  $w_{\max}^{L-1}=\frac{1}{L}$, $(b)$ is implied by the fact that $L_1+\ldots+L_k=L$, $j_i$'s are integers and $j_1<j_2<\ldots<j_k$ with $k\ge 2$.
\end{proof}

\section{List-decodable codes below the Plotkin point}
\label{sec:ld-below}
In the following two subsections, we will bound $(L-1)$-list-decoding capacity $ C_{L-1}(w,\tau) $ for $ \tau<w-w^L $. 
We note that for such a purpose, it suffices to consider constant-weight codes. 
This is because for any general code $\C$ of size $ M $, one can find a constant-weight subcode $\C'\subseteq\C$ of size at least $ M/(n+1) $. 
The rate of $\C$ and $\C'$ is asymptotically (in $n$) the same. 

Since the analyses involve applications of the method of types, we need to first introduce the notion of \emph{types} of a tuple of vectors. 
Let $ \P(\Sigma) $ denote the set of distributions supported on a finite set $ \Sigma $. 

\begin{definition}[Type]
\label{def:type}
  Let $ \Sigma_1,\cdots,\Sigma_k $ be finite alphabets. 
  Let $ (\x_1,\cdots,\x_k)\in\Sigma_1^n\times\cdots\times\Sigma_k^n $. 
  The \emph{joint type} $ T_{\x_1,\cdots,\x_k}\in\R^{|\Sigma_1|\times\cdots\times|\Sigma_k|} $ of $ \x_1,\cdots,\x_k $ is defined as their empirical distribution ({also known as} histogram), i.e., 
  for any $ (x_1,\cdots,x_k)\in\Sigma_1\times\cdots\times\Sigma_k $, 
  \begin{align}
  T_{\x_1,\cdots,\x_k}(x_1,\cdots,x_k) &\coloneqq
  \frac{1}{n}\left|\left\{i\in[n]:\x_{1,i} = x_1,\cdots,\x_{k,i} = x_k\right\}\right|. \notag 
  \end{align}
\end{definition}

Sanov's theorem determines the large deviation exponent of the type of an i.i.d.\ vector. 
\begin{theorem}[Sanov~\cite{sanov1958probability}]
  \label{thm:sanov}
  Let $ \Q\subset\P(\Sigma) $ be a subset of distributions on a finite set $ \Sigma $ such that it is equal to the closure of its interior. 
  Let $ \x\in\Sigma^n $ be distributed according to $ P^{\otimes n} $ for some $ P\in\P(\Sigma) $. 
  Then
  \begin{align}
  \lim_{n\to\infty}\frac{1}{n}\log\Pr\{T_{\x}\in\Q\} &= - \inf_{Q\in\Q} D(Q\|P), \notag 
  \end{align}
  where $ D(Q\|P) $ denotes the Kullback--Leibler divergence between $Q$ and $P$ defined as
  \begin{align}
      D(Q\|P) &\coloneqq \sum_{x\in\Sigma} Q(x)\log\frac{Q(x)}{P(x)}. \notag
  \end{align}
\end{theorem}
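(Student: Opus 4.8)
The plan is to prove this by the \emph{method of types}, which yields the claimed exponent together with polynomial-in-$n$ error factors that become negligible after taking $\frac1n\log(\cdot)$. A preliminary reduction: if $P(x)=0$ for some $x\in\Sigma$, then every type $Q$ with $Q(x)>0$ satisfies $D(Q\|P)=\infty$ and contributes $0$ to $\Pr\{T_{\x}\in\Q\}$, so I would first restrict attention to the face of $\P(\Sigma)$ supported on $\{x:P(x)>0\}$ and assume henceforth that $P$ has full support; on this face $Q\mapsto D(Q\|P)$ is continuous.

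Next I would record the two standard counting facts over the finite alphabet $\Sigma$: (i) the number of distinct types realized by sequences in $\Sigma^n$ is at most $(n+1)^{|\Sigma|}$; and (ii) for each such realizable type $Q$, its type class $T_Q\coloneqq\{\x\in\Sigma^n:T_{\x}=Q\}$ satisfies $(n+1)^{-|\Sigma|}2^{nH(Q)}\le|T_Q|\le 2^{nH(Q)}$ with $H(Q)\coloneqq-\sum_x Q(x)\log Q(x)$, while every $\x\in T_Q$ has the same probability $P^{\otimes n}(\x)=\prod_x P(x)^{nQ(x)}=2^{-n(H(Q)+D(Q\|P))}$ (both from Stirling-type estimates). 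Multiplying these gives, for every realizable type $Q$, the sandwich $(n+1)^{-|\Sigma|}2^{-nD(Q\|P)}\le\Pr\{T_{\x}=Q\}\le 2^{-nD(Q\|P)}$. For the \emph{upper bound}, union-bounding over the at most $(n+1)^{|\Sigma|}$ realizable types lying in $\Q$ gives $\Pr\{T_{\x}\in\Q\}\le (n+1)^{|\Sigma|}2^{-n\inf_{Q\in\Q}D(Q\|P)}$, so $\limsup_n\frac1n\log\Pr\{T_{\x}\in\Q\}\le-\inf_{Q\in\Q}D(Q\|P)$. For the \emph{lower bound}, I would use that $\Q$ equals the closure of its interior: given $\eta>0$, pick an interior point $Q'\in\Q$ with $D(Q'\|P)\le\inf_{Q\in\Q}D(Q\|P)+\eta$ (possible by continuity of $D(\cdot\|P)$ and the definition of the infimum), and then, since a small neighborhood of $Q'$ lies in $\Q$ and realizable types become dense in $\P(\Sigma)$ as $n\to\infty$, for all large $n$ there is a realizable type $Q_n\in\Q$ with $D(Q_n\|P)\le\inf_{Q\in\Q}D(Q\|P)+2\eta$; then $\Pr\{T_{\x}\in\Q\}\ge\Pr\{T_{\x}=Q_n\}\ge(n+1)^{-|\Sigma|}2^{-nD(Q_n\|P)}$, whence $\liminf_n\frac1n\log\Pr\{T_{\x}\in\Q\}\ge-\inf_{Q\in\Q}D(Q\|P)-2\eta$. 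Letting $\eta\downarrow0$ and combining with the $\limsup$ bound yields the limit.

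The main (indeed the only) obstacle is the lower bound, and it is precisely there that the regularity hypothesis on $\Q$ is essential: for a set $\Q$ with empty interior (say, a hyperplane slice of the simplex), $\Q$ may contain no realizable type for infinitely many $n$, so $\Pr\{T_{\x}\in\Q\}$ can vanish along a subsequence even though $\inf_{Q\in\Q}D(Q\|P)<\infty$, and the stated limit fails. Making the lower bound rigorous therefore requires carefully combining density of realizable types in the simplex with continuity of $D(\cdot\|P)$ on the relevant face of $\P(\Sigma)$ — which is also why the full-support reduction in the first step is not merely cosmetic. Everything else reduces to bookkeeping around the counting facts (i) and (ii).
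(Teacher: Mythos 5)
The paper does not prove this statement: it is quoted as a classical result with a citation to Sanov's original work, so there is no internal proof to compare against. Your argument is the standard method-of-types proof (as in Cover--Thomas or Csisz\'ar--K\"orner), and it is essentially correct: the two counting facts, the polynomial bound on the number of types, the union bound for the upper bound, and the ``interior point close to the infimum plus density of realizable types'' argument for the lower bound are exactly the right ingredients, and you correctly identify that the regularity hypothesis $\Q=\overline{\mathrm{int}\,\Q}$ is needed only for the lower bound.

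One caveat worth flagging concerns your full-support reduction. After restricting to the face $\P(\s(P))$, the set $\Q\cap\P(\s(P))$ need not equal the closure of its interior \emph{relative to that face}, and in that case your lower-bound argument (which needs a relatively interior point of small divergence that realizable types can approach) genuinely breaks down. For instance, with $\Sigma=\{0,1,2\}$, $P$ supported on $\{0,1\}$, and $\Q=\{Q:Q(2)\ge|Q(0)-1/\pi|\}$, the set $\Q$ is the closure of its interior in $\P(\Sigma)$, yet $\Q$ contains a distribution of finite divergence while $\Pr\{T_{\x}\in\Q\}=0$ for every $n$, so the stated limit fails. This is a defect of the theorem as literally stated for non-full-support $P$ (the safe general form bounds the $\liminf$ by $-\inf_{Q\in\mathrm{int}\,\Q}D(Q\|P)$), not of your proof; for the application in the paper the reference measure $\mathsf{Ber}^{\otimes L}(w)$ with $w\in(0,1)$ has full support, so your argument goes through as written. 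You may simply add the hypothesis that $P$ has full support (or that the infimum over $\mathrm{int}\,\Q$ equals the infimum over $\Q$) to make the lower bound airtight.
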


\subsection{Upper bound on $(L-1)$-list-decoding capacity}
\label{sec:ub-cap}

In this section, we derive an upper bound on the size of any list-decodable code for the Z-channel. 
At a high level, the proof follows the idea of Elias and Bassalygo~\cite{bassalygo1965new}. 
The idea is to \emph{cover} the space where the code is living using Hamming balls\footnote{In fact, it is more convenient for us to consider sets that are slightly more structured than Hamming balls which will be specified in the proceeding proof.}.
The radius of the balls is carefully chosen so that there can only be a \emph{constant} number of codewords in the each ball satisfying the list-decodability condition. 
Then the total number of codewords is bounded by the covering number times a constant. 
We flesh out the detail below. 

We first present a covering lemma that will be useful in the proof of the upper bound. 
\begin{lemma}
\label{lem:covering}
Let $ w,v\in(0,1) $ and $ \max(0,w+v-1)\le a\le\min(w,v) $. 
Define 
\begin{align}
I(w,v,a) &\coloneqq
(1-w-v+a)\log\frac{1-w-v+a}{(1-w)(1-v)}
+ (v-a)\log\frac{v-a}{(1-w)v}
+(w-a)\log\frac{w-a}{w(1-v)}
+a\log\frac{a}{wv}. \notag 
\end{align}
to be the mutual information of the joint distribution $ P_{U,X} \coloneqq \begin{bmatrix}
1-w-v+a & v-a \\
w-a & a
\end{bmatrix} $.\footnote{Note that the marginal $ P_X $ of $ P_{U,X} $ is $ \mathsf{Ber}(w) $ and the marginal $ P_U $ of $ P_{U,X} $ is $ \mathsf{Ber}(v) $.} 
Then for any $ \epsilon>0 $, there exists a covering $ \D\subset S_{nv}^{\mathrm{H}}(\0) $ of $ S^{\mathrm{H}}_{nw}(\0) $ satisfying: 
for any $ \x\in S^{\mathrm{H}}_{nw}(\0) $, 
there is a $ \u\in\D $ such that $ T_{\u,\x} = P_{U,X} $. 
Furthermore, the size of $\D$ is at most $ 2^{n(I(w,v,a) + \epsilon)} $. 
\end{lemma}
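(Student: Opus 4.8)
The plan is to use the probabilistic method: take $\D$ to be a random collection of vectors drawn i.i.d.\ uniformly from the sphere $S^{\mathrm H}_{nv}(\0)$, show that with positive probability it covers $S^{\mathrm H}_{nw}(\0)$ in the required sense, and bound its size. First I would fix $\x\in S^{\mathrm H}_{nw}(\0)$ and estimate the probability that a single uniformly random $\u\in S^{\mathrm H}_{nv}(\0)$ has joint type $T_{\u,\x}=P_{U,X}$ with $\x$. Since $\x$ has $nw$ ones and $n(1-w)$ zeros, asking for $T_{\u,\x}=P_{U,X}$ amounts to asking that $\u$ puts exactly $na$ of its ones on the support of $\x$ and exactly $n(v-a)$ of its ones off the support of $\x$; the number of such $\u$ is $\binom{nw}{na}\binom{n(1-w)}{n(v-a)}$ while the total number of weight-$nv$ vectors is $\binom{n}{nv}$. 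A standard entropy estimate for binomial coefficients (the same $\sqrt{u/(8v(u-v))}\,2^{uH(v/u)}\le\binom uv\le 2^{uH(v/u)}$ bound used earlier in the paper) shows this ratio is $2^{-n(I(w,v,a)+o(1))}$, because $w H(a/w)+(1-w)H((v-a)/(1-w))-H(v)=-I(w,v,a)$ is exactly the mutual-information identity recorded in the lemma statement.

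Given this single-trial probability $p\ge 2^{-n(I(w,v,a)+\epsilon/2)}$ for $n$ large, I would let $\D$ consist of $K\coloneqq\lceil n\cdot 2^{n(I(w,v,a)+\epsilon)}\rceil$ independent uniform samples from $S^{\mathrm H}_{nv}(\0)$. For a fixed $\x$, the probability that \emph{no} element of $\D$ realizes the type $P_{U,X}$ with $\x$ is at most $(1-p)^K\le e^{-pK}\le e^{-n\cdot 2^{n\epsilon/2}}$, which is doubly-exponentially small. A union bound over the at most $2^n$ choices of $\x\in S^{\mathrm H}_{nw}(\0)$ still leaves the failure probability below $1$ (indeed $o(1)$), so there exists a realization of $\D$ that is a valid covering; discarding duplicates only decreases $|\D|$, which is at most $K\le 2^{n(I(w,v,a)+\epsilon)}$ after absorbing the factor $n$ and the ceiling into a slightly larger $\epsilon$ (legitimate since $\epsilon>0$ is arbitrary and $n$ is taken large). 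One should also check the degenerate cases where $a$ equals one of its endpoints $\max(0,w+v-1)$ or $\min(w,v)$, so that one of the binomial coefficients is trivial; the entropy estimate still goes through with the convention $H(0)=0$, or one simply notes the claimed bound is only stronger there.

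The main obstacle is purely the quantitative bookkeeping in the first step: one needs the single-trial probability to be at least $2^{-n(I(w,v,a)+\epsilon)}$ \emph{with the right $o(1)$ control}, and the polynomial prefactors coming from the lower bounds on the binomial coefficients must be shown not to matter. This is routine but must be done carefully so that the final size bound is exactly $2^{n(I(w,v,a)+\epsilon)}$ rather than something with an uncontrolled constant; the cleanest route is to prove the estimate with $\epsilon/2$ in the exponent of $p$ and absorb everything else into the remaining $\epsilon/2$. Nothing else in the argument is delicate: the covering property is a type constraint rather than a ball-containment, which is precisely what makes the random-choice union bound work without any geometric input.
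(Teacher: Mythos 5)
Your proposal is correct and essentially identical to the paper's proof: both draw $2^{n(I(w,v,a)+\epsilon)}$ (up to a polynomial factor) independent uniform samples from $S^{\mathrm{H}}_{nv}(\0)$, lower-bound the single-trial probability of hitting the joint type $P_{U,X}$ by $\binom{nw}{na}\binom{n(1-w)}{n(v-a)}/\binom{n}{nv}=2^{-nI(w,v,a)+o(n)}$ via entropy estimates, and finish with a union bound over the at most $2^{nH(w)}$ points of $S^{\mathrm{H}}_{nw}(\0)$ against a doubly-exponentially small per-point failure probability. The only cosmetic difference is that you carry an extra factor of $n$ in the sample size and absorb it into $\epsilon$, whereas the paper folds all polynomial slack directly into the $o(n)$ terms.
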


\begin{proof}
We will show that with high probability a random subset $\D$ of $ S^{\mathrm{H}}_{nv}(\0) $ is a covering of $ S^{\mathrm{H}}_{nw}(\0) $. 
Indeed, we sample $ M\coloneqq 2^{n(I(w,v,a) + \epsilon)} $ vectors uniformly at random from all weight-$nv$ vectors and call such a set $\D$. 
The probability that some sequence $ \x $ in $ S^{\mathrm{H}}_{nw}(\0) $ is not covered by any vector in $ \D $ can be bounded as follows:
\begin{align}
& \Pr\{\exists \x\in S^{\mathrm{H}}_{nw}(\0),\;\forall \u\in\D,\;T_{\u,\x}\ne P_{U,X}\} \notag \\
&\le \sum_{\x\in S^{\mathrm{H}}_{nw}(\0)} \prod_{\u\in\D} (1-\Pr\{T_{\u,\x}= P_{U,X}\}) \notag \\
&=  \sum_{\x\in S^{\mathrm{H}}_{nw}(\0)} \prod_{\u\in\D} \left(1-\frac{\binom{n(1-w)}{n(v-a)}\binom{nw}{na}}{\binom{n}{nv}}\right) \notag \\
&= \binom{n}{nw} \left(1-2^{n(1-w)H\left(\frac{v-a}{1-w}\right)+nwH\left(\frac{a}{w}\right)-nH(v)+o(n)}\right)^M \notag \\
&\le 2^{nH(w)} \left(1-2^{nI(w,v,a)+o(n)}\right)^{2^{nI(w,v,a)}2^{n\epsilon}} \notag \\
&\xrightarrow{n\to\infty} 2^{nH(w)} e^{-2^{n\epsilon}} \label{eqn:limit} \\
&= e^{-2^{n\epsilon}+O_L(n)}. \notag 
\end{align}
In Eqn.~\eqref{eqn:limit}, we used the fact $ \lim\limits_{n\to\infty}(1-1/n)^n = 1/e $. 
Therefore, with probability at least $ 1-e^{-2^{\Omega(n)}} $, $\D$ is a covering of $ S^{\mathrm{H}}_{nw}(\0) $ {with respect to} $ P_{U,X} $. 
This finishes the proof of the lemma. 
\end{proof}

We are now ready to prove the upper bound which reads as follows. 
\begin{theorem}
\label{thm:eb}
Fix $ 0<w<1 $. 
Let $ \C $ be a $w$-constant-weight list-decodable code for the Z-channel of length $n$ with list-decoding radius $ \tau $. 
Then for $n\to\infty$
\begin{align}
R(\C) &\le \min_{\substack{0\le v\le 1,\;\max(0,w+v-1)\le a\le\min(w,v) \\ (1-v)\left[\frac{w-a}{1-v} - \left(\frac{w-a}{1-v}\right)^L\right] + v\left[\frac{a}{v} - \left(\frac{a}{v}\right)^L\right] \le \tau}} I(w,v,a)+o(1), \notag 
\end{align}
where $ I(w,v,a) $ was defined in Lemma~\ref{lem:covering}. 
\end{theorem}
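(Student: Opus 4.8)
The plan is to use the Elias--Bassalygo covering paradigm in the form already prepared by Lemma~\ref{lem:covering}. Fix a weight $v\in(0,1)$ and a value $a$ with $\max(0,w+v-1)\le a\le\min(w,v)$, and invoke Lemma~\ref{lem:covering} with parameter $\epsilon'>0$ to obtain a covering $\D\subset S^{\mathrm{H}}_{nv}(\0)$ of $S^{\mathrm{H}}_{nw}(\0)$ with $|\D|\le 2^{n(I(w,v,a)+\epsilon')}$ such that every $\x\in S^{\mathrm{H}}_{nw}(\0)$ has some $\u\in\D$ with $T_{\u,\x}=P_{U,X}$. Since $\C$ is $w$-constant-weight, $\C\subseteq S^{\mathrm{H}}_{nw}(\0)$, so each codeword is ``captured'' by at least one $\u\in\D$; assign each codeword to one such $\u$ and write $\C = \bigcup_{\u\in\D}\C_\u$ where $\C_\u$ collects the codewords assigned to $\u$. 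Then $|\C| \le |\D|\cdot \max_{\u}|\C_\u|$, and the theorem follows once we show $\max_\u|\C_\u| = 2^{o(n)}$ (in fact $O_L(1)$) whenever the constraint in the minimization is satisfied.

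The heart of the argument is the bound on $|\C_\u|$. Fix $\u\in\D$ and consider the codewords $\x\in\C_\u$. Because $T_{\u,\x}=P_{U,X}$ for every such $\x$, splitting the coordinate set $[n]$ according to the value of $\u$ gives a partition into a block $B_0$ of the $n(1-v)$ coordinates where $u_i=0$ and a block $B_1$ of the $nv$ coordinates where $u_i=1$; on $B_0$ every $\x\in\C_\u$ has exactly $n(w-a)$ ones, i.e.\ relative weight $\frac{w-a}{1-v}$ within $B_0$, and on $B_1$ every $\x\in\C_\u$ has exactly $na$ ones, i.e.\ relative weight $\frac{a}{v}$ within $B_1$. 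Restricting the codewords of $\C_\u$ to $B_0$ yields a constant-weight code (of length $n(1-v)$ and relative weight $\frac{w-a}{1-v}$), and likewise on $B_1$. The key observation is that the list-decoding radius of $\C$ controls the list-decoding radii of these two restricted codes: for any $L$-list $\L\subseteq\C_\u$, using the explicit Chebyshev center formula~\eqref{eqn:def-center} and the fact that for constant-weight lists $\rad(\L) = nw - |\bigcap_{\x\in\L}\s(\x)|$, one has
\begin{align}
\rad(\L) &= \left(n(1-v)\tfrac{w-a}{1-v} - \Big|\bigcap_{\x\in\L}\s(\x)\cap B_0\Big|\right) + \left(nv\tfrac{a}{v} - \Big|\bigcap_{\x\in\L}\s(\x)\cap B_1\Big|\right) = \rad(\L|_{B_0}) + \rad(\L|_{B_1}), \notag
\end{align}
since the intersection of supports splits across the two blocks. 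Hence $\tau_L(\C_\u) = \tau_L(\C_\u|_{B_0}) + \tau_L(\C_\u|_{B_1})$ as well (the minimizing $L$-list is common, or at worst one gets an inequality in the needed direction after a short argument). If $|\C_\u|$ were super-constant, then by Theorem~\ref{th: construction constant weight list decoding}'s converse companion Theorem~\ref{thm:upper-bound-constant-weight}, the per-block list-decoding radii would be forced down to (at most) the respective Plotkin points $\frac{w-a}{1-v} - \left(\frac{w-a}{1-v}\right)^L$ and $\frac{a}{v} - \left(\frac{a}{v}\right)^L$ (in relative terms within each block, i.e.\ $(1-v)\left[\frac{w-a}{1-v} - \left(\frac{w-a}{1-v}\right)^L\right]$ and $v\left[\frac{a}{v} - \left(\frac{a}{v}\right)^L\right]$ in absolute relative terms), so $\tau_L(\C)\le \tau_L(\C_\u) \le$ the sum of these, which is exactly the quantity in the constraint. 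Therefore, whenever the constraint
\begin{align}
(1-v)\left[\tfrac{w-a}{1-v} - \left(\tfrac{w-a}{1-v}\right)^L\right] + v\left[\tfrac{a}{v} - \left(\tfrac{a}{v}\right)^L\right] \le \tau \notag
\end{align}
holds, having $|\C_\u|$ grow would contradict $\tau_L(\C) > n\tau$; more quantitatively, Theorem~\ref{thm:upper-bound-constant-weight} gives $|\C_\u| = O_L(1)$ as soon as $\tau$ strictly exceeds the above sum, and the boundary case is absorbed into the $o(1)$ in the rate (or handled by a vanishing perturbation of $\tau$). Taking logarithms, $R(\C) \le \frac{1}{n}\log|\D| + \frac{1}{n}\log\max_\u|\C_\u| \le I(w,v,a) + \epsilon' + o(1)$, and then optimizing over all admissible $(v,a)$ satisfying the constraint and sending $\epsilon'\to0$ yields the claimed bound.

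The step I expect to be the main obstacle is making the decomposition $\tau_L(\C_\u) = \tau_L(\C_\u|_{B_0}) + \tau_L(\C_\u|_{B_1})$ (or the one-sided inequality that actually suffices) fully rigorous, together with the careful bookkeeping of ``absolute'' versus ``within-block relative'' list-decoding radii and the fact that the two restricted codes are constant-weight of the correct weights $\frac{w-a}{1-v}$ and $\frac{a}{v}$. One has to be slightly careful that the minimizing $L$-list for $\rad$ over $\C_\u$ need not be the one minimizing either restricted radius, so the clean identity may degrade to $\tau_L(\C_\u) \ge $ something, and one should check the direction of the inequality is the one that helps; in fact since $\rad(\L) = \rad(\L|_{B_0}) + \rad(\L|_{B_1}) \ge \tau_L(\C_\u|_{B_0}) + \tau_L(\C_\u|_{B_1})$ for every $\L$, minimizing gives $\tau_L(\C_\u) \ge \tau_L(\C_\u|_{B_0}) + \tau_L(\C_\u|_{B_1})$, which combined with Theorem~\ref{thm:upper-bound-constant-weight} applied to each block is exactly what is needed. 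A secondary technical point is uniformity in $\u$ of the $O_L(1)$ bound and ensuring the covering lemma's $\epsilon'$ can be sent to zero after the optimization; both are routine. Edge cases $v\in\{0,1\}$, $a\in\{0,w,v,w+v-1\}$ where a block is empty or a restricted relative weight is $0$ or $1$ should be checked separately but are degenerate and consistent with the stated formula.
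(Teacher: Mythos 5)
Your overall strategy is the same as the paper's: cover $S^{\mathrm{H}}_{nw}(\0)$ by the sets $A(\u,P_{U,X})$ via Lemma~\ref{lem:covering}, reduce to bounding $|\C_\u|$ for a single $\u$, split the coordinates into the blocks $B_0,B_1$ according to $\u$, and use the constant-weight Plotkin-type bound on each block. The covering step, the pigeonhole step, the observation that each restricted code is constant-weight of relative weight $\frac{w-a}{1-v}$ resp.\ $\frac{a}{v}$, and the identity $\rad(\L)=\rad(\L|_{B_0})+\rad(\L|_{B_1})$ for $L$-lists inside a fixed $\C_\u$ are all correct and match the paper.

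The gap is in the step you yourself flag as the main obstacle, and your proposed resolution does not close it. You establish $\tau_L(\C_\u)\ge\tau_L(\C_\u|_{B_0})+\tau_L(\C_\u|_{B_1})$ and assert this is ``exactly what is needed,'' but it points the wrong way: what you know is that $\tau_L(\C_\u)>n\tau$ (a \emph{large} left-hand side), and a lower bound on a large quantity gives no lower bound on either summand --- e.g.\ $\tau_L(\C_\u|_{B_0})$ could be $0$ if two codewords of $\C_\u$ share their $B_0$-projection. Without a lower bound on each block radius you cannot invoke Theorem~\ref{thm:upper-bound-constant-weight} on each block, so the product bound $|\C_\u|\le|\C_\u|_{B_0}|\cdot|\C_\u|_{B_1}|$ is left with two unbounded factors. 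The reverse inequality $\tau_L(\C_\u)\le\tau_L(\C_\u|_{B_0})+\tau_L(\C_\u|_{B_1})$ is false in general (it would require $\min(f+g)\le\min f+\min g$), and even if it held it would only force \emph{one} of the two block radii above its Plotkin point, not both. The clean fix is not to decouple the blocks at the level of list-decoding radii but to run the double-counting of Theorem~\ref{thm:upper-bound-apx-constant-weight} directly on $\C_\u$ (with $\delta=0$), splitting only the column-weight sum: writing $M=|\C_\u|$ and $S_k$ for the $k$th column weight, apply the norm-comparison inequality separately to $\sum_{k\in B_0}S_k^L\ge n(1-v)\bigl(M\tfrac{w-a}{1-v}\bigr)^L$ and $\sum_{k\in B_1}S_k^L\ge nv\bigl(M\tfrac{a}{v}\bigr)^L$. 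Since $w=(1-v)\tfrac{w-a}{1-v}+v\tfrac{a}{v}$, this yields
\begin{align}
\frac{M^L}{M(M-1)\cdots(M-L+1)} \ge \frac{\tau}{(1-v)\left[\frac{w-a}{1-v}-\left(\frac{w-a}{1-v}\right)^L\right]+v\left[\frac{a}{v}-\left(\frac{a}{v}\right)^L\right]}, \notag
\end{align}
so $M=O_L(1)$ whenever the constraint holds with any fixed slack, and the boundary case is absorbed into the $o(1)$ by perturbing $(v,a)$ and using continuity of $I(w,v,a)$, exactly as you anticipated. (For what it is worth, the paper's own write-up of this step --- positing an error split $\tau_0+\tau_1\le\tau$ and applying Theorem~\ref{thm:upper-bound-constant-weight} to each punctured code --- is informal in the same way; the joint double-counting above is what actually justifies it.)
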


\begin{proof}
By Lemma~\ref{lem:covering}, for any list-decodable code $ \C $ with constant-weight $ nw $, one can find a weight-$nv$ covering $ \D $ {with respect to} a joint distribution $ P_{U,X} $ as specified in Lemma~\ref{lem:covering}, where the parameters $ v $ and $ a $ are to be optimized later and $\D$ satisfies the properties given by the lemma. 

For each $ \u\in S^{\mathrm{H}}_{nv}(\0) $, define the following \emph{jointly typical set} $ A(\u,P_{U,X}) $ {with respect to} the joint distribution $ P_{U,X} $:
\begin{align}
A(\u,P_{U,X}) &\coloneqq \left\{\x\in S^{\mathrm{H}}_{nw}(\0):T_{\u,\x} = P_{U,X} \right\}. \label{eqn:A-def} 
\end{align}
By the covering property of $\D$, we have 
\begin{align}
&& S^{\mathrm{H}}_{nw}(\0) &= \bigcup_{\u\in\D} A(\u,P_{U,X}) \notag \\
\implies&& S^{\mathrm{H}}_{nw}(\0)\cap\C &= \left(\bigcup_{\u\in\D} A(\u,P_{U,X})\right)\cap\C \notag \\
\implies&& \C &= \bigcup_{\u\in\D} (A(\u,P_{U,X})\cap\C) \notag \\
\implies&& |\C| &\le \sum_{\u\in\D} |A(\u,P_{U,X})\cap\C|. \notag 
\end{align}
For each $ \u\in\D $, define $ \C_{\u} \coloneqq A(\u,P_{U,X})\cap\C $. 
By Markov's inequality, there must be a $ \u^*\in\D $ such that $ |\C_{\u^*}|\ge|\C|/|\D| $. 
We further define the punctured subcodes of $ \C_{\u^*} $ {with respect to} $ \u^* $. 
For $ u\in\{0,1\} $, let 
$\C_{\u^*,u} \coloneqq \{(x_i)_{i\in[n],u^*_i = u}:\x\in\C\}$ be the subcode obtained by restricting codewords in $\C$ to the coordinates $i$'s where $ u^*_i = u $. 
Note that $ \C_{\u^*, 1}\in\{0,1\}^{nv} $ and $ \C_{\u^*, 0}\in\{0,1\}^{n(1-v)} $. 

The punctured subcodes $ \C_{\u^*,0} $ and $ \C_{\u^*,1} $ enjoy the following property. 
For $ u\in\{0,1\} $, all codewords in $ \C_{\u^*,u} $ have the same type $ P_{X|U = u} $. 
That is, every codeword in $ \C_{\u^*,0} $ has weight $ w-a $ and every codeword in $ \C_{\u^*,1} $ has weight $ a $. 
Clearly, 
$ |\C_{\u^*}|\le|\C_{\u^*,0}|\cdot|\C_{\u^*,1}| $. 

Suppose that there are $ n\tau_0 $ errors in the locations $i$'s such that $ u^*_i = 0 $ and there are $ n\tau_1 $ errors in the locations $ j $'s such that $ u^*_j = 1 $. 
The parameters $ \tau_0 $ and $ \tau_1 $ satisfy $ \tau_0 + \tau_1 \le \tau $. 
As long as the following two conditions are satisfied, 
\begin{align}
\frac{\tau_0}{1-v} &> \frac{w-a}{1-v} - \left(\frac{w-a}{1-v}\right)^L, \quad
\frac{\tau_1}{v} > \frac{a}{v} - \left(\frac{a}{v}\right)^L, \notag 
\end{align}
both $ |\C_{\u^*,0}| $ and $ |\C_{\u^*,1}| $ are at most a constant (independent of $n$) by Theorem~\ref{thm:upper-bound-constant-weight}. 
Specifically, if 
\begin{align}
\frac{\tau_0}{1-v} = \frac{w-a}{1-v} - \left(\frac{w-a}{1-v}\right)^L+\epsilon, \quad
\frac{\tau_1}{v} = \frac{a}{v} - \left(\frac{a}{v}\right)^L+\epsilon, \notag 
\end{align}
then 
\begin{align}
|\C_{\u^*,0}| &\le \frac{C_{0}}{\epsilon} + O_L(1), \quad 
|\C_{\u^*,1}| \le \frac{C_{1}}{\epsilon} + O_L(1), \notag
\end{align}
where 
\begin{align}
C_0 &\coloneqq \left[\frac{w-a}{1-v} - \left(\frac{w-a}{1-v}\right)^L\right]\binom{L}{2}, \quad 
C_1 \coloneqq \left[\frac{a}{v} - \left(\frac{a}{v}\right)^L\right]\binom{L}{2}. \notag 
\end{align}
Therefore 
\begin{align}
|\C| &\le |\C_{\u^*}|\cdot|\D| 
\le |\C_{\u^*,0}|\cdot|\C_{\u^*,1}|\cdot|\D|
\le \left(\frac{C_0}{\epsilon}+O_L(1)\right)\cdot\left(\frac{C_1}{\epsilon}+O_L(1)\right)\cdot2^{n(I(w,v,a)+\epsilon)}, \notag 
\end{align}
and
$R(\C)\le I(w,v,a) + \epsilon + o(1)$. 
Taking $ \epsilon\to0 $ finishes the proof. 
\end{proof}

Though not used in the proof of Theorem~\ref{thm:eb}, the upper bound on the size of a covering given in Lemma~\ref{lem:covering} is actually tight (on the exponential scale). 
Such a converse follows from a simple sphere-covering-type argument as below. 
\begin{lemma}
\label{lem:covering-converse}
Let $ I(w,v,a) $ and $ P_{U,X} $ be as defined in Lemma~\ref{lem:covering}. 
Then any covering $ \D\subset S_{nv}^{\mathrm{H}}(\0) $ of $ S_{nw}^{\mathrm{H}}(\0) $ satisfying the property in Lemma~\ref{lem:covering} has size at least $ 2^{nI(w,v,a) - o(n)} $. 
\end{lemma}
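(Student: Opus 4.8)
The plan is to prove the matching lower bound on the covering number via a straightforward sphere-covering (volume) argument. Fix $\x\in S^{\mathrm{H}}_{nw}(\0)$ and ask: how many vectors $\u\in S^{\mathrm{H}}_{nv}(\0)$ satisfy $T_{\u,\x} = P_{U,X}$? Such a $\u$ must be $1$ on exactly $na$ of the $nw$ coordinates where $x_i=1$ and $1$ on exactly $n(v-a)$ of the $n(1-w)$ coordinates where $x_i=0$. Hence the number of ``good'' centers for a fixed $\x$ is exactly $\binom{nw}{na}\binom{n(1-w)}{n(v-a)}$, which by Stirling equals $2^{nwH(a/w)+n(1-w)H((v-a)/(1-w))+o(n)}$. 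Since every $\x$ must be covered by at least one element of $\D$, a simple double counting over the set of $\x$'s gives $|\D|\cdot\binom{nw}{na}\binom{n(1-w)}{n(v-a)} \ge |S^{\mathrm{H}}_{nw}(\0)| = \binom{n}{nw} = 2^{nH(w)+o(n)}$.

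First I would make the counting rigorous: let $G(\x)\coloneqq\{\u\in S^{\mathrm{H}}_{nv}(\0): T_{\u,\x}=P_{U,X}\}$ and observe $|G(\x)| = \binom{nw}{na}\binom{n(1-w)}{n(v-a)}$ for every $\x\in S^{\mathrm{H}}_{nw}(\0)$, independent of $\x$ by symmetry. The covering property says $\bigcup_{\u\in\D}A(\u,P_{U,X}) = S^{\mathrm{H}}_{nw}(\0)$ where $A(\u,P_{U,X})$ is as in Eqn.~\eqref{eqn:A-def}; equivalently, for each $\x$ there is $\u\in\D$ with $\u\in G(\x)$. Then
\begin{align}
|S^{\mathrm{H}}_{nw}(\0)| &\le \sum_{\u\in\D}|A(\u,P_{U,X})| = \sum_{\u\in\D}|G(\x_0)| = |\D|\cdot\binom{nw}{na}\binom{n(1-w)}{n(v-a)}, \notag
\end{align}
using that $|A(\u,P_{U,X})|$ is the number of $\x$ jointly typical with a fixed $\u$, which by the same symmetry equals $|G(\x_0)|$ for any fixed $\x_0$ (both count the same combinatorial object: a choice of $na$ and $n(v-a)$ positions within blocks determined by whichever vector is fixed). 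Solving for $|\D|$ and applying the Stirling estimates $\binom{n}{nw}=2^{nH(w)+o(n)}$, $\binom{nw}{na}=2^{nwH(a/w)+o(n)}$, $\binom{n(1-w)}{n(v-a)}=2^{n(1-w)H((v-a)/(1-w))+o(n)}$ gives $|\D|\ge 2^{n[H(w) - wH(a/w) - (1-w)H((v-a)/(1-w))] - o(n)}$.

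Finally I would verify that the exponent in the last display equals $I(w,v,a)$. This is a routine entropy identity: expanding $H(w)-wH(a/w)-(1-w)H((v-a)/(1-w))$ and matching it against the four-term expression for $I(w,v,a)$ in Lemma~\ref{lem:covering}. Concretely, $I(w,v,a)$ is the mutual information $I(U;X)$ for $P_{U,X}$, and one has $I(U;X) = H(X) - H(X\mid U) = H(w) - \big[(1-v)H\big(\tfrac{w-a}{1-v}\big) + vH\big(\tfrac{a}{v}\big)\big]$; separately it also equals $H(U)-H(U\mid X) = H(v) - \big[wH(\tfrac{a}{w}) + (1-w)H(\tfrac{v-a}{1-w})\big]$, and it is the second of these two forms that our volume bound produces (up to the $H(w)$ versus $H(v)$ swap, which is reconciled by the symmetry $I(U;X)=I(X;U)$). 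So the bound reads $|\D|\ge 2^{nI(w,v,a)-o(n)}$ as claimed.

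The only mild subtlety — and the step I would be most careful with — is the bookkeeping in the double-counting step: making sure that $\sum_{\u\in\D}|A(\u,P_{U,X})|$ genuinely upper-bounds $|S^{\mathrm{H}}_{nw}(\0)|$ (each covered $\x$ contributes to at least one summand, so the inequality goes the right way) and that the per-center typical-set size $|A(\u,P_{U,X})|$ is the quantity $\binom{n(1-v)}{n(w-a)}\binom{nv}{na}$ rather than the per-$\x$ count $\binom{nw}{na}\binom{n(1-w)}{n(v-a)}$ — these two products are equal on the exponential scale (both exponents equal $I(w,v,a)$ by the two expressions for mutual information above) but not literally, so I would state the bound using whichever is cleaner and cite the entropy identity. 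Everything else is Stirling's approximation and the definition of $I(w,v,a)$.
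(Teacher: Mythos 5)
Your overall strategy is exactly the paper's: double-count the pairs $(\u,\x)$ with $\u\in\D$ and $T_{\u,\x}=P_{U,X}$, lower-bound the count by $|S^{\mathrm{H}}_{nw}(\0)|$ via the covering property, upper-bound it by $|\D|$ times a per-center count, and identify the resulting exponent with $I(w,v,a)$. The correct instantiation, which you do mention, keeps the per-center count $|A(\u,P_{U,X})|=\binom{nv}{na}\binom{n(1-v)}{n(w-a)}$, whose exponent is $H(X\mid U)=vH(a/v)+(1-v)H\bigl(\tfrac{w-a}{1-v}\bigr)$, so that $|\D|\ge 2^{n[H(w)-H(X\mid U)]-o(n)}=2^{nI(w,v,a)-o(n)}$.

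However, the step you flag as a ``mild subtlety'' hides a genuine error. The two products $|A(\u,P_{U,X})|=\binom{nv}{na}\binom{n(1-v)}{n(w-a)}$ and $|G(\x)|=\binom{nw}{na}\binom{n(1-w)}{n(v-a)}$ are \emph{not} equal on the exponential scale, and neither exponent equals $I(w,v,a)$: their exponents are $H(X\mid U)=H(w)-I(w,v,a)$ and $H(U\mid X)=H(v)-I(w,v,a)$ respectively, which differ by $H(w)-H(v)$. Consequently the claimed identity $\sum_{\u\in\D}|A(\u,P_{U,X})|=|\D|\cdot\binom{nw}{na}\binom{n(1-w)}{n(v-a)}$ is false, and the bound you derive from it, $|\D|\ge 2^{n[H(w)-wH(a/w)-(1-w)H(\frac{v-a}{1-w})]-o(n)}$, equals $2^{n[I(w,v,a)+H(w)-H(v)]-o(n)}$ rather than $2^{nI(w,v,a)-o(n)}$. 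The symmetry $I(U;X)=I(X;U)$ cannot reconcile this, because the $H(w)$ in your numerator comes from $|S^{\mathrm{H}}_{nw}(\0)|$ and not from the mutual-information identity; indeed, when $H(w)>H(v)$ your stated lower bound would contradict the upper bound of Lemma~\ref{lem:covering}. The fix is immediate and is what the paper does: keep $|A(\u,P_{U,X})|$ in the sum throughout, so the exponent $H(w)-H(X\mid U)=I(w,v,a)$ comes out correctly.
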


\begin{proof}
For any $ \u\in S_{nv}^{\mathrm{H}}(\0) $, we first compute the size of $ A(\u,P_{U,X}) $ defined in Eqn.~\eqref{eqn:A-def}. 
\begin{align}
|A(\u,P_{U,X})| &= \binom{n(1-v)}{n(w-a)}\binom{nv}{na} \le 2^{n\left[(1-v)H\left(\frac{w-a}{1-v}\right) + vH\left(\frac{a}{v}\right)\right]}. \notag 
\end{align}
Now, by the covering property of $ \D $, we have
\begin{align}
&& S_{nw}^{\mathrm{H}}(\0) &= \bigcup_{\u\in\D} A(\u,P_{U,X}) \notag \\
\implies&& |S_{nw}^{\mathrm{H}}(\0)| &\le \sum_{\u\in\D}|A(\u,P_{U,X})| \notag \\
\implies&& 2^{nH(w) - o(n)} &\le |\D| 2^{n\left[ (1-v)H\left(\frac{w-a}{1-v}\right) + vH\left(\frac{a}{v}\right)\right]} \notag \\
\implies&& |\D| &\ge 2^{n\left[ H(w) - (1-v)H\left(\frac{w-a}{1-v}\right) - vH\left(\frac{a}{v}\right)\right] - o(n)} = 2^{nI(w,v,a) - o(n)}. \qedhere 
\end{align}
\end{proof}

\subsection{Lower bound on $(L-1)$-list-decoding capacity}
\label{sec:lb-cap}


In this section, our goal is to construct an $(L-1)$-list-decodable code $ \C\subset\{0,1\}^n $ for a Z-channel with noise level $ \tau $. 
We would like to obtain a lower bound on the rate $ R(\C) $ that can be achieved by such a $\C$.

Before deriving the bound, we first introduce a set of distributions that will play an important role in the proceeding analysis. 
For any $ w\in[0,1] $ and $ \tau\in[0,w] $, define the following set of distributions:
\begin{align}
\mathcal{K}(w,\tau) &\coloneqq \left\{P_{X_1,\cdots,X_L}\in\P(\{0,1\}^L):\begin{array}{l}
\forall i\in[L],\;P_{X_i}(1) = w \\
w - P_{X_1,\cdots,X_L}(1,\cdots,1) \le \tau
\end{array}\right\}. \label{eqn:def-k} 
\end{align}
In words, $ \mathcal{K}(w,\tau) $ is the collection of distributions $ P_{X_1,\cdots,X_L} $ on length-$L$ binary strings satisfying:  
  $(i)$ each of its marginal $ P_{X_i} $ (for $ 1\le i\le L $) is $ \mathsf{Ber}(w) $;
  $(ii)$ the probability mass $ P_{X_1,\cdots,X_L}(1,\cdots,1) $ on the all-one string is at least $ w-\tau $. 

We now describe our construction and analyze it. 
Our approach follows the standard random coding with expurgation technique. 
We sample a codebook $ \C\in\{0,1\}^{M\times n} $ each entry of which is i.i.d.\ according to $ \mathsf{Ber}(w) $.

First, we claim that in expectation a $ 1/\mathrm{poly}(n) $ fraction of the codewords have weight exactly $ nw $.
To see this, note that for any $ \x\in\C $, 
\begin{align}
\Pr\{\wt(\x) = nw\} &= \binom{n}{nw}w^{nw}(1-w)^{n(1-w)} \notag \\
&= \frac{\sqrt{2\pi n}(n/e)^n}{\sqrt{2\pi nw}(nw/e)^{nw}\sqrt{2\pi(1-w)}(n(1-w)/e)^{n(1-w)}}(1-O_L(n^{-1}))w^{nw}(1-w)^{n(1-w)} \notag \\
&= \frac{1}{\sqrt{2\pi nw(1-w)}}(1-O_L(n^{-1})), \notag 
\end{align}
where the second step is by Stirling's approximation $ n! = \sqrt{2\pi n}(n/e)^n(1+O_L(n^{-1})) $. 
Therefore, 
\begin{align}
\E|\{\x\in\C:\wt(\x) = nw\}| &= \frac{M}{\sqrt{2\pi nw(1-w)}}(1-O_L(n^{-1})). \notag 
\end{align}

Second, we compute the expected number of \emph{bad} $L$-lists, i.e., those $L$-lists whose list-decoding radius is at most $ n\tau $. 
For any list $\L\in\binom{\C}{L} $ of codewords all of weight $nw$, it is clear that $ \rad(\L)\le n\tau $ (where $ \rad(\L) $ was defined in Eqn.~\eqref{eqn:def-rad-cw}) if and only if the joint type $ T_\L $ of $ \L $ is in $ \mathcal{K}(w,\tau) $. 
By Sanov's theorem (Theorem~\ref{thm:sanov}), 
\begin{align}
-\frac{1}{n}\log\Pr\{\rad(\L)\le n\tau\} 
&= -\frac{1}{n}\log\Pr\{T_\L\in\mathcal{K}(w,\tau)\} \notag \\
&\xrightarrow{n\to\infty} \min_{P_{X_1,\cdots,X_L}\in\mathcal{K}(w,\tau)} D\left(P_{X_1,\cdots,X_L}\middle\|\mathsf{Ber}^{\otimes L}(w)\right) \eqqcolon E(w,\tau). \notag 
\end{align}
Therefore, for sufficiently large $n$, we have 
\begin{align}
\E\left|\left\{\L\in\binom{\C}{L}:\rad(\L)\le n\tau\right\}\right| &= \binom{M}{L} 2^{-nE(w,\tau)(1-o(1))}. \notag 
\end{align}

Finally, we expurgate all codewords whose weights are not exactly $nw$; we also expurgate one codeword from each of the \emph{bad} $L$-lists $\L$, i.e., those $\L$ such that $ \rad(\L)\le n\tau $. 
If
\begin{align}
&& \binom{M}{L}2^{-nE(w,\tau)(1-o(1))} &\le \frac{M}{2\sqrt{2\pi nw(1-w)}}(1-O_L(n^{-1})) \notag \\
\impliedby && M^L2^{-nE(w,\tau)(1-o(1))} &\le \frac{M}{2\sqrt{2\pi nw(1-w)}}(1-O_L(n^{-1})) \notag \\
\impliedby && RL - E(w,\tau)(1-o(1)) &\le R(1-o(1)) \notag \\
\impliedby && R &\le \frac{E(w,\tau)}{L-1}(1-o(1)), \notag 
\end{align}
then after expurgation, we are left with at least $ \frac{M}{2\sqrt{2\pi nw(1-w)}}(1-O(n^{-1})) $ codewords which form an $(L-1)$-list-decodable code $ \C'\subset\C $. 
Note that the rate $R(\C')$ is asymptotically equal to $ R(\C) $.

Therefore, we have proved the following theorem. 
\begin{theorem}
\label{thm:rand-cod-lb}
There exist $w$-constant-weight $(L-1)$-list-decodable codes for the Z-channel with list-decoding radius $ \tau<\tau_L(w) $ of rate at least 
\begin{align}
\frac{1}{L-1}\min_{P_{X_1,\cdots,X_L}\in\mathcal{K}(w,\tau)} D\left(P_{X_1,\cdots,X_L}\middle\|\mathsf{Ber}^{\otimes L}(w)\right), \notag 
\end{align}
where $ \mathcal{K}(w,\tau) $ was defined in Eqn.~\ref{eqn:def-k}. 
\end{theorem}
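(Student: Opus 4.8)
The plan is to use the standard random coding with expurgation argument. First I would sample a random codebook $\C\in\{0,1\}^{M\times n}$ whose entries are i.i.d.\ $\mathsf{Ber}(w)$, with $M = 2^{nR}$ for a rate $R$ to be determined. The goal is to show that after discarding a negligible fraction of codewords we are left with an $(L-1)$-list-decodable constant-weight code, so it suffices to control two quantities in expectation: the number of codewords of weight exactly $nw$, and the number of \emph{bad} $L$-lists, i.e.\ sublists $\L\in\binom{\C}{L}$ with $\rad(\L)\le n\tau$.

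For the first quantity, a direct application of Stirling's approximation $n! = \sqrt{2\pi n}(n/e)^n(1+O(n^{-1}))$ to $\Pr\{\wt(\x) = nw\} = \binom{n}{nw}w^{nw}(1-w)^{n(1-w)}$ shows this probability is $\Theta(n^{-1/2})$, so in expectation a $1/\mathrm{poly}(n)$ fraction of the $M$ codewords land on the weight-$nw$ sphere. For the second quantity, I would first translate the geometric condition into one on the joint type: by Eqn.~\eqref{eqn:def-rad-cw}, an $L$-list of \emph{equal-weight-$nw$} vectors has $\rad(\L)\le n\tau$ precisely when $w - T_\L(1,\cdots,1)\le\tau$, hence precisely when $T_\L\in\mathcal{K}(w,\tau)$ with $\mathcal{K}(w,\tau)$ as in Eqn.~\eqref{eqn:def-k}; note this set also automatically enforces that all $L$ codewords have weight exactly $nw$. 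Viewing $\L$ as a single i.i.d.\ string over the alphabet $\{0,1\}^L$ distributed as $\mathsf{Ber}^{\otimes L}(w)$ and applying Sanov's theorem (Theorem~\ref{thm:sanov}) gives $\Pr\{T_\L\in\mathcal{K}(w,\tau)\} = 2^{-nE(w,\tau)(1-o(1))}$ where $E(w,\tau) = \min_{P\in\mathcal{K}(w,\tau)} D(P\|\mathsf{Ber}^{\otimes L}(w))$. By linearity the expected number of bad lists is $\binom{M}{L}2^{-nE(w,\tau)(1-o(1))}$.

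Finally I would combine these by expurgation: delete every codeword of weight $\ne nw$ and one codeword from each bad list. Using Markov's inequality it suffices that the expected number of bad lists is at most half the expected number of weight-$nw$ codewords, i.e.\ $\binom{M}{L}2^{-nE(w,\tau)(1-o(1))} \le \frac{1}{2} M\cdot\Theta(n^{-1/2})$. Since $\binom{M}{L}\le M^L$ and $M = 2^{nR}$, this reduces to $RL - E(w,\tau)(1-o(1)) \le R - o(1)$, i.e.\ $R\le E(w,\tau)/(L-1) - o(1)$. The surviving code is constant-weight $nw$, contains no bad list and hence has list-decoding radius $>n\tau$, and retains $M/\mathrm{poly}(n)$ codewords, so its rate is $R - o(1)$; letting $R$ approach $E(w,\tau)/(L-1)$ yields the claimed bound.

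The main technical point to be careful about is the application of Sanov's theorem: Theorem~\ref{thm:sanov} requires the target set of distributions to equal the closure of its interior. For $0<\tau<\tau_L(w) = w-w^L$ the polytope $\mathcal{K}(w,\tau)$ is nonempty (it contains distributions with $P(1,\cdots,1)$ strictly between $w^L$ and $w$) and has nonempty relative interior, so one should either verify this regularity directly or pass to a slightly enlarged set $\mathcal{K}(w,\tau+\delta)$ and let $\delta\to0$ at the end using continuity of $E(w,\cdot)$. A secondary subtlety is that codewords are generated i.i.d.\ rather than with exact weight $nw$, but this causes no problem because the type characterization of $\rad(\L)$ and the set $\mathcal{K}(w,\tau)$ are only invoked for (and only ``see'') the equal-weight codewords that survive expurgation. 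The remainder is routine first-moment bookkeeping.
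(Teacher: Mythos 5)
Your proposal is correct and follows essentially the same route as the paper's proof: i.i.d.\ $\mathsf{Ber}(w)$ random coding, a Stirling estimate showing a $1/\mathrm{poly}(n)$ fraction of codewords lie on the weight-$nw$ sphere, Sanov's theorem to bound the probability that an $L$-list has joint type in $\mathcal{K}(w,\tau)$, and expurgation of off-weight codewords plus one codeword per bad list, yielding $R\le E(w,\tau)/(L-1)$. Your added care about the regularity hypothesis of Sanov's theorem and the i.i.d.-versus-exact-weight subtlety is a sensible refinement but does not change the argument.
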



\subsection{List-decoding capacity}
\label{sec:listdec-cap-large}
Obtaining the exact $(L-1)$-list-decoding capacity for the Z-channel is a difficult question and we are only able to derive nonmatching upper and lower bounds in Sec.~\ref{sec:ub-cap} and~\ref{sec:lb-cap} respectively. 
Nevertheless, we can compute the list-decoding capacity $ C_{L-1}(w,\tau) $ when $ L $ is sufficiently large. 
Specifically, we determine the limit $ \lim\limits_{L\to\infty}C_{L-1}(w,\tau) $ using the following two lemmas. 
The proofs below follow the outline that one uses to prove the standard list-decoding capacity theorem for symmetric errors. 

Define
\begin{align}
C_{\mathrm{LD}}(w,\tau) &\coloneqq -(1-w+\tau)\log(1-w+\tau) + \tau\log \tau - w\log w. \label{eqn:def-listdec-cap-large}
\end{align}
Note that for any fixed $w$, $ C_{\mathrm{LD}}(w,\tau) $ is convex and decreasing in $\tau$. 
It attains its maximum value $ H(w) $ at $ \tau = 0 $ and attains its minimum value $0$ at $ \tau=w $. 
Furthermore, $ C_{\mathrm{LD}}(w,\tau) $ is concave in $w$ and attains its maximum value at $ w = \frac{1+\tau}{2} $ for any fixed $\tau$. 
The corresponding maximum value is $ C_{\mathrm{LD}}(\tau)\coloneqq -(1+\tau)\log\frac{1+\tau}{2}+\tau\log\tau $ which is in turn convex and decreasing with maximum value $1$ at $ \tau = 0 $ and minimum value $0$ at $ \tau = 1 $.

\begin{lemma}[Upper bound]
\label{label:listdec-cap-large-ub}
For any $ \delta\in(0,1) $, any $w$-constant-weight code of rate $ C_{\mathrm{LD}}(w,\tau)+\delta $ for the Z-channel with error fraction $\tau$ is not $ (L-1) $-list-decodable for any $ L < 2^{n\delta+o(n)} $. 
\end{lemma}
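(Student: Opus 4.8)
The plan is to run the classical ``output list size explosion'' argument used for list-decoding capacity upper bounds (in the spirit of the Elias bound), adapted to the asymmetric setting. The key point is to identify the right received word: for a $w$-constant-weight code, a word of weight $nw-t$ with $t\coloneqq\lceil\tau n\rceil$ lies in the Z-ball (of radius $t$) of \emph{every} codeword whose support contains it, and a counting/averaging argument shows that some such word is contained in exponentially many codeword supports, forcing a huge decoding list.

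Concretely, I would first reduce to integer parameters: since $\C$ is $w$-constant-weight we have $nw\in\Z$, and we are in the regime $0<\tau<w$ (true below the Plotkin point, since $\tau_L(w)=w-w^L<w$), so for large $n$ the numbers $t=\lceil\tau n\rceil\le nw$ and $v\coloneqq nw-t$ are legitimate. For a word $\x$ of weight $v$ and a codeword $\c\in\C$, we have $\c\in B^{\mathrm{Z}}_t(\x)$ exactly when $\s(\x)\subseteq\s(\c)$ (the extra-ones condition $\wt(\c)-\wt(\x)=t\le t$ is then automatic). I would then double-count the set of pairs $\{(\x,\c):\wt(\x)=v,\ \c\in\C,\ \s(\x)\subseteq\s(\c)\}$. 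Counting by $\c$ first, each codeword of weight $nw$ has exactly $\binom{nw}{v}=\binom{nw}{t}$ sub-supports of size $v$, so the number of pairs is $|\C|\binom{nw}{t}$. Since there are $\binom{n}{v}$ words of weight $v$, averaging (pigeonhole) yields a word $\x^*$ of weight $v$ with $|B^{\mathrm{Z}}_t(\x^*)\cap\C|\ge|\C|\binom{nw}{t}/\binom{n}{v}$.

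It remains to estimate this ratio. Using the standard entropy bounds $\binom{nw}{t}\ge 2^{nwH(\tau/w)}/\mathrm{poly}(n)$ and $\binom{n}{v}\le 2^{nH(w-\tau)}$ (the $O(1)$ discrepancy between $t$ and $\tau n$ only costs a $\mathrm{poly}(n)$ factor), the ratio is at least $2^{\,n(wH(\tau/w)-H(w-\tau))-o(n)}$. A short computation with $H(x)=-x\log x-(1-x)\log(1-x)$ gives the identity $wH(\tau/w)-H(w-\tau)=-C_{\mathrm{LD}}(w,\tau)$, which is precisely where $C_{\mathrm{LD}}$ emerges. Hence if $R(\C)=C_{\mathrm{LD}}(w,\tau)+\delta$, i.e.\ $|\C|=2^{n(C_{\mathrm{LD}}(w,\tau)+\delta)}$, then $|B^{\mathrm{Z}}_t(\x^*)\cap\C|\ge 2^{n\delta-o(n)}$. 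Consequently, for every integer $L$ with $L-1<2^{n\delta-o(n)}$ — equivalently $L<2^{n\delta+o(n)}$ after renaming the sublinear slack — the code $\C$ has a Z-ball of radius $t$ containing at least $L$ codewords, so by Definition~\ref{def:listdec} it is not $(L-1)$-list-decodable.

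I do not anticipate a genuine obstacle: this is a one-shot counting bound, and the only real content is (a) choosing the center weight $v=nw-t$ so that the Z-ball captures exactly the codewords containing a given $v$-subset, and (b) checking that the entropy algebra collapses to $C_{\mathrm{LD}}(w,\tau)$. The remaining care is bookkeeping — handling the rounding $t=\lceil\tau n\rceil$ (only $\mathrm{poly}(n)$, hence $o(n)$ in the exponent), tracking which $o(n)$ term is absorbed into the exponent of $L$ in the final statement, and recording the regime $0<\tau<w$ so that $v$, $\tau/w$ and $w-\tau$ are all strictly inside $(0,1)$.
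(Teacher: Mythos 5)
Your proposal is correct and is essentially the paper's own argument: the paper averages over a uniformly random center $\y$ of weight $n(w-\tau)$ and shows $\E|B^{\mathrm{Z}}_{n\tau}(\y)\cap\C| = |\C|\binom{nw}{n(w-\tau)}/\binom{n}{n(w-\tau)} = 2^{n\delta+o(n)}$, which is exactly your double-counting/pigeonhole ratio, and both hinge on the same identity $wH(\tau/w)-H(w-\tau)=-C_{\mathrm{LD}}(w,\tau)$. Your write-up is in fact slightly more careful about rounding and about why the condition $\Delta(\c,\x^*)\le t$ is automatic for the chosen center weight.
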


\begin{proof}
Let $\C$ be any $w$-constant-weight code of rate $ C_{\mathrm{LD}}(w,\tau)+\delta $. 
To show non-list-decodability, we need to exhibit a \emph{bad} center $\y$ such that the Z-ball around $\y$ of radius $ n\tau $ contains at least $2^{n\delta+o(n)}$ codewords from $\C$. 
Indeed, we will show that a random $\y$ has such a property. 
Specifically, let $ \y $ be uniformly distributed among all vectors of weight $ n(w-p) $. 
We compute the expected number of codewords in $ B^{\mathrm{Z}}_{n\tau}(\y) $. 
\begin{align}
    \E|B^{\mathrm{Z}}_{n\tau}(\y)\cap\C| &= \sum_{\x\in\C} \Pr\{B^{\mathrm{Z}}_{n\tau}(\y)\ni\x\} \notag \\
    &= \sum_{\x\in\C}\Pr\{\s(\y)\subseteq\s(\x)\} \notag \\
    &= \sum_{\x\in\C}\frac{\binom{nw}{n(w-p)}}{\binom{n}{n(w-p)}} \notag \\
    &= 2^{nC_{\mathrm{LD}}(w,\tau)+n\delta}2^{n\left(wH\left(\frac{w-p}{w}\right)-H(w-p)\right)+o(n)} \notag \\
    &= 2^{n\delta+o(n)}. \notag
\end{align}
This finishes the proof. 
\end{proof}

\begin{lemma}[Construction]
\label{label:listdec-cap-large-lb}
For any $ \delta\in(0,1) $ and any $ L>1/\delta $, there exist $ (L-1) $-list-decodable codes for the Z-channel with error fraction $\tau$ of rate $ C_{\mathrm{LD}}(w,\tau)-\delta $. 
\end{lemma}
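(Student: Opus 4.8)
The plan is to prove this by a random coding argument at the level of $w$-constant-weight codes, which is the one-sided analogue of the classical construction achieving list-decoding capacity for symmetric errors. Since the statement is asymptotic, it suffices to produce, for all large $n$, a $w$-constant-weight code of size at least $2^{n(C_{\mathrm{LD}}(w,\tau)-\delta)-o(n)}$ that is $(L-1)$-list-decodable with radius $\tau$. First I would draw $M\coloneqq 2^{n(C_{\mathrm{LD}}(w,\tau)-\delta)}$ codewords $\x_1,\dots,\x_M$ independently and uniformly from the sphere $S^{\mathrm{H}}_{nw}(\0)$, and then show that with probability bounded away from $0$ the set of distinct codewords among them is $(L-1)$-list-decodable with radius $\tau$.

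The key structural simplification is to reduce $(L-1)$-list-decodability to a statement about a single \emph{critical} center weight. For a $w$-constant-weight code $\C$ and a center $\y$, one has $B^{\mathrm{Z}}_{n\tau}(\y)\cap\C=\{\x\in\C:\s(\y)\subseteq\s(\x)\}$ whenever $\wt(\y)\ge n(w-\tau)$, and $B^{\mathrm{Z}}_{n\tau}(\y)\cap\C=\emptyset$ when $\wt(\y)<n(w-\tau)$ (the weight-increase budget is exceeded). Moreover $\{\x\in\C:\s(\y)\subseteq\s(\x)\}$ only shrinks when $\s(\y)$ is enlarged. Hence $\C$ is $(L-1)$-list-decodable with radius $\tau$ if and only if $|\{\x\in\C:\s(\y)\subseteq\s(\x)\}|\le L-1$ for every $\y$ of weight \emph{exactly} $n(w-\tau)$, so it is enough to union-bound over the $\binom{n}{n(w-\tau)}$ centers of that one weight.

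Fix such a $\y$. The events $\{\s(\y)\subseteq\s(\x_i)\}$, $i\in[M]$, are independent, each of probability $p_0\coloneqq\binom{nw}{n(w-\tau)}/\binom{n}{n(w-\tau)}$; the same binomial-coefficient estimate already carried out in the proof of Lemma~\ref{label:listdec-cap-large-ub} gives $p_0=2^{-nC_{\mathrm{LD}}(w,\tau)+o(n)}$, so $Mp_0=2^{-n\delta+o(n)}\to0$. Therefore $\Pr\{|\{\x\in\C:\s(\y)\subseteq\s(\x)\}|\ge L\}\le\binom{M}{L}p_0^L\le(Mp_0)^L=2^{-nL\delta+o(nL)}$, and a union bound over all $\binom{n}{n(w-\tau)}=2^{nH(w-\tau)+o(n)}$ critical centers bounds the probability of a bad center by $2^{n(H(w-\tau)-L\delta)+o(nL)}$. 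This is exactly where the hypothesis $L>1/\delta$ is used: since $0\le H(w-\tau)\le1<L\delta$, the exponent is strictly negative, so this probability tends to $0$. In parallel, $M/\binom{n}{nw}=2^{n(C_{\mathrm{LD}}(w,\tau)-\delta-H(w))+o(n)}\le2^{-n\delta+o(n)}\to0$, so $\E[M-\#\{\text{distinct codewords}\}]\le\binom{M}{2}/\binom{n}{nw}=o(M)$ and, by Markov, the sampled code has $M(1-o(1))$ distinct codewords with high probability. For $n$ large both good events occur simultaneously with positive probability, and the corresponding code of distinct codewords has size $M(1-o(1))=2^{n(C_{\mathrm{LD}}(w,\tau)-\delta)-o(n)}$ and is $(L-1)$-list-decodable with radius $\tau$.

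The only delicate point is the exponent comparison in the union bound over centers: the entire threshold $L>1/\delta$ comes from the crude bound $H(w-\tau)\le1$, which lets the $-L\delta$ term dominate precisely when $L\delta>1$. Everything else — the reduction to the critical center weight, the asymptotics identifying $p_0$ with $2^{-nC_{\mathrm{LD}}(w,\tau)+o(n)}$ (where $C_{\mathrm{LD}}(w,\tau)=H(w-\tau)-wH(\tfrac{w-\tau}{w})$, equivalently the expression in~\eqref{eqn:def-listdec-cap-large}), and the bookkeeping for coincident codewords — is routine, and I do not anticipate a genuine obstruction; this is simply the constant-weight, asymmetric-error version of the standard list-decoding capacity construction, and together with Lemma~\ref{label:listdec-cap-large-ub} it will pin down $\lim_{L\to\infty}C_{L-1}(w,\tau)=C_{\mathrm{LD}}(w,\tau)$.
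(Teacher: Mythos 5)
Your proof is correct and follows essentially the same route as the paper's: a random constant-weight code of rate $C_{\mathrm{LD}}(w,\tau)-\delta$, a union bound over $L$-lists of codewords and over bad centers, and the same exponent comparison producing the threshold $L>1/\delta$. Your one deviation — reducing to centers of weight exactly $n(w-\tau)$ via support-monotonicity of $\{\x:\s(\y)\subseteq\s(\x)\}$, rather than union-bounding over all $2^n$ centers and maximizing $(1-u)H\left(\tfrac{w-u}{1-u}\right)$ over $u\in[w-\tau,w]$ as the paper does — is valid and slightly cleaner, and your explicit expurgation of coincident codewords handles a detail the paper glosses over.
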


\begin{proof}
First note that if $ \tau\ge w $, the capacity is trivially zero under any constant list size since the channel can zero out all bits in any codeword and the list size has to be as large as the code size. 
In the proof we therefore assume $ \tau<w $. 

Let $ \C\subset\{0,1\}^n $ be a collection of $ M\coloneqq2^{nR} $ (where $R = C_{\mathrm{LD}}(w,\tau) - \delta$) random vectors $ \x_1,\cdots,\x_M $ each of which is i.i.d.\ chosen from the set of all weight-$ nw $ binary vectors. 
If a codeword $\x\in\C$ is input to the channel, the weight of the output vector $ \y\in\{0,1\}^n $ is in the range $ [n(w-\tau),nw] $. 
A vector $ \y\in\{0,1\}^n $ of weight in $ [n(w-\tau),nw] $ can result from $ \x\in\C $ through the channel if and only if $ \s(\y)\subset\s(\x) $. 
We compute the probability that this happens to a random codeword $ \x\in\C $. 
For any fixed $ \y\in\{0,1\}^n $ of weight $ nu $ where $ u\in[w-\tau,w] $, we have 
\begin{align}
\Pr\{\s(\y)\subseteq\s(\x)\} &= \frac{\binom{n(1-u)}{n(w-u)}}{\binom{n}{nw}} 
\le 2^{n(1-u)H\left(\frac{w-u}{1-u}\right) - nH(w) + o(n)}. \notag 
\end{align}
Therefore, the probability that such a random code $ \C $ is not $(L-1)$-list-decodable is:
\begin{align}
& \Pr\{\C\text{ is not $(L-1)$-list-decodable}\} \notag \\
&\le \Pr\left\{\exists \{i_1,\cdots,i_L\}\in\binom{[M]}{L},\;\exists\y\in\{0,1\}^n,\;\mathrm{s.t.}\;\wt(\y)\in[n(w-\tau),nw];\;\forall j\in[L],\;\s(\y)\subseteq\s(\x_{i_j})\right\} \notag \\
&\le \binom{M}{L}2^n\max_{w-\tau\le u\le w} \left(2^{n(1-u)H\left(\frac{w-u}{1-u}\right) - nH(w) + o(n)}\right)^L \notag \\
&\le \max_{w-\tau\le u\le w} 2^{nRL + n + nL\left[(1-u)H\left(\frac{w-u}{1-u}\right) - H(w) + o(1)\right]} \notag \\
&\le 2^{n\left[ RL+1+L\left((1-w+\tau)H\left(\frac{\tau}{1-w+\tau}\right)-H(w) +o(1)\right)\right]}. \label{eqn:listdec-cap-exponent} 
\end{align}
The last inequality follows since the function $ f_w(u) \coloneqq (1-u)H\left(\frac{w-u}{1-u}\right) $ has the following property. 
For any fixed $ w $, $ f_w(u) $ is concave and decreasing in $ u\in[0,w] $, and therefore in the domain $ u\in[w-\tau,w] $ it attains its maximum at $ u = w-\tau $. 

By elementary algebraic manipulation, we have 
\begin{align}
(1-w+\tau)H\left(\frac{\tau}{1-w+\tau}\right)-H(w) = -C_{\mathrm{LD}}(w,\tau). \notag 
\end{align}
Recall $ R = C_{\mathrm{LD}}(w,\tau) - \delta $. 
Then the exponent (normalized by $ n^{-1} $) of the RHS of the inequality~\eqref{eqn:listdec-cap-exponent} equals 
\begin{align}
(C_{\mathrm{LD}}(w,\tau) - \delta)L+1-C_{\mathrm{LD}}(w,\tau)L+o(1) = 
-\delta L + 1 + o(1), \notag 
\end{align}
which is negative if $ L>1/\delta + o(1) $. 
That is, the probability that $\C$ is $(L-1)$-list-decodable is at least $ 1-2^{-\Omega(n)} $ for any $ L>1/\delta + o(1) $. 
\end{proof}

Lemma~\ref{label:listdec-cap-large-ub} and~\ref{label:listdec-cap-large-lb} imply the following characterization. 
\begin{theorem}[List-decoding capacity]
\label{thm:listdec-cap-large}
The $(L-1)$-list-decoding capacity of the Z-channel with error fraction $\tau$ under input constraint $ w $ is given by $ \lim\limits_{L\to\infty}C_{L-1}(w,\tau) = C_{\mathrm{LD}}(w,\tau) $ as defined in Eqn.~\eqref{eqn:def-listdec-cap-large}. 
\end{theorem}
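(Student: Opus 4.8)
The plan is to derive the claimed limit by sandwiching $ C_{L-1}(w,\tau) $ between $ C_{\mathrm{LD}}(w,\tau)-\delta $ and $ C_{\mathrm{LD}}(w,\tau)+\delta $ for every $ \delta>0 $, with the two sides coming respectively from Lemma~\ref{label:listdec-cap-large-lb} and Lemma~\ref{label:listdec-cap-large-ub}, and then letting $ \delta\to0 $. Since both lemmas are already in hand, the only genuine work is to be careful about the order in which the limits in $n$, $L$ and $\delta$ are taken.

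For the upper bound I would fix $ \delta\in(0,1) $ and an arbitrary positive integer $L$. Because the threshold $ 2^{n\delta+o(n)} $ in Lemma~\ref{label:listdec-cap-large-ub} tends to infinity with $n$, for all sufficiently large $n$ we have $ L<2^{n\delta+o(n)} $, so that lemma rules out any $w$-constant-weight code of length $n$ and rate at least $ C_{\mathrm{LD}}(w,\tau)+\delta $ being $(L-1)$-list-decodable. Hence every admissible code in the definition of $ C_{L-1}(w,\tau) $ has rate below $ C_{\mathrm{LD}}(w,\tau)+\delta $ for all large $n$, and passing to $ \limsup_{n\to\infty} $ gives $ C_{L-1}(w,\tau)\le C_{\mathrm{LD}}(w,\tau)+\delta $; letting $ \delta\to0 $ shows $ C_{L-1}(w,\tau)\le C_{\mathrm{LD}}(w,\tau) $ for every $L$, hence $ \limsup_{L\to\infty}C_{L-1}(w,\tau)\le C_{\mathrm{LD}}(w,\tau) $.

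For the lower bound I would fix $ \delta\in(0,1) $ and invoke Lemma~\ref{label:listdec-cap-large-lb}, which for each integer $ L>1/\delta $ supplies a sequence of $w$-constant-weight $(L-1)$-list-decodable codes for the Z-channel with error fraction $\tau$ of rate $ C_{\mathrm{LD}}(w,\tau)-\delta $. These codes lie in $ B^{\mathrm{Z}}_{nw}(\0) $ and are therefore admissible, so $ C_{L-1}(w,\tau)\ge C_{\mathrm{LD}}(w,\tau)-\delta $ for all such $L$, giving $ \liminf_{L\to\infty}C_{L-1}(w,\tau)\ge C_{\mathrm{LD}}(w,\tau)-\delta $; letting $ \delta\to0 $ yields $ \liminf_{L\to\infty}C_{L-1}(w,\tau)\ge C_{\mathrm{LD}}(w,\tau) $. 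Combining the two estimates shows that $ \lim_{L\to\infty}C_{L-1}(w,\tau) $ exists and equals $ C_{\mathrm{LD}}(w,\tau) $; I would also remark that $ C_{L-1}(w,\tau) $ is nondecreasing in $L$ (an $(L-1)$-list-decodable code is also $L$-list-decodable), so the limit exists a priori and is pinned down by either bound.

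I do not expect a substantial obstacle here: the main point to watch is that the $n$-dependent threshold $ 2^{n\delta+o(n)} $ in Lemma~\ref{label:listdec-cap-large-ub} forces one to fix $L$ and send $ n\to\infty $ before sending $ \delta\to0 $ in the upper bound, whereas Lemma~\ref{label:listdec-cap-large-lb} requires $ L>1/\delta $ and so is read with $ \delta $ fixed first; reconciling these two quantifier orders is precisely the reason the result is stated as a limit in $L$ rather than as an identity for each finite $L$.
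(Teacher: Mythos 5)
Your proposal is correct and matches the paper exactly: the paper derives Theorem~\ref{thm:listdec-cap-large} directly from Lemma~\ref{label:listdec-cap-large-ub} and Lemma~\ref{label:listdec-cap-large-lb}, leaving the quantifier bookkeeping implicit, which you have simply spelled out (correctly, including the observation on the order of the limits in $n$, $L$, and $\delta$ and the monotonicity of $C_{L-1}(w,\tau)$ in $L$).
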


\section{Capacity of stochastic Z-channels}
\label{sec:cap-stoch-z}
In all other sections of this paper, we considered Z-channels with \emph{adversarial} errors. 
In this section, we derive the capacity of \emph{stochastic} Z-channels from Shannon's seminal channel coding theorem. 

To this end, we first state Shannon's theorem for general discrete memoryless channels (DMCs). 
Let $ \X $ and $ \Y $ be two finite sets denoting the input and output alphabets of the channel. 
A DMC is a stochastic matrix $ W_{Y|X} $ that maps a distribution over $ \X $ to a distribution over $ \Y $. 
When the channel is used $n$ times with an input sequence $ \x\in\X^n $, the output sequence $ \y\in\Y^n $ follows a product law: 
\begin{align}
\Pr\{\y = (b_1,\cdots,b_n)|\x = (a_1,\cdots,a_n)\} &= \prod_{i = 1}^n W_{Y|X}(b_i|a_i), \notag 
\end{align}
for any $ (a_1,\cdots,a_n)\in\X^n $ and $ (b_1,\cdots,b_n)\in\Y^n $. 

{
For a code $ \C = \{\x_1,\cdots,\x_M\}\subset\X^n $, its \emph{rate} is defined as $ R(\C) \coloneqq\frac{1}{n}\log M $. 
The \emph{average (over messages) probability of error} of $\C$ when used over a DMC $ W_{Y|X} $ with a decoder $ \mathrm{Dec}\colon\Y^n\to[M] $ is defined as 
\begin{align}
P_{\mathrm{e},\mathrm{avg}}(\C) &\coloneqq \frac{1}{M} \sum_{i = 1}^M \sum_{y\in\Y^n} \prod_{j = 1}^n W_{Y|X}(y_j|x_{i,j}) \mathds1\{\mathrm{Dec}(y)\ne i\} . \notag 
\end{align}
The \emph{capacity} of the channel is then defined as 
\begin{align}
C(W_{Y|X}) &\coloneqq \limsup_{\epsilon\downarrow0} \limsup_{n\uparrow\infty} \max_{\substack{\C_n\subset\X^n \\ P_{\mathrm{e},\mathrm{avg}}(\C_n)\le\epsilon}} R(\C_n) , \notag 
\end{align}
i.e., the largest rate for which there exists a sequence of codes with vanishing error probability. 
}
The DMC is said to be equipped with input constraints $ \Q\subset\P(\X) $ if the type $ T_{\x} $ of any input sequence $\x\in\C $ is required to belong to $ \Q $. 
{The capacity of $ W_{Y|X} $ with input constraints $\Q$ can be similarly defined. }

\begin{theorem}[Channel coding theorem~\cite{shannon1948mathematical}]
The capacity $C(W_{Y|X})$ of a DMC $ W_{Y|X}\in\P(\Y|\X) $ with input constraints $ \Q $ is given by 
\begin{align}
C(W_{Y|X}) &= \max_{P_X\in\Q} I(X;Y), \label{eqn:cap-dmc} 
\end{align}
where the mutual information is evaluated {with respect to} the joint law $ P_XW_{Y|X} $. 
\end{theorem}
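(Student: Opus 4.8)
The plan is to prove the two inequalities separately: achievability (a coding theorem) via random coding, and the converse via Fano's inequality. Throughout, note that since $\Q$ is closed and hence compact in $\P(\X)$ and $P_X\mapsto I(X;Y)$ is continuous for a fixed channel, the maximum in Eqn.~\eqref{eqn:cap-dmc} is attained; I will write $P_X^*$ for a maximizer.

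For \textbf{achievability}, fix any rate $R<\max_{P_X\in\Q}I(X;Y) = I(X^*;Y^*)$, where the mutual information is computed w.r.t.\ $P_X^*W_{Y|X}$. To respect the input constraint exactly (rather than only on average), I would draw the codebook $\C=\{\x_1,\dots,\x_M\}$ with $M=2^{nR}$ uniformly and independently from the set of all length-$n$ sequences whose type is the closest $n$-type to $P_X^*$; all such sequences then lie in $\Q$ up to an $o(1)$ perturbation, which only costs $o(1)$ in the achieved rate by continuity. Decoding is by joint typicality: on receiving $\y$, declare $\widehat m$ if $\x_{\widehat m}$ is the unique codeword jointly typical with $\y$ w.r.t.\ $P_X^*W_{Y|X}$. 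A standard union bound over the at most $2^{nR}$ incorrect codewords, each of which is jointly typical with $\y$ with probability at most $2^{-n(I(X^*;Y^*)-o(1))}$, shows that the average error probability over the random codebook tends to $0$. Expurgating the worst half of the codewords then yields a code of rate $R-o(1)$ with vanishing \emph{maximal} error probability, establishing $C(W_{Y|X})\ge\max_{P_X\in\Q}I(X;Y)$.

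For the \textbf{converse}, suppose $\C$ has rate $R$ and maximal (hence average) error probability $P_e\to0$. Let $M$ be a uniformly random message, $X^n=X^n(M)$ the transmitted codeword, and $Y^n$ the output. Fano's inequality gives $H(M\mid Y^n)\le 1+nRP_e$, so
\[
nR = H(M) = I(M;Y^n)+H(M\mid Y^n) \le I(X^n;Y^n)+1+nRP_e,
\]
using the data-processing inequality $I(M;Y^n)\le I(X^n;Y^n)$. Since the channel is memoryless, $I(X^n;Y^n)\le\sum_{i=1}^n I(X_i;Y_i)$. Writing $P_{X_i}$ for the marginal law of $X_i$ under a uniform message and $\overline P_X\coloneqq\frac1n\sum_{i=1}^n P_{X_i}$, I would observe that $\overline P_X=\frac1M\sum_{m\in[M]}T_{\x_m}$ is an average of codeword types, each of which lies in $\Q$; hence $\overline P_X\in\Q$ provided $\Q$ is convex (which holds for the constant-weight constraint $\Q=\{\mathsf{Ber}(w)\}$ used in this paper, and more generally for constraints of the form $\E[g(X)]\le\Gamma$). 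As $P_X\mapsto I(X;Y)$ is concave for a fixed channel, Jensen's inequality yields $\frac1n\sum_i I(X_i;Y_i)\le I(\overline X;\overline Y)\le\max_{P_X\in\Q}I(X;Y)$. Combining, $R(1-P_e)\le\max_{P_X\in\Q}I(X;Y)+1/n$, and letting $n\to\infty$ gives $C(W_{Y|X})\le\max_{P_X\in\Q}I(X;Y)$.

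The content here is essentially Shannon's theorem and the argument is textbook, so there is no genuine obstacle; the only points requiring mild care are (i) enforcing the input constraint in the random code, handled by using constant-composition rather than i.i.d.\ codes, and (ii) in the converse, invoking convexity of $\Q$ together with concavity of $P_X\mapsto I(X;Y)$ to single-letterize. Accordingly, I would expect the treatment in the paper to amount to a pointer to \cite{shannon1948mathematical} and a standard reference, with at most a remark on how the input constraint is accommodated.
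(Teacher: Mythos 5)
The paper does not prove this statement at all: it is quoted as Shannon's classical theorem with a pointer to \cite{shannon1948mathematical}, exactly as you anticipated in your closing remark, and it is only invoked in Sec.~\ref{sec:cap-stoch-z} to read off the capacity of the stochastic Z-channel under a weight constraint. Your proof is the standard textbook argument and is correct as far as the paper's use of the theorem is concerned. Two small points are worth recording. First, in the converse you correctly flag that passing from $\frac1n\sum_i I(X_i;Y_i)$ to $I(\overline{X};\overline{Y})$ with $\overline{P}_X\in\Q$ needs $\Q$ convex; for the paper's constraint ($\Q=\{P:P(1)\le w\}$) this holds, but for a general closed, non-convex $\Q$ (per-codeword type constraints, as the paper defines them) you would instead partition the codebook into its polynomially many constant-composition subcodes, keep the largest one (same asymptotic rate), and apply the bound $R\le I(P,W)+o(1)$ to that single type $P\in\Q$ --- this avoids convexity entirely. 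Second, in achievability, the bound $2^{-n(I(X^*;Y^*)-o(1))}$ on the probability that an independent constant-composition codeword is jointly typical with $\y$ is a method-of-types estimate rather than the i.i.d.\ one, but it holds with the same exponent, so the argument goes through. Neither point affects correctness for the setting in which the paper applies the theorem.
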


A discrete memoryless Z-channel is defined as follows. 
Both the input and output alphabets are binary: $ \X = \Y = \{0,1\} $. 
The channel transition law is parameterized by the zeroing-out probability $ \tau $, i.e., 
$ W^{\mathrm{Z}}_{Y|X}(0|0) = 1, W^{\mathrm{Z}}_{Y|X}(0|1) = \tau $. 
The input constraint is such that all input sequences should have Hamming weight at most $ nw $ for some $ w\in[0,1] $. 
It is easy to evaluate Eqn.~\eqref{eqn:cap-dmc} which yields
$C(W^{\mathrm{Z}}_{Y|X}) = C(w,\tau) = H(w(1-\tau)) - wH(\tau)$. 

We note that for any fixed $ w\in[0,1] $, $ C(w,\tau) $ is convex and decreasing in $\tau$. 
It attains its maximum value $ H(w) $ at $\tau = 0$ and attains its minimum value $ 0 $ at $ \tau = 1 $. 

We also note that $ C(w,\tau) $ is concave in $w$ for any fixed $\tau$. 
The maximizing $w$ is given by 
\begin{align}
w_{\max} &\coloneqq \left(1 - \tau + \tau^{-\frac{\tau}{1-\tau}}\right)^{-1}, \notag
\end{align}
{and the corresponding capacity is given by $ C(\tau) \coloneqq H(w_{\max}(1-\tau)) - w_{\max}H(\tau) $. 
Equivalent expressions have also been presented in \cite{tallini2008feedback}: }
\begin{align}
w_{\max} &= \left( \left(1 + 2^{\frac{H(\tau)}{1-\tau}}\right) (1-\tau) \right)^{-1} , \quad 
C(\tau) = \log\left(1 + \tau^{\frac{\tau}{1-\tau}} - \tau^{\frac{1}{1-\tau}}\right). \notag 
\end{align}
Note that $ C(\tau) $ is convex and decreasing with maximum value $1$ at $ \tau = 0 $ and minimum value $ 0 $ at $ \tau = 1 $.

For symmetric errors and erasures, it happens that the {capacities} of binary symmetric channels and binary erasure channels coincide with the respective list-decoding {capacities} (under adversarial errors). 
Comparing the expressions of $C(w,\tau)$ and $C_{\mathrm{LD}}(w,\tau)$, we see that this is no longer true for Z-channels. 

\section{Open problems}
\label{sec:open}
\begin{enumerate}
    \item For codes correcting symmetric errors, the Elias--Bassalygo bound can be improved using Delsarte's linear program \cite{mceliece1977new}. 
    We do not know how to derive a linear programming-type bound for Z-channels. 
    As far as we know, the linear programming framework, in its most general form, assumes that the ambient space that the code lives in (which is $ \{0,1\}^n $ for general codes and $ S_{nw}^{\mathrm{H}}(\0) $ for $w$-constant-weight codes) can be defined as an \emph{association scheme}. 
    We do not see how to do so under the Z-metric $ d_{\mathrm{Z}}(\cdot,\cdot) $ since the volume of the intersection of two Z-spheres is not invariant under shifts. 
    
    \item The largest code size is exponential in $n$ if the fraction of asymmetric errors the list-decodable code can correct is less than the Plotkin point $ \tau_L $ and we gave bounds on the exponent in Sec.~\ref{sec:ub-cap} and~\ref{sec:lb-cap}; whereas it is $ \Theta_L(\epsilon^{-3/2}) $ if the fraction of errors is $\epsilon$-above $\tau_L$. 
    There is one missing case which we did not solve, that is, what is the largest code size with error fraction being \emph{exactly} $ \tau_L $?
    We conjecture that in this case the answer is $ \Theta_L(n^{3/2}) $. 
    Note that the answer to the same question for symmetric errors is $ 2n $ proved by a geometric argument. 
\end{enumerate}

\bibliographystyle{alpha}
\bibliography{ref}
\end{document}